\def\E{\mathbb{E}}
\def\P{\mathbb{P}}
\numberwithin{equation}{section}
\newtheorem{theorem}{Theorem}[section]
\newtheorem{lemma}{Lemma}[section]
\newtheorem{corollary}{Corollary}[section]
\newtheorem{remark}{Remark}[section]
\newtheorem{proposition}{Proposition}[section]
\theoremstyle{definition}
\newtheorem{assumption}{Assumption}
\theoremstyle{definition}
\theoremstyle{definition}
\newcommand{\FORNN}[2][default]{\algorithmicfor\ #2\ \algorithmicdo%
  \ALC@com{#1}\begin{ALC@for}}
\title{
Semiparametric Bayesian Difference-in-Differences\thanks{We thank the associate editor and two anonymous reviewers, as well as
Joachim Freyberger, Björn Höppner, Toru Kitagawa, Soonwoo Kwon, Andriy Norets, Aureo de Paula, Jonathan Roth, Liangjun Su and numerous seminar and conference participants for helpful comments and illuminating discussions. Breunig gratefully acknowledges the support  of the Deutsche Forschungsgemeinschaft (DFG, German Research Foundation) under Germany’s Excellence Strategy – EXC-2047/1 – 390685813. Yu gratefully acknowledges the support of JSPS KAKENHI Grant Number 25K05032.
}
}
		\author{Christoph Breunig\thanks{Department of Economics, University of Bonn. Email: \url{cbreunig@uni-bonn.de}}\and Ruixuan Liu\thanks{CUHK Business School, Chinese University of Hong Kong. Email: \url{ruixuanliu@cuhk.edu.hk}}\and Zhengfei Yu\thanks{Graduate School of Economics, University of Osaka. Email: \url{yu.zhengfei@econ.osaka-u.ac.jp}}}
\begin{document}
  \maketitle

	\begin{abstract}
This paper studies semiparametric Bayesian inference for the average treatment effect on the treated (ATT) within the difference-in-differences (DiD) research design. We propose two new Bayesian methods with frequentist validity. The first one is the semiparametric Bayesian outcome regression, where we place a Gaussian process prior on the conditional mean function of the control group. The second method is a doubly robust Bayesian procedure that adjusts the prior distribution of the conditional mean function and subsequently corrects the posterior distribution of the resulting ATT. We prove new semiparametric Bernstein-von Mises (BvM) theorems for both proposals. Monte Carlo simulations and an empirical application demonstrate that the proposed Bayesian DiD methods exhibit strong finite-sample performance. We also present extensions of the canonical DiD approach, incorporating clustered data and staggered entry with multiple periods.
	\end{abstract}
	{\small \noindent \textsc{Keywords}: Difference-in-differences, conditional parallel trends, semiparametric Bayesian inference, Bernstein–von Mises theorem, double robustness. \bigskip\ }

\section{Introduction}
The Difference-in-Differences (DiD) method is widely used in causal inference for evaluating policy interventions while accounting for unobserved time-invariant heterogeneity. The primary parameter of interest in this context is the average treatment effect on the treated (ATT). One of its key identifying conditions is the (conditional) parallel trends assumption, i.e., treated and control groups would exhibit similar trends absent treatment after adjusting for covariates \citep{abadie2005,sant2020doubly}. While the related literature is largely frequentist, this paper introduces a Bayesian framework under conditional parallel trends, avoiding parametric assumptions about how covariates contribute to treatment status and potential outcomes. Our approach yields point estimators and credible sets in a unified manner.

We propose two novel Bayesian methods for inference on the ATT in the DiD framework. The first is a semiparametric Bayesian outcome regression approach. This method places a Gaussian process (GP) prior on the conditional mean function of the control group, while modeling the remaining components of the likelihood nonparametrically using a Dirichlet process prior. As a result, the approach avoids the need to specify Bayesian models for either the conditional mean of the treated group or the propensity score.
The proposed method can be viewed as a Bayesian counterpart to frequentist outcome-regression imputation. The Gaussian process prior allows the control-group conditional mean function to be modeled flexibly without imposing a finite-dimensional functional form \citep{williams2006gaussian}. The semiparametric Bayesian formulation then propagates uncertainty about this function into the posterior distribution of the ATT, yielding inference directly from the posterior.
We prove that the resulting posterior satisfies a Bernstein–von Mises (BvM) theorem under classical semiparametric regularity conditions. Hence, this Bayesian outcome regression procedure is asymptotically equivalent to the semiparametrically efficient frequentist estimator, although its validity relies on stronger smoothness requirements than the doubly robust Bayesian procedure introduced below.

We also provide a doubly robust version of our Bayesian outcome regression procedure. This is well suited for more complex models, either due to a larger number of continuous covariates or when the underlying conditional mean functions are not smooth. Our \textit{doubly robust Bayesian procedure} adjusts the prior and posterior distributions closely associated with the efficient influence function of estimating the ATT, in the spirit of \cite{BLY2022}. Based on prior adjustment alone, the Bayesian procedure satisfies the  semiparametric BvM theorem, albeit with a ``bias term" in the posterior,  under double-robust smoothness conditions. The resulting posterior distribution depends on the unknown true conditional mean and propensity score functions. Our double-robust Bayesian approach addresses this ``bias term" by incorporating an explicit posterior correction. The general purpose of introducing such robustification shares a similar motivation with the frequentist counterpart of \cite{sant2020doubly}. We stress that the construction of both the prior adjustment and the posterior correction is unique to the semiparametric Bayesian structure. In addition, the new BvM theorem allows one to consider the non-Donsker class.

Our Monte Carlo results demonstrate competitive finite-sample performance for both proposed Bayesian methods. The nonparametric flexibility of the Gaussian process prior over the conditional mean function offers greater robustness than many alternatives relying on parametric nuisance estimates, particularly when potential outcomes and treatment assignment are complex functions of the covariates.  
 Meanwhile, Bayesian approaches also produce shorter credible intervals relative to the frequentist double machine learning (DML) methods that model the nuisance functions nonparametrically. 
 We conjecture that this efficiency gain stems from uncertainty quantification through averaging over the Gaussian process posterior of the conditional mean function, thereby avoiding the variance inflation that sometimes occurs in DML estimators in finite-samples. Between the two Bayesian methods, the doubly robust version performs more stably as the model complexity increases. 
  Our results further suggest that the strong performance of Bayesian procedures is driven primarily by uncertainty quantification through the posterior distribution, rather than the choice of pilot estimators in our correction steps. 

Our Bayesian methodology requires specifying a likelihood function for the control arm, which we model using an exponential family structure. For example, in the Gaussian case, this leads to computational efficiency since the posterior remains multivariate Gaussian, allowing us to avoid computationally intensive methods like MCMC. If the likelihood function is misspecified, our approach naturally transitions to a quasi-Bayesian procedure.
We provide a new posterior contraction result for the conditional mean function with respect to the quasi-posterior. This is in line with our Monte Carlo results, which show that our Bayesian procedures are resilient to likelihood misspecification caused by non-normal errors. 
We also adapt the Bayesian proposals to clustered data settings, where inference is conducted at a more aggregate level than the cross-sectional units.
Finally, we extend our methods to study staggered treatment entry with multiple time periods.

Nonparametric Bayesian causal inference has recently received considerable interest. We refer readers to \cite{daniels2024BNP} for numerous applications. \cite{ray2020causal} develop the comprehensive theory for establishing the BvM theorem in the missing data framework, employing Gaussian process priors for the conditional mean function. Extending their methodology to the ATT would require nonparametric Bayesian modeling of both the propensity score and the conditional mean function for the treated group, as discussed in Remark \ref{rem:dist:ate}. A key innovation of our proposal is to circumvent this route by building our Bayesian procedure on a reparametrization that is particularly convenient for the analysis of the ATT.

 Building on the prior adjustment of \cite{ray2020causal}, \cite{BLY2022} introduce a debiasing step to further correct the posterior and establish the BvM theorem for the average treatment effect (ATE) under double robustness. Although they outline an extension to general semiparametric models where the parameter of interest can be written as the linear functional of conditional means, this approach does not cover the case of the ATT, because of its ratio form. The correction steps in our Bayesian method share the same motivation as in \cite{BLY2022}, but differ in their functional form. Additionally, we find that the bias term in the BvM for ATT is substantially simpler than that for ATE, which also explains the favorable finite-sample behavior of Bayesian ATT estimators even without posterior corrections. To the best of our knowledge, our proposed doubly robust BvM theorem for the ATT is the first to relax the Donsker property assumption for the conditional mean. See also Remark \ref{remarkYiu}. The robust performance of various Bayesian procedures, judged by frequentist criteria, has been well-documented across a range of settings, from Bayesian Additive Regression Trees (BART) to high-dimensional linear regression models. We refer readers to \cite{ray2019debiased}, \cite{hahn2020bart}, \cite{BLY2025RoBART}, \cite{yiu2023corrected}, as well as \cite{DiTraglia2025Bayesian}. 

 There are also scattered results exploring Bayesian methodology for the study of ATT or DiD. In an earlier paper, \cite{chib2002treat} developed a semiparametric Bayesian model for the ATT in both cross-sectional and panel data settings. Their semiparametric model differs from our setup in that covariates enter the outcome equation linearly, while the error term is modeled using flexible Dirichlet process mixtures. Recently, in the context of assessing sensitivity to the parallel trends assumption, \cite{kwon2024empirical} proposed a Bayesian approach. Our adoption of the exponential family on the conditional distribution of the control group shares some similarity with \cite{wooldridge2023simple}. Considering the DiD setup with multiple periods, \cite{wooldridge2023simple} has proposed a generalized linear model with a parametric link function to handle limited dependent outcomes $Y_{it}$ in \textit{levels}. In these cases, the parallel trends assumption is also falsifiable \citep{roth2023trends}.

The remainder of this paper is organized as follows. Section \ref{sec:model} presents the setup and introduces the Bayesian framework in the DiD setup.
Section \ref{sec:method_outline} outlines our Bayesian methods.
In Section \ref{sec:inference}, we establish inference via semiparametric  BvM theorems for our first method.
In Section \ref{sec:asympt:prior}, we derive a doubly robust, semiparametric BvM theorem for our second method.
Section \ref{sec:numerical} presents finite-sample results via simulations and an empirical illustration. Towards the end, we outline two extensions of our methodology in
Section \ref{sec:extend}.
Proofs of main theoretical results are collected in Appendix \ref{appendix:main:proofs}.  Appendix \ref{sec:sq:exp} provides BvM results under primitive conditions when using squared exponential process priors. 
Supplementary Appendices \ref{appendix:lfd}--\ref{appendix:simu} provide additional technical results and further simulation evidence.

	\section{Setup and Implementation}\label{sec:model}
This section provides the main setup of the average treatment effect on the treated (ATT) in the difference-in-differences (DiD) design. We first list standard conditions for the identification of the ATT and introduce the Bayesian formulation of the problem.

\subsection{Setup}\label{sec:setup}
We focus on the canonical DiD design case, where there are two treatment periods and two treatment groups. Let $Y_{i t}$ be the observable outcome of interest for unit $i$ at time $t$. We assume that researchers have access to outcome data in a pre-treatment period $t=1$ and in a post-treatment period $t=2$. Let $D_{i t}=1$ if unit $i$ is treated before time $t$ and $D_{i t}=0$ otherwise. Note that $D_{i 1}=0$ for every $i$ and thus we may write $D_i=D_{i 2}$. Using the potential outcome notation, $Y_{it}(0)$ or $Y_{it}(1)$ denotes the outcome of unit $i$ at time $t$ if it does not receive or receives treatment by time $t$, respectively. Thus, the realized outcome for unit $i$ at time $t=1$ is $Y_{i 1}= Y_{i 1}(0)$, and at time $t=2$ it is $Y_{i 2}=D_i Y_{i 2}(1)+\left(1-D_i\right) Y_{i 2}(0)$. 
Below, $P_0$ denotes the frequentist distribution generating the observed data.

A vector of $p$-dimensional pre-treatment covariates $X_i$ is also available, with its true cumulative distribution function denoted by $F_X$.\footnote{If $X_i$ does not have a density we can simply consider the conditional density of $(\Delta Y_i, D_i)$ given $X_i=x$ instead of the joint density of $(\Delta Y_i, D_i, X_i)$.}
Let 
$\pi_0(x)=P_0(D_i=1\mid X_i=x)$
denote the propensity score, $\bar\pi_0=P_0(D_i=1)$ the proportion of being treated, and $m_0(x)= \mathbb E_0[\Delta Y_i\mid D_i=0, X_i=x]$ the conditional mean of the differenced outcome across two periods, where $\Delta Y_{i}:=Y_{i2}-Y_{i1}$ and where $\mathbb{E}_0[\cdot]$ denotes the expectation under $P_0$.
The researcher observes an independent and identically distributed (\textit{i.i.d.}) sample of 
$\left\{(Y_{i1},Y_{i2},D_i,X_i^\top)\right\}_{i=1}^n$. In the discussion of identification, we will henceforth suppress the unit index $i$ for notational simplicity.

Regarding the causal effect in the canonical DiD setup, the related literature primarily focuses on the average treatment effect on the treated (ATT) given by
\begin{align*}
\tau_0=\mathbb E_0[Y_{2}(1)-Y_{2}(0)|D=1].
\end{align*}
 For its identification, we follow the literature \citep{abadie2005,sant2020doubly} and assume no anticipation,  conditional parallel trends (PTA) given covariates $X$, and overlap.
 		\begin{assumption}\label{Ass:unconfounded}
 		 For all $x$ in the support of $F_X$ we have:\\
 		 (i) $\E_0\left[Y_1(0) \mid D=1, X=x\right]=\E_0\left[Y_1(1) \mid D=1, X=x\right]$ \hfill (No Anticipation), \\
		(ii) $\E_0\left[Y_2(0)-Y_1(0) \mid D=1, X=x\right]=\E_0\left[Y_2(0)-Y_1(0) \mid D=0, X=x\right]$ \hfill(PTA),\\
 (iii) $P_0(D=1) > \epsilon$ and $P_0(D=1 \mid X=x) \leq 1-\epsilon$ for some $\epsilon>0$ \hfill (Overlap).
		\end{assumption}
		
Under Assumption \ref{Ass:unconfounded} the ATT is identified by
\begin{equation}\label{att_0}
	\tau_0
	=\E_0\left[\Delta Y - m_0(X)\mid D=1\right]=\frac{\mathbb E_0\left[D\left(\Delta Y-m_0(X)\right)\right]}{\mathbb  E_0[D]}.
\end{equation}
One can construct an estimator that replaces the conditional mean function $m_0$ with an estimator, known as the outcome regression approach, as described in \cite{heckman1997matching}. Alternatively, the inverse propensity score weighted estimator is proposed by \cite{abadie2005} and the doubly robust version is developed by \cite{sant2020doubly}. It has also been noted in the recent literature that, in the presence of heterogeneous treatment effects in $X$, i.e., when $\mathbb E_0[Y_{2}(1)-Y_{2}(0)\mid X=x, D=1]$ varies with $x$, the two-way fixed effects estimator (TWFE) is generally not consistent for the ATT. See Remark 1 of \cite{sant2020doubly} for an explicit discussion.

\subsection{A Bayesian Framework}\label{sec:bayes_frame}
We now provide the formal Bayesian setup to the ATT in the DiD context. We consider a family of probability distributions $\{P_\eta:\eta\in\mathcal H\}$ for some parameter space $\mathcal H$. The (possibly infinite dimensional) parameter $\eta$ characterizes the probability model. Let $\eta_0$ be the true value of the parameter and denote $P_0=P_{\eta_0}$, which corresponds to the frequentist distribution generating the observed data. 
Under $P_\eta$ where $\eta=(\pi,f_{X}, f_{\Delta Y|D,X})$, the joint density function of $Z=(\Delta Y,D, X^\top)^\top$ can thus be written as
\begin{equation}\label{condpdf:eta}
	p_{\eta}(y,d,x)=	\underbrace{f_X(x)\pi^d(x)(1-\pi(x))^{1-d}f^d_{\Delta Y|D,X}(y\mid 1,x)}_{=:f(dy,d,x)}f^{1-d}_{\Delta Y|D,X}(y\mid 0,x),
\end{equation}
where $f_{\Delta Y \mid D,X}(y \mid d,x)$ is the conditional density of $\Delta Y$ given $(D, X)$. Here $f$ denotes the joint density of $(D\Delta Y, D, X^\top)^\top$ under $P_\eta$, and the corresponding cumulative distribution function is denoted by $F$. For the control group, we assume that $f_{\Delta Y \mid D,X}(y \mid d=0,x)$ follows the exponential family condition specified by \eqref{condpdf} in the sequel. A central insight of this paper is to show that a nonparametric process prior specification on the conditional mean function $m$ and the distribution $F$ is sufficient for the Bayesian inference on the ATT parameter. Specifically, there is no need to additionally parameterize the propensity score $\pi$, the marginal density of $X$, or the conditional density for the treated group,  $f_{\Delta Y| D, X}(y\mid 1,x)$.

We consider the following reparametrization of $(m, f)$ given by $\eta=(\eta^m,\eta^{f})$. We index the probability model by $P_{\eta}$, where
\begin{eqnarray*}
m_\eta= q^{-1}(\eta^m) \,\,\text{ and }\,\, f_\eta=\exp(\eta^{f})
\end{eqnarray*}
for some known, invertible function $q(\cdot)$, which we specify below. We can write the ATT depending on $\eta$ as
\begin{equation}\label{att}
		\tau_\eta:=
\frac{\mathbb{E}_\eta[D\Delta Y-Dm_{\eta}(X)]}{\mathbb{E}_\eta[D]},
	\end{equation}
	where $\mathbb{E}_\eta$ denotes the expectation under $P_\eta$ and in this case, is the integral with respect to the density $f_\eta$.

As we saw above, we only need to impute the conditional mean of the outcome in the control group. We assume that the distribution of $\Delta Y$, conditional on $D=0$ and $X$, belongs to the ``single-parameter" exponential family, where the only unknown component is the nonparametric conditional mean function $m(x)=\mathbb{E}[\Delta Y\mid D=0,X=x]$. The conditional density function is given by
\begin{equation}\label{condpdf}
	f_{\Delta Y|D,X}(y\mid 0,x) = c(y)\exp\left[q(m(x))ay-A(m(x))\right],
\end{equation}
where $A(m)= \log\int c(y)\exp\left[q(m) ay\right]\mathrm{d}y$, some constant $a>0$, and the function $q(\cdot)$ links the conditional mean to the ``natural parameter'' of the exponential family. We also restrict the sufficient statistic to be linear in $y$. The exponential family assumption implies the conditional mean equation $\E[ \Delta Y\mid D=0, X=x]=A'(m(x))/(a\, q'(m(x)))=m(x)$. We emphasize that \eqref{condpdf} does not impose functional form assumptions on the conditional mean function $m$ and, in particular, does not restrict the ATT parameter. 

The family (\ref{condpdf}) allows for count and continuous outcomes, which can be convenient for modeling various types of outcomes. It constitutes a comprehensive class in nonparametric Bayesian analysis; see Chapter 2 of \cite{ghosal2017fundamentals}. For instance, when $a=1$,  the Poisson distribution  corresponds to the choices $c(y)= 1/(y!)$, $q(m)=\log m$, and $A(m)=m$, while the  exponential distribution is represented by $c(y)=1$, $q(m)=-1/m$, and $A(m)=\log m$. Furthermore, the normal distribution\footnote{For the normal case, we treat $\sigma$ as a hyperparameter and estimate it together with hyperparameters in the Gaussian process prior by maximizing marginal likelihood. We note that while a generalization to multinomial outcomes, as in \cite{BLY2022}, is possible, we do not consider this case explicitly in this paper.} with  $\text{Var}(\Delta Y|D=0,X)=\sigma^2$ for some $\sigma>0$, is captured by $c(y)=\exp(-y^2/(2\sigma^2))/\sqrt{2\pi\sigma^2}$, $q(m)=m/\sigma$, $A(m)=m^2 / (2\sigma^2)$, and $a=1/\sigma$.

Our BvM results require posterior contraction of $m_{\eta}$ toward the true conditional mean $m_0$. Under mild conditions, this can hold even if the likelihood is misspecified.
Generally, the posterior of $m_{\eta}$ will contract near the point (pseudo-true value) in the support of the prior that minimizes the Kullback–Leibler (KL) divergence with respect to the true data generating probability. We formalize this generalization in Section \ref{sec:misspecify}. This aligns with our finite-sample results, which are not sensitive to deviations from exponential family distributions, as shown in Appendix \ref{appendix:simu}. Beyond the exponential family, one can also consider the quasi-likelihood in \cite{linero2025Quasi} or the flexible nonparametric Bayesian approach in \cite{norets2022} to model the conditional distribution of the outcome for the control group.
 
\begin{remark}[ATT in Cross-Sectional Setting]
The results of our paper contribute to the literature of ATT using cross-sectional data, i.e.,  where an i.i.d. sample of $(Y_i, D_i, X_i^\top)^\top$ for $i=1,\ldots, n$ is available.
In this case, a	 specific example captured by the single-parameter exponential family is when the outcome variable is binary, where  $q(m)=\log(m/(1-m))$, $A(m)=-\log(1-m)$, and $c(y)=a=1$. This binary outcome case does not require any distributional assumptions. 
Interestingly, the sample ATT, given by $(\sum_{i=1}^nD_i)^{-1}\sum_{i=1}^nD_i(m_0(1,X_i)-m_0(0,X_i))$, requires only a prior on the conditional mean functions, without the need to specify a Dirichlet process prior.  On the other hand,  a prior for the conditional mean function of the treatment group is also necessary in this case. We do not address a Bayesian approach for the sample ATT in this paper.
\end{remark}

				\section{Proposed Bayesian Methodology}\label{sec:method_outline}
			We now present two Bayesian procedures that build on flexible prior processes, enabling semiparametric inference on the ATT. The first corresponds to a semiparametric Bayesian outcome regression based on Gaussian process priors. The second involves Bayesian methods with frequentist modifications, incorporating both an adjustment to the prior and a posterior correction.
		
				\subsection{Semiparametric Bayesian Outcome Regression}\label{sec:method_outline:GP}
			We first propose the semiparametric Bayesian analog to \cite{heckman1997matching}, which builds on a standard Gaussian process prior for the conditional mean function combined with an independent Dirichlet process prior for the conditional expectation in \eqref{att}. The Bayesian outcome regression may not achieve double robustness, as we see in the next section. In our simulation results, however, we find that the proposed method is robust even in cases of near overlap failure.

The use of Gaussian process priors for the conditional mean has the following motivation. The mode of a posterior stemming from Gaussian process priors can be derived by a minimization problem involving the corresponding norm of a so-called reproducing kernel Hilbert space (RKHS). Gaussian process (GP) priors share close ties with spline regression \citep{wahba1990spline}. Their strong finite-sample performance has fueled their popularity in machine learning \citep{williams2006gaussian,murphy2023pml}. The Dirichlet process is a default prior on spaces of probability measures. By definition of the ATT $\tau_{\eta}$, we assign a Dirichlet process prior to model the distribution $F_{\eta}$, which induces the so-called Bayesian bootstrap when the base measure of the Dirichlet process is taken to be zero; see \cite{rubin1981bayesian} and \cite{chamberlain2003bayesian}.
\begin{algorithm}[H]
\caption{Bayesian Procedure Using Standard Gaussian Process Priors}
\label{algorithm_1}
\begin{algorithmic}
    \STATE \textbf{Input:} Data $Z_i=(\Delta Y_i,D_i,X_i^\top)^\top$ for $i=1,\dots,n$,  and number of posterior draws $S$.
    \STATE \textbf{Prior Specification:}  Select a Gaussian process prior $W^m$ and  set the prior for
    \begin{align}\label{prior:uc}
m_\eta(x) = q^{-1}\left(\eta^m(x)\right)\qquad \text{and}\qquad \eta^m(x)=W^m(x).
\end{align}	
\textbf{Posterior Computation:} \FORNN{$s=1,\ldots, S$}
        \STATE 
  (a) Generate the $s$-th draw of the posterior of $(m_\eta(X_i))_{i=1}^n$ using the Gaussian process prior and the data from the control arm; denote it as $(m^s_\eta(X_i))_{i=1}^n$.
            \STATE (b) Draw Bayesian bootstrap weights $M^s_{i}=e^s_i/\sum_{j=1}^n e^s_j$ where $e_i^s \stackrel{iid}{\sim} \textup{Exp}(1)$, $1\leq i\leq n$.\hskip-3cm
        \STATE (c) Calculate a posterior draw for the ATT: 
         \begin{equation}\label{NpBayes}
\tau_\eta^s=\frac{\sum_{i=1}^n M_{i}^sD_i \big(\Delta Y_i-m_\eta^s(X_i)\big)}{\sum_{i=1}^n M_{i}^s D_i}.
\end{equation}
    \ENDFOR
    
 \STATE \textbf{Output: $\{\tau_\eta^{s}:s=1,\ldots,S\}$} 
\end{algorithmic}
\end{algorithm}
Algorithm \ref{algorithm_1} allows for simultaneous point estimation and uncertainty quantification. The $100\cdot(1-\alpha)\%$ credible set $\mathcal{C}_n(\alpha)$ is computed by
\begin{equation}\label{CS:def}
\mathcal{C}_n(\alpha)=\big\{\tau: q(\alpha/2)\leq \tau \leq q(1-\alpha/2)\big\},
\end{equation}
where $q(a)$ denotes the $a$ quantile of $\{\tau_\eta^s:s=1,\ldots,S\}$. We also obtain the Bayesian point estimator (the posterior mean) by averaging the simulation draws: $\overline{\tau}_{\eta}=S^{-1}\sum_{s=1}^S \tau_\eta^s$.

For the choice of the prior process $W^m$, we use a Gaussian process with mean $\mu$ and the squared exponential (SE) covariance function $K\left(\cdot,\cdot\right)$ \citep[p.83]{williams2006gaussian} given by
\begin{equation}\label{SE_cov}
K\left(x,x^\prime\right):= \nu^2 \exp\left(-\sum_{l=1}^{p}a_{ln}^{2}(x_{l}-x^\prime_{l})^2/2\right),
\end{equation}
where the hyperparameter $\nu^2$ is the kernel variance and $a_{1n},\ldots,a_{pn}$ are rescaling parameters that reflect the relevance of  each covariate in predicting $\eta^m$. In practice, the hyperparameters $\mu$, $\nu$, and $a_{1n},\ldots,a_{pn}$ can be chosen by maximizing the marginal likelihood. When the exponential family specification in (\ref{condpdf}) takes the Gaussian form, Step (a) of posterior computation in Algorithm \ref{algorithm_1} is analytically tractable and computationally very efficient. See Supplementary Appendix \ref{details:implementation} for details. For non-Gaussian cases, one can use Laplace approximation or MCMC for Step (a) as suggested by \cite{williams2006gaussian}. The construction of the Gaussian process prior naturally accommodates discrete covariates. It would be interesting to establish the posterior contraction rate under the framework of \cite{norets2022}. However, the asymptotic results for our Bayesian approaches under Gaussian process priors (Appendix \ref{sec:sq:exp}) focus on continuous covariates.

				\subsection{Doubly robust Bayesian DiD }\label{sec:method_outline:DR}
Our doubly robust approach begins with prior adjustment via inverse propensity score weighting (IPW) in the least favorable direction, which is also related to the efficient influence function. This can be compared with \cite{sant2020doubly} who combine outcome regression (OR) and IPW to achieve  doubly robust estimation in the frequentist setting. Following \cite{hahn1998role,hirano2003efficient} or, in the DiD setup,  \cite{sant2020doubly}, the efficient influence function for the ATT is given by 
\begin{align}\label{eif_att}
	\widetilde \tau_{\eta}(\Delta Y,D,X) 
	&= \gamma_\eta(D,X)(\Delta Y-m_\eta(X))-\frac{D}{\bar\pi_\eta}\tau_\eta,
\end{align}
with its Riesz representer $\gamma_\eta$ given by
\begin{align}\label{riesz:def}
	\gamma_\eta(d,x)= \frac{d}{\bar\pi_\eta} -\frac{1-d}{\bar\pi_\eta} \frac{\pi_\eta(x)}{1-\pi_\eta(x)}.
\end{align}
We show in the Supplemental Appendix \ref{appendix:lfd} that the Riesz representer $\gamma_\eta$ determines the \emph{least favorable direction} (LFD) associated with the Bayesian submodel with the largest variance. Our prior adjustment using this Riesz representer provides exact invariance under shifts in nonparametric components along this direction. This extends the work of \cite{ray2020causal} on unconditional average treatment effects to the ATT case, where the LFD takes a different functional form. In a similar vein to \cite{BLY2022}, we use the Riesz representer to correct for posterior bias under double-robust smoothness conditions.

Our prior and posterior adjustments depend on a preliminary estimator of $\gamma_0$. A pilot estimator for the propensity score $\pi_0(\cdot)$,  denoted by $\widehat{\pi}(\cdot)$,  is based on an auxiliary sample. So is the estimator of the treated proportion $\bar\pi_0$, which is taken to be the sample mean of the treatment indicators, denoted by $\widehat{\bar{\pi}}$.  In practice,  $\widehat{\pi}(\cdot)$ can be obtained by a parametric,  such as a logistic regression,  or by a nonparametric approach such as random forest.  As our simulation shows,  the latter is more robust to possibly nonlinear data-generating processes.  The Riesz representer $\gamma_0$ is estimated by plugging in $\widehat{\pi}(x)$ and  $\widehat {\bar \pi}$:
\begin{align}\label{riesz:est}
	\widehat{\gamma}(d,x)=\frac{d}{\widehat{\bar{ \pi}}} - \frac{1-d}{\widehat{\bar{\pi}}}\frac{\widehat\pi(x)}{1-\widehat\pi(x)}=\frac{d-\widehat {\pi}(x)}{(1-\widehat\pi(x))\widehat{\bar{ \pi}}}.
\end{align}
Our posterior adjustment also involves a pilot estimator $\widehat{m}$ for the conditional mean function $m_0$.  We recommend simply using the posterior mean of conditional mean function:
\begin{equation}\label{m:est}
\widehat m(x)=\frac{1}{S}\sum_{s=1}^S m_\eta^{s}(x),
\end{equation} 
where the posterior draws $\{m_\eta^{s}: s=1,\dots,S\}$ are obtained following Step (a) of the posterior computation in Algorithm \ref{algorithm_1}.

Algorithm \ref{algorithm_2} describes our doubly robust Bayesian procedure that approximates the posterior distribution of $\tau_{\eta}$ given in equation \eqref{att}. Let $n_c$ denote the number of observations in the control arm.  Algorithm \ref{algorithm_2} also leads to simultaneous point estimation and uncertainty quantification. The $100\cdot(1-\alpha)\%$ credible set $\mathcal{C}_n(\alpha)^{DR}$ for the ATT parameter $\tau_0$ is as in \eqref{CS:def}, but here $q(a)$ denotes the $a$ quantile of $\{ \check{\tau}_\eta^s:s=1,\ldots,S\}$. The Bayesian point estimator is given by $\overline{\tau}_{\eta}^{DR}=S^{-1}\sum_{s=1}^S \check{\tau}_\eta^{s}$.

The use of auxiliary data for the estimation of unknown functional parameters simplifies the technical analysis; see \cite{ray2020causal} for propensity score adjusted priors in the case of missing data. To make efficient use of the data, we propose a $K$-fold cross-fitting that is inspired by the DML literature.  The next remark outlines the key steps.  A complete description of the procedure can be found in Algorithm \ref{algorithm_2_cf} of Appendix \ref{appendix:crossfit}.
In essence,  posterior draws of the conditional mean for the control group are obtained for each chosen fold, while the remaining folds are used to obtain the pilot estimators such as $\widehat{\pi}(\cdot)$ and $\widehat{m}(\cdot)$. Then, we rotate across folds and average all posterior draws. 
\begin{algorithm}[H]
\caption{Doubly Robust Bayesian Procedure}
\label{algorithm_2}
\begin{algorithmic}
    \STATE \textbf{Input:} Data $Z_i=(\Delta Y_i,D_i,X_i^\top)^\top$ for $i=1,\dots,n$, number of posterior draws $S$,  pilot estimators $\widehat \gamma$ and $\widehat m$. Let $n_c$ be the number of control units. 
   
    \STATE \textbf{Prior Specification:}  Set the adjusted prior:
\begin{align}\label{prior:ps}
m_{\eta}(x) = q^{-1}\left(\eta^m(x)\right),  \quad \eta^m(x)=W^m(x) + \lambda\,\widehat \gamma(0,x), 
\end{align}    
where $W^m$ is a Gaussian process independent of $\lambda \sim N(0,\varsigma_n^2)$,  $\varsigma_n=\nu\log n_c/(\sqrt{n_c}\, \Gamma_n)$, $\nu$ is the hyperparameter in (\ref{SE_cov}), and $\Gamma_n=\sum_{i=1}^n\vert\widehat\gamma(0,X_i)(1-D_i)\vert/n_c$.

\textbf{Posterior Computation:} \FORNN{$s=1,\ldots, S$}
        \STATE  (a)
        Generate the $s$-th draw of the posterior of $(m_\eta(X_i))_{i=1}^n$ using the adjusted prior in (\ref{prior:ps}) and the data from the control arm; denote it as $(m^s_\eta(X_i))_{i=1}^n$.
     \STATE (b) Draw Bayesian bootstrap weights $M^s_{i}=e^s_i/\sum_{j=1}^n e^s_j$ where $e_i^s \stackrel{iid}{\sim} \textup{Exp}(1)$, $1\leq i\leq n$.\hskip-3cm
      
        \STATE (c) Calculate the ATT draws by 
       $\check{\tau}_\eta^s=\tau_\eta^s-\widehat{b}^s_{\eta}$, where $\tau_{\eta}^s$ is given in \eqref{NpBayes} but using the adjusted prior. The posterior correction $\widehat{b}^s_{\eta}$ is given by
\begin{align}\label{recentering_term}
\widehat{b}^s_{\eta}=\frac{1}{n}\sum_{i=1}^n \widehat{\gamma}(D_i,X_i) (\widehat m-  m_{\eta}^s)(X_i).
\end{align}
    \ENDFOR
    \STATE \textbf{Output: $\{\check{\tau}_\eta^s:s=1,\ldots,S\}$} 
\end{algorithmic}
\end{algorithm}

\begin{remark}[Extension to Cross-fitting]\label{rmk:cross_fit}
We also consider an extension of Algorithm~\ref{algorithm_2} to cross-fitting based on $K$-fold sample splitting. For the exact implementation, we refer to Algorithm \ref{algorithm_2_cf} in Appendix~\ref{appendix:crossfit}; here, we present the key steps. Let $(I_k)_{k=1}^K$ be a partition of $\{1,\dots,n\}$, each with size $n_k$.
    For each fold $k$, we obtain the off-fold pilot estimators
$(\widehat \gamma^{(-k)},\widehat m^{(-k)})$ using observations in
$I_k^c$. We then apply Algorithm~\ref{algorithm_2} to observations in $I_k$, replacing
$(\widehat \gamma,\widehat m)$ by the off-fold pilot estimators and drawing
$m_\eta^{k,s}$ based on the control units in $I_k$. The fold-specific
bias-corrected draws are
    \begin{equation}\label{post_adj_cf}
\check{\tau}_\eta^{k,s}
=
\tau_\eta^{k,s}
-
\frac{1}{n_k}
\sum_{i \in I_k}
\widehat{\gamma}^{(-k)}(D_i,X_i)
\big(\widehat m^{(-k)} - m_\eta^{k,s}\big)(X_i),
\qquad s=1,\dots,S,
    \end{equation}
where $\tau_\eta^{k,s}$ denotes the $k$-th fold version of $\tau_{\eta}^s$ in (\ref{NpBayes}) but using the adjusted prior.
    We then average across folds to obtain the output $\check{\tau}_\eta^s
    =
    \frac{1}{K}
    \sum_{k=1}^K
    \check{\tau}_\eta^{k,s}$ for $s=1,\ldots, S$. 
The prior specification is modified analogously by replacing $\widehat{\gamma}$ with $\widehat{\gamma}^{(-k)}$ and using fold-specific quantities (e.g., $n_{c,k}$= the number of control units in the $k$th fold). 
\end{remark}

\begin{remark}[Distinction with ATE]\label{rem:dist:ate}
With cross-sectional \textit{i.i.d.} data on $(Y_i,D_i,X_i)$, \cite{BLY2022} study the Bayesian inference for the ATE. The posterior of the ATE builds on
$	\int[m_{\eta}(1,x)-m_{\eta}(0,x)]\,\mathrm{d}F_{X,\eta}(x)$,
where one assigns Gaussian process priors on the conditional means $(m_{\eta}(1,\cdot),m_{\eta}(0,\cdot))$ and places a Dirichlet process prior on $F_{X, \eta}(\cdot)$. An adoption of their framework for our analysis of the ATT would lead to an alternative Bayesian method based on
\begin{equation*}
	\tau_\eta=\frac{\int \pi_{\eta}(x)[m_{\eta}(1,x)-m_{\eta}(0,x)]\,\mathrm{d}F_{X,\eta}(x)}{\int \pi_{\eta}(x)\,\mathrm{d}F_{X,\eta}(x)},
\end{equation*}
which requires prior specification for each component of $(m_{\eta}(0,\cdot),m_{\eta}(1,\cdot),\pi_{\eta}(\cdot),F_{X, \eta}(\cdot))$. Fortunately, this is not necessary. The key observation of our current approach is that the last three components are all contained in $F_{\eta}(\cdot)$.  As a result, we do not need to specify them separately when analyzing the ATT.
\end{remark}
\begin{remark}[Posterior Recentering]\label{rem:debias:ate}
 A posterior debiasing step for the posterior is required by Theorem \ref{thm:BvM} and, as shown in Theorem \ref{thm:Debias}, our posterior correction  term in \eqref{recentering_term} indeed allows for a derivation of the BvM result under double-robust smoothness assumptions.
On the other hand, the posterior correction is not  required if one is willing to impose Donsker type smoothness conditions on the conditional mean function $m_\eta$, i.e., if the smoothness of $m_\eta$ exceeds $\dim(X_i)/2$, which is an implication of Corollary \ref{cor:BvM:single}. Posterior corrections were also proposed by \cite{BLY2022} in the context of average treatment effects (ATEs) using cross-sectional data. 
In their case, the bias correction term is given by $\widehat{b}^{ATE, s}_{\eta} = n^{-1} \sum_{i=1}^n \boldsymbol\tau\left[m_\eta^s-\widehat{m}\right]\left(Z_i\right)$, 
where $\boldsymbol{\tau}[m](z):=m(1, x)-m(0, x)+\widehat{\gamma}^{ATE}(d, x)(y-m(d, x))$. See also Remark \ref{rem:bias:ate} for a more explicit comparison of the biases in both cases. We observe that double-robust Bayesian inference for the ATT, as given in \eqref{recentering_term}, involves a simpler form of posterior correction compared to that for the ATE.
\end{remark}

\begin{remark}[Comparison with Frequentist Estimators]
Our approach is inspired by existing frequentist methods to conduct inference on the ATT. \cite{heckman1997matching} propose the following outcome imputed estimator for the ATT:
\begin{equation*}
	\widehat{\tau}_n=\frac{\sum_{i=1}^n D_i \big(\Delta Y_i-\widehat{m}(X_i)\big)}{\sum_{i=1}^n D_i},
\end{equation*}
where $\widehat{m}(\cdot)$ stands for the kernel smoothing estimator of the conditional mean in the control group. The doubly robust version from \cite{sant2020doubly} is
	\begin{equation*}
		\widehat{\tau}^{\text{DR}}_n=\frac{1}{n}\sum_{i=1}^n \widehat{\gamma}_n(D_i,X_i) \big(\Delta Y_i-\widehat{m}(X_i)\big),
	\end{equation*}
for some pilot estimators of the propensity score and the conditional mean of the control group. In contrast, our Bayesian estimator does not merely recenter the point estimator via the estimated Riesz representer $\widehat\gamma$; rather, it enters indirectly via prior and posterior adjustments.
\end{remark}

\section{Theory for Semiparametric Bayesian Outcome Regression} \label{sec:inference}
In this section, we establish a Bernstein-von Mises (BvM) Theorem using standard Gaussian process priors as considered in our Bayesian procedure in Algorithm \ref{algorithm_1}.

		\subsection{High-level Assumptions}\label{sec:high}
We now provide additional notations used for the derivation of our semiparametric Bernstein-von Mises Theorem. Recall that we restrict the joint density for the control arm only, imposing the exponential family restriction as in \eqref{condpdf}. We denote the observed data corresponding to the treated part as $Z_{\text{Treated}}^{(n)}:=(D_i\Delta Y_i, D_i, X_i^{\top})_{i=1}^n$. We express the posterior as follows:
\begin{align}\label{PosteriorFormula}
\Pi\left(m \in A, F \in B \mid Z^{(n)}\right)=\int_B \frac{\int_A \prod_{i=1}^n f_{\Delta Y \mid D, X}^{1-D_i}\left( \Delta Y_i \mid 0, X_i\right) \mathrm{d} \Pi(m)}{\int \prod_{i=1}^n f_{\Delta Y \mid D, X}^{1-D_i}\left( \Delta Y_i \mid 0, X_i\right) \mathrm{d} \Pi(m)} \,\mathrm{d} \Pi(F \mid Z_{\text {Treated }}^{(n)}),
\end{align}
where the conditional density $f_{\Delta Y|D,X}$ is a function of the conditional mean $m$ by the exponential family restriction given in \eqref{condpdf}.
Here, we used  the fact that independent priors are placed on the conditional mean $m$ and the distribution function $F$.

We first introduce high-level assumptions. Below, we consider some measurable sets $\mathcal H^m_n$ of functions $\eta^m$, which is understood only for the control arm such that $\Pi(\eta^m\in\mathcal{H}^m_n\mid Z^{(n)})\to_{P_0} 1$. We also denote $\mathcal{H}_n=\{\eta:\eta^m\in\mathcal{H}_n^m\}$ when we index the conditional mean function $m_{\eta}$ by its subscript $\eta$. We write $F_0:= F_{\eta_0}$ for the distribution of $(D\Delta Y, D, X^{\top})$ under $P_0$, and define $\|\phi\|_{2,F_0}:= \sqrt{\int \phi^2(z)\,\mathrm{d}F_0(z)}$ for all $\phi\in L^2(F_0):=\{\phi:\|\phi\|_{2,F_0}<\infty\}$. When we consider the conditional moment function $m$ below, the integral simplifies to one that depends only on the true marginal distribution of $X$, denoted by $F_X$ under $P_0$.
\begin{assumption}[Rates of Convergence]\label{Assump:NonDRRate}
Suppose there exist some measurable sets $\mathcal{H}_n$ such that $\sup_{\eta\in\mathcal{H}_n}\Vert m_\eta-m_0\Vert_{2,F_0}\leq \varepsilon_n$ for some $\varepsilon_n\to 0$ and $\Pi(\eta\in\mathcal{H}_n\mid Z^{(n)})\to_{P_0} 1$. 
\end{assumption}
The posterior contraction rate for the conditional mean can be derived by modifying the classical results of \cite{ghosal2000rates}. In the semiparametric Bayesian literature, the requirement $\varepsilon_n=o(n^{-1/4})$ is stated explicitly in order to eliminate second-order remainder terms; see Condition (C) in \cite{castillo2012gaussian}. This also aligns with the usual cut-off rate of the nonparametric components in frequentist semiparametric models \citep{newey1994var}. Note that the ATT $\tau_{\eta}$ is linear in $m_{\eta}$, so that we do not need to deal with these second-order terms. Nevertheless, the posterior contraction rate also plays a crucial role in the next two assumptions related to the stochastic equicontinuity and prior stability. For the concrete example involving the H\"older class for the conditional mean function, we need to impose sufficient smoothness so that this contraction rate indeed satisfies $\sqrt n\varepsilon^2_n=o(1)$. 

We adopt the standard empirical process notation as follows. For a function $h$ of a random vector $Z=(\Delta Y,D, X^\top)^\top$ that follows distribution $P_0$, we let $P_0[h]=\int h(z)\mathrm{d}P_0(z)$, $\mathbb{P}_n[h]=n^{-1}\sum_{i=1}^{n}h(Z_i)$, and $\mathbb{G}_n[h]=\sqrt n\left(\mathbb{P}_n-P_0\right)[h]$. The next set of assumptions restrict the complexity of the conditional mean functions. The first part requires the class $\{m_\eta:\eta\in\mathcal{H}_n\}$ to be Glivenko-Cantelli, plus some mild moment conditions on its envelope function. The second part imposes the stochastic equicontinuity, which holds when the conditional mean function belongs to a Donsker class. For the H\"older class considered in Section \ref{sec:sq:exp}, this enforces the sufficient smoothness of those functions relative to the dimensionality of covariates.
		\begin{assumption}[Complexity]\label{Assump:NonDRSE}
For measurable sets $\mathcal{H}_n$ in Assumption \ref{Assump:NonDRRate}, we assume that
(i) 
	$\sup _{\eta\in \mathcal{H}_n}\left|(\mathbb{P}_n-P_0) m_{\eta}\right| =o_{P_0}(1)$
and $\{m_\eta:\eta\in\mathcal{H}_n\}$ has an envelope function $M(\cdot)$ with  $P_0M^{2+\delta}<\infty$ for some constant $\delta>0$, and (ii)
			$\sup_{\eta\in\mathcal{H}_n}\left|\mathbb{G}_n\left[m_\eta-m_0\right]\right|=o_{P_0}(1)$.
	\end{assumption}
	
The next assumption concerns the prior stability condition, which is common to semiparametric Bayesian inference \citep{ghosal2017fundamentals}. This facilitates the technical proof for which we need to consider the perturbation along the least favorable direction. For that purpose, we introduce some necessary terminologies related to the general Gaussian process. Such a process determines the reproducing kernel Hilbert space (RKHS) $\left(\mathbb{H}^m,\|\cdot\|_{\mathbb{H}^m}\right)$. Our Bayesian method in Algorithm \ref{algorithm_1} does not include any correction involving the Riesz representer $\gamma_0$ as defined in \eqref{riesz:def}. Yet to establish prior stability, an approximation condition for $\gamma_0$ is imposed, requiring sufficient regularity of the propensity score $\pi_0(\cdot)$. We introduce the ball in $\mathbb{H}^m$ centered at the true Riesz representer $\gamma_0$ given by 
\begin{align*}
\mathbb{H}^m(r_n):=\left\{h \in \mathbb{H}^m : \Vert h-\gamma_0\Vert_{\infty}\leq r_n\text{ and }\Vert h\Vert_{\mathbb{H}^m}\leq \sqrt{n}r_n\right\}
\end{align*}
for some rate $r_n$, where $\Vert \cdot\Vert_{\infty}$ denotes the supremum norm. 
	\begin{assumption}[Prior Stability]\label{Assump:NonDRPS}
(i)	There exists $\overline\gamma_n\in\mathbb{H}^m(\zeta_n)$ for  a sequence $\zeta_n=o(1)$ with $\sqrt{n}\,\varepsilon_n\zeta_n=o(1)$ where $\varepsilon_n$ is the posterior contraction rate in Assumption \ref{Assump:NonDRRate}. (ii) Further, 
	$\Pi(\eta^m\in \mathcal{H}^m_n-t\overline\gamma_n n^{-1/2}|Z^{(n)})\to_{P_0} 1$ for every $t\in\mathbb{R}$.		
	\end{assumption}
 Based on this assumption, we provide the proof of this prior stability in Supplementary Appendix \ref{sec:GPStable}. Some comments follow the above technical conditions. For standard parametric models, the absolute continuity of the prior density suffices. However, for nonparametric priors, the very notion of a Radon-Nikodym density is non-trivial, and one needs to apply the Cameron-Martin theorem; see Proposition I.20 in \cite{ghosal2017fundamentals}. Assumption \ref{Assump:NonDRPS} (i) imposes an approximation condition to the Riesz representer $\gamma_0$ via the restriction $\overline\gamma_n\in\mathbb{H}^m(\zeta_n)$. Assumption \ref{Assump:NonDRPS} (ii) requires the posterior contraction to a shifted set $\mathcal{H}^m_n-t\overline\gamma_n n^{-1/2}$. Given the regularity in Assumption \ref{Assump:NonDRPS} (i), this set hardly differs from $\mathcal{H}^m_n$ for concrete Gaussian processes. See Lemma 5 for Riemann-Liouville process and Lemma 6 for finite Gaussian series prior in \cite{ray2020causal}, as well as our Lemma \ref{lemma:prior:stability} for the squared exponential process. In comparison, the prior correction in the doubly robust procedure weakens the requirement with the help of a pilot estimator of the propensity score. 
		\subsection{A Semiparametric BvM Theorem}
We now establish a Bernstein-von Mises Theorem for our Bayesian outcome regression method based on Gaussian process priors. When it comes to the centering point of the posterior, we consider an asymptotically efficient estimator $\widehat{\tau}$ with the following linear representation:
\begin{equation}\label{def:est:chi}
	\widehat{\tau}=\tau_0+\frac{1}{n}\sum_{i=1}^n \widetilde{\tau}_0(Z_i)+o_{P_0}(n^{-1/2}),
\end{equation}
where $\widetilde{\tau}_0=\widetilde{\tau}_{\eta_0}$ is the	efficient influence function given in \eqref{eif_att}. 
Below, we write $\mathcal{L}_{\Pi}(\sqrt{n}(\tau_\eta-\widehat{\tau})|Z^{(n)})$ for the marginal posterior law of $\sqrt{n}(\tau_\eta-\widehat{\tau})$.

The distributional convergence (conditional on the observed data) is established using the so called \textit{bounded Lipschitz distance}.
For two probability measures $P,Q$ defined on a metric space $\mathcal{Z}$, we define the bounded Lipschitz distance as
\begin{equation}
	d_{BL}(P,Q)=\sup_{f\in BL(1)}\left| \int_{\mathcal{Z}}f(\mathrm{d}P-\mathrm{d}Q)\right|,
\end{equation}
where 
$BL(1)=\left\{f:\mathcal{Z}\mapsto\mathbb{R}, \sup_{z\in\mathcal{Z}}|f(z)|+\sup_{z\neq z'}\frac{|f(z)-f(z')|}{\|z-z'\|_{\ell_2}}\leq 1 \right\}$.
Here,  $\|\cdot\|_{\ell_2}$ denotes the vector $\ell_2$ norm. 
Below is our main statement about the asymptotic behavior of the posterior distribution of $\tau_{\eta}$, that is derived from the Bayes rule given the prior specification and the observed data $Z^{(n)}$. 

\begin{theorem}\label{BvM:thm:standard}
	Let Assumptions \ref{Ass:unconfounded}--\ref{Assump:NonDRPS} hold. 
	Then, using  standard Gaussian process priors \eqref{prior:uc} on $\eta^m$ and an independent Dirichlet process prior on $F$, we have
	\begin{equation*}
		d_{BL}\left(\mathcal{L}_{\Pi}(\sqrt{n}(\tau_\eta-\widehat{\tau})\mid Z^{(n)}), N(0,\textsc v_0) \right)\to_{P_0} 0.
	\end{equation*}
	As a result, the Bayesian credible set $\mathcal{C}_n(\alpha)$ given in Section \ref{sec:method_outline} satisfies
	$P_0\big(\tau_0\in  \mathcal{C}_n(\alpha)\big) \to 1-\alpha$ for any $\alpha\in(0,1)$.
\end{theorem}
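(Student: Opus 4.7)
The plan is to establish the BvM by verifying that the posterior Laplace transform $\mathbb E_\Pi[\exp(t\sqrt n(\tau_\eta-\widehat\tau))\mid Z^{(n)}]$ converges in $P_0$-probability to $\exp(t^2 \textsc v_0/2)$ for every $t\in\mathbb R$, which together with a routine uniform-integrability bound yields convergence in the bounded-Lipschitz distance. The key structural fact I would exploit is that the prior on $\eta^m$ (a Gaussian process) is independent of the prior on $F$ (Dirichlet process with zero base measure), so the joint posterior factorizes; this lets me analyze the GP-driven fluctuation of $m_\eta$ and the Bayesian-bootstrap-driven fluctuation of $F_\eta$ separately, and then recombine them using a conditional Slutsky-type lemma as in \cite{yiu2023corrected} to handle the random denominator $\pi_\eta=\mathbb E_\eta[D]$.

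The decomposition I have in mind is to localize on $\mathcal H_n$ (which carries posterior mass tending to one by Assumption \ref{Assump:NonDRRate}) and split $\sqrt n(\tau_\eta-\widehat\tau)$ into (a) a linear functional of $F_\eta-\mathbb P_n$ applied to $\phi_0(z)=d(y-m_0(x))-\pi_0\tau_0$, which converges conditionally on the data to a centered normal by the standard DP/Bayesian-bootstrap BvM; (b) the linear functional $\sqrt n \int D(m_0-m_\eta)(X)\,dF_0$ driven purely by the GP posterior; and (c) remainder terms that are controlled by the Glivenko--Cantelli and stochastic-equicontinuity parts of Assumption \ref{Assump:NonDRSE}. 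The main obstacle lies in neutralizing term (b). Applying the Cameron--Martin theorem to shift the GP prior by $-t\overline\gamma_n/\sqrt n$ changes the Radon--Nikodym density by a factor of the form $\exp(tL_n-t^2\Vert\overline\gamma_n\Vert_{\mathbb H^m}^2/(2n))$, where $L_n$ is a linear functional of the GP, and Assumption \ref{Assump:NonDRPS} guarantees that the shifted posterior still concentrates on $\mathcal H^m_n$ so the change of variable is legal. Because $\overline\gamma_n$ approximates the Riesz representer $\gamma_0$ defining the least favorable direction (see Supplementary Appendix \ref{appendix:lfd}), the shift exactly cancels the linear contribution of (b) up to $o_{P_0}(1)$ and leaves behind a quadratic term that, when combined with the DP contribution in (a), assembles into $\exp(t^2 \textsc v_0/2)$ with $\textsc v_0=P_0[\widetilde\tau_0^2]$. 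The two rate conditions $\zeta_n=o(1)$ and $\sqrt n\,\varepsilon_n\zeta_n=o(1)$ enter precisely at this point to make the cross-term $L_n$ and the approximation $\overline\gamma_n\approx\gamma_0$ negligible.

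Once the Laplace-transform convergence is in hand, the posterior-mean and credible-set statements follow by standard arguments: $\sqrt n(\overline\tau_\eta-\tau_0)\Rightarrow N(0,\textsc v_0)$ under $P_0$ follows from integrating the limit law against the identity (uniform integrability supplied by the envelope condition $P_0 M^{2+\delta}<\infty$ together with the $\sqrt n$-consistency of $\widehat\tau$), and the frequentist validity of $\mathcal C_n(\alpha)$ follows by applying the continuous mapping theorem to the quantile functional and using $\widehat\tau\to_{P_0}\tau_0$. The most delicate piece I anticipate is verifying the conditional Slutsky step in the presence of the joint randomness of $(m_\eta,F_\eta)$, since $\tau_\eta$ depends on both through the ratio form; this is where the argument of \cite{yiu2023corrected} becomes essential and tailored to the ATT rather than to ATE-type linear functionals.
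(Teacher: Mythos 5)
Your proposal is correct and follows essentially the same route as the paper's proof: localization on $\mathcal H_n$, the same three-way decomposition (Bayesian-bootstrap fluctuation of $F_\eta$, GP-driven fluctuation of $m_\eta$, and GC/stochastic-equicontinuity remainders), cancellation of the $m_\eta$-linear term via a Cameron--Martin shift along $\overline\gamma_n$ approximating the least favorable direction, assembly of the variance $\textsc v_0$ from the likelihood-expansion quadratic term plus the Dirichlet-process contribution, and the conditional Slutsky lemma of \cite{yiu2023corrected} to dispose of the random denominator $\pi_\eta=\mathbb E_\eta[D]$. The only cosmetic difference is that you state the GP-driven term as an $F_0$-integral while the paper keeps the empirical average $n^{-1/2}\sum_i \pi_0^{-1}D_i(m_0-m_\eta)(X_i)$, but Assumption \ref{Assump:NonDRSE}(ii) makes these interchangeable, as you implicitly note.
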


Theorem  \ref{BvM:thm:standard} establishes the BvM result for our Bayesian procedure using standard Gaussian process priors.
The entropy condition uniformly over $\eta\in\mathcal{H}_n$ is satisfied if $m_\eta$ is sufficiently smooth, that is, if $m_\eta$ belongs to a fixed $F_0$-Donsker class and, in particular, rules out double robustness. On the other hand, note that the asymptotic equivalence is obtained without any adjustment of prior or correction to posterior distributions, so the full Bayesian flavor is preserved. 

\subsection{Misspecified Likelihood and Posterior Contraction}\label{sec:misspecify}
 We now address possible misspecification of the control-arm likelihood in \eqref{condpdf} within our Bayesian framework. For the resulting quasi-posterior distribution, we establish contraction around the true conditional mean function.
 
 We assume a Gaussian \textit{working} likelihood of the conditional density function of outcome given covariates for the control units, but the true data generating distribution can differ from it. As argued by \cite{bissiri2016update}, one can start with a sample criterion function and take its exponentiated version as a working likelihood in Bayesian analysis. In our context, the key ingredient is the conditional mean function, which is naturally connected to the squared distance from the Gaussian likelihood. As we show in the sequel, this nonparametric Bayesian method possesses the same robustness to misspecification as frequentist M-estimation using least squares.

Referring to the (quasi-)posterior in (\ref{PosteriorFormula}), we specify the working model as $f_{\Delta Y \mid D, X}\left( \Delta Y \mid 0, X\right)=\frac{1}{\sqrt{2\pi\sigma^2}}\exp\left(-\left(\Delta Y-m_{\eta}(X)\right)^2/2\sigma^2\right)$ for some $\sigma>0$. Under the true DGP, we write the differenced potential outcome under control in the following regression form:
\begin{equation}\label{model:potential:control}
	\Delta Y(0)=m_0(X)+U\quad \text{and} \quad \E_0[U\mid D=0,X]=0,
\end{equation}
where $\Delta Y(0):=Y_2(0)-Y_1(0)$. In the next result, we will impose proper tail restrictions on the latent error term $U$, whose distribution may deviate from the normal case. Below, let $\mathcal{M}$ denote the class of uniformly bounded functions mapping $\mathcal{X}$ to $\mathbb{R}$. We further define $	N\left(\varepsilon,\mathcal{M},\Vert\cdot \Vert_{2,F_0}\right)$ as the covering number for any $\varepsilon>0$.

	\begin{lemma}\label{lemma:MisRate}
		Assume model \eqref{model:potential:control}, where $m_0\in\mathcal{M}$.  Suppose that the unobservable $U$ satisfies, for all $t\geq 0$,
		\begin{align}\label{ErrorMoment}
			\sup_{x\in\mathcal{X}} 
			P_0\left( |U|>t \mid D=0, X=x\right)
			\lesssim e^{-\underline{c} t^\beta}
		\end{align}	
		for some constants $\underline{c}>0$ and $\beta>1$.
		Let $\varepsilon_n$ be a sequence of positive numbers satisfying
$\varepsilon_n \to 0$ and $n\varepsilon_n^2 \to \infty$ such that
		\begin{align*}
\Pi\left(m_{\eta}\in\mathcal{M}:\Vert m_{\eta}-m_0\Vert_{2,F_0}^2\leq \varepsilon_n^2\right)
\geq e^{-c n\varepsilon_n^2}
\quad\text{and}\quad
N\left(\varepsilon_n,\mathcal{M},\Vert\cdot \Vert_{2,F_0}\right)
\leq e^{n\varepsilon_n^2}
		\end{align*}
for some 
constant $c>0$ and all $n$. Then we have
		\begin{align*}
			\Pi\left(
			m_{\eta}\in\mathcal{M}:\Vert m_{\eta}-m_0\Vert_{ 2,F_0}^2\geq M\varepsilon_n^2
			\mid Z^{(n)}
			\right)
			\to_{P_0}0 .
		\end{align*}
	for a sufficiently large constant $M>0$.
\end{lemma}
The above lemma serves as an important step in establishing the BvM theorem by providing the right posterior rate of contraction, stated as in the high-level Assumption \ref{Assump:NonDRRate}. For general misspecified models, the complication in nonparametric Bayesian analysis is due to the covering number for testing under misspecification \citep{kleijn2006misInf}. Tailored to economics applications, we relax the independent error assumption in \cite{kleijn2006misInf}. Given the posterior contraction rate, we conjecture the semiparametric BvM theorem remains valid for this quasi-posterior distribution under a proper set of assumptions\footnote{For parametric models, \cite{chernozhukov2003mcmc} demonstrated that the BvM theorem still guarantees frequentist coverage provided the generalized information identity holds. In our case, the asymptotic variance for the ATT coincides with the semiparametric efficiency bound without requiring the normality of the conditional distribution for the potential outcome in the control group.}. Appendix \ref{appendix:sensivity:error} presents simulation evidence that showcases the robustness of our procedure where the latent error term $U$ is non-normal and exhibits heteroskedasticity. We leave the theoretical development for future work.

\section{Theory for Doubly Robust Bayesian Method}\label{sec:asympt:prior}
This section establishes the main theory for our doubly robust Bayesian procedure. We build on double robustness within the modern machine learning framework \citep{chernozhukov2017double,benkeser2017doubly}: rather than merely hedging against parametric misspecification, it represents the ability to trade off estimation accuracy between nuisance functions\footnote{A fundamental reason why this robustness applies to the ATT stems from the mixed-bias property of the target functional, as studied by \cite{rotnitzky2021bias}.}.

For notational clarity, we focus on the auxiliary-data-based procedure in Algorithm \ref{algorithm_2}. To use the full data, we recommend the $K$-fold cross-fitted version described in Remark \ref{rmk:cross_fit}. Appendix \ref{appendix:crossfit} provides a complete description of this version and shows that its validity hinges on the fold-specific BvM result established in Theorem \ref{thm:Debias}.
 
\subsection{High-level Assumptions}\label{sec:dr_high}
Below, we present the assumptions that deliver double-robust inference through adjustments to the prior and posterior distributions.
	\begin{assumption}[DR Rates of Convergence]\label{Assump:Rate}
The estimators $\widehat \pi$ and $\widehat m$, which are based on an auxiliary sample independent of $Z^{(n)}$, satisfy $\Vert \widehat{\pi}-\pi_0\Vert_{2, F_0}=O_{P_0}(r_n) $, 
		\begin{equation*}
\Vert \widehat{m}-m_0\Vert_{ 2,F_0}=O_{P_0}(\varepsilon_n), ~\text{ and }~\sup_{\eta\in\mathcal{H}_n}\Vert m_\eta-m_0\Vert_{2,F_0}\leq \varepsilon_n,
	\end{equation*}
	where $\max\{\varepsilon_n, r_n\}\to 0$, $\sqrt{n}\varepsilon_nr_n\to 0$ and the measurable sets $\mathcal{H}_n$ satisfy $\Pi(\eta\in\mathcal{H}_n\mid Z^{(n)})\to_{P_0} 1$. Further, $\Vert \widehat\gamma\Vert_{\infty}=O_{P_0}(1)$ and $\widehat{\bar{\pi}}-\bar{\pi}_0=O_{P_0}(n^{-1/2})$. 
	\end{assumption}	
	Assumption \ref{Assump:Rate} imposes sufficiently fast convergence rates for the estimators for the conditional mean function $m_0$ and the propensity score $\pi_0$. 
 The posterior convergence rate for the conditional mean can be derived by modifying the classical results of \cite{ghosal2000rates} by accommodating the propensity score-adjusted prior, in the same spirit of \cite{ray2020causal}. 
 We refer to \cite{BLY2022} who showed that this assumption allows for double robustness under H\"older type smoothness assumptions. 
	\begin{assumption}[DR Stochastic Equicontinuity] \label{Assump:Donsker}
$		\sup_{\eta\in\mathcal{H}_n}\left|\mathbb{G}_n\left[\left(\gamma_0-\widehat \gamma\right)(m_\eta-m_0)\right]\right|=o_{P_0}(1).$
	\end{assumption}

	Assumption \ref{Assump:Donsker} restricts the functional class $\mathcal{H}_n$ to form a $P_0$-Glivenko-Cantelli class; see Section 2.4 of \cite{van1996empirical} and imposes a stochastic equicontinuity condition on a  product structure involving $\widehat\gamma$ and $m_\eta$. 
	Hence, the complexity of the functional class $(m_{\eta}-m_0)$ can be compensated by certain high regularity of the corresponding Riesz representer and vice versa.
	This condition adapts the complexity requirement of \cite{BLY2022} by only restricting the control arm.

Recall the propensity score-dependent prior on $m$ given in \eqref{prior:ps}, i.e., $m(\cdot) = q^{-1}\left(W^m(\cdot) + \lambda\widehat \gamma(\cdot)\right)$. Below, we restrict the behavior for $\lambda$ through its hyperparameter $\varsigma_n>0$. For two sequences $\{a_n\}$ and $\{b_n\}$ of positive numbers, we write $a_n \lesssim b_n$ if $\limsup_{n\to\infty} (a_n / b_n)<\infty$, and $a_n \sim b_n$ if $a_n \lesssim b_n$ and $b_n \lesssim a_n$.

	\begin{assumption}[DR Prior Stability]\label{Assump:Prior}
		$W^m$ is a continuous stochastic process independent of the normal random variable $\lambda\sim N(0,\varsigma_n^2)$, where $\varsigma_n\lesssim 1$, $n\varsigma^2_{n}\to\infty$ and that satisfies (i)
$		\Pi\left(\lambda:|\lambda|\leq u_n\varsigma_n^2\sqrt{n}\mid Z^{(n)}\right)\to_{P_0}1$, 	for some deterministic sequence $u_n\to 0$	and (ii) 
$		\Pi\left((w,\lambda):w+(\lambda+tn^{-1/2})\widehat{\gamma}\in\mathcal{H}_n^m\mid Z^{(n)} \right)\to_{P_0}1$	 for any $t\in\mathbb R$. 
	\end{assumption}

		 Assumption \ref{Assump:Prior} incorporates Conditions (3.9) and (3.10) from Theorem 2 in \cite{ray2020causal}, and it is imposed to establish the stability property of the adjusted prior distribution. 
		 We will provide sufficient conditions for Assumption \ref{Assump:Prior} in Section \ref{sec:sq:exp}. 
		\subsection{Doubly robust BvM Theorems}
	We now establish a semiparametric BvM theorem for our doubly robust Bayesian procedure given in Algorithm \ref{algorithm_2}.
To signify the dependence of the posterior law on the pilot estimation of the Riesz representer, we write the posterior as $\Pi_{\widehat{\gamma}}(\cdot|Z^{(n)})$ in the sequel.
	\begin{theorem}\label{thm:BvM}
		Let Assumptions \ref{Ass:unconfounded}, \ref{Assump:NonDRSE}(i), \ref{Assump:Rate}, \ref{Assump:Donsker}, and \ref{Assump:Prior} hold. Consider the propensity score adjusted prior \eqref{prior:ps} on $\eta^m$ and an independent Dirichlet process prior on $F$. Then we have
	\begin{equation*}
		d_{BL}\left(\mathcal{L}_{\Pi_{\widehat{\gamma}}}(\sqrt{n}\left((\tau_{\eta}-\widehat{\tau})-b_{0,\eta}\right)\mid Z^{(n)}), N(0,\textsc v_0) \right)\to_{P_0} 0,
	\end{equation*}
		where $	b_{0,\eta}:=	\mathbb{P}_n[(m_0-m_{\eta})\gamma_0]$. 
	\end{theorem}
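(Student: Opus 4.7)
The plan is to follow the semiparametric BvM strategy of \cite{castillo2012gaussian}, \cite{ray2020causal}, and \cite{BLY2022}, adapted to both the ratio form of $\tau_\eta$ and the Dirichlet-process (Bayesian bootstrap) posterior on $F$. The first step handles the random denominator $D_\eta := \mathbb{E}_\eta[D]$: because the Dirichlet base measure is zero, its posterior is a Bayesian-bootstrap average of $(D_i)_{i=1}^n$ and concentrates at $\pi_0$ at rate $n^{-1/2}$. The conditional Slutsky lemma of \cite{yiu2023corrected} then linearizes the ratio and reduces the claim to a BvM for the numerator $N_\eta - \tau_0 D_\eta = \mathbb{E}_\eta[\psi_\eta]$ with $\psi_\eta(z) := d(y - m_\eta(x) - \tau_0)$, giving
$$\sqrt n(\tau_\eta - \widehat\tau - b_{0,\eta}) = \pi_0^{-1}\sqrt n \bigl(\mathbb{E}_\eta[\psi_\eta] - \pi_0(\widehat\tau - \tau_0) - \pi_0 b_{0,\eta}\bigr) + o_\Pi(1).$$

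I would then perform the master decomposition
$$\sqrt n\,\mathbb{E}_\eta[\psi_\eta] = \sqrt n(\mathbb{E}_\eta - \mathbb{P}_n)[\psi_\eta] + \mathbb{G}_n[\psi_0] - \sqrt n\,\mathbb{P}_n[D(m_\eta - m_0)],$$
which follows from $\psi_\eta - \psi_0 = -d(m_\eta - m_0)$ and $P_0\psi_0 = 0$. Substituting the linear expansion $\sqrt n(\widehat\tau - \tau_0) = \mathbb{G}_n[\widetilde\tau_0] + o_{P_0}(1)$ and invoking the algebraic identities $\psi_0/\pi_0 - \widetilde\tau_0 = h_0(\Delta Y - m_0)$ and $D/\pi_0 - \gamma_0 = h_0$, with $h_0(d,x) := (1-d)\pi_0(x)/[\pi_0(1-\pi_0(x))]$, the precise choice $b_{0,\eta} = \mathbb{P}_n[(m_0 - m_\eta)\gamma_0]$ makes the algebra collapse to
$$\sqrt n(\tau_\eta - \widehat\tau - b_{0,\eta}) = \pi_0^{-1}\sqrt n(\mathbb{E}_\eta - \mathbb{P}_n)[\psi_\eta] + \mathbb{G}_n[h_0(\Delta Y - m_0)] - \sqrt n\,\mathbb{P}_n[(m_\eta - m_0)h_0] + o_\Pi(1).$$
The Bayesian-bootstrap deviation $\sqrt n(\mathbb{E}_\eta - \mathbb{P}_n)[\psi_\eta]$ is asymptotically $N(0, P_0[\psi_0^2])$ conditionally on $Z^{(n)}$ by Assumption \ref{Assump:NonDRSE}(i) and the posterior contraction from Assumption \ref{Assump:Rate}; in particular, $\sqrt n(\mathbb{E}_\eta - \mathbb{P}_n)[\psi_\eta - \psi_0] = o_\Pi(1)$ because its conditional Bayesian-bootstrap variance is bounded by $\mathbb{P}_n[D(m_\eta - m_0)^2] = O_\Pi(\varepsilon_n^2)$.

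The critical and most delicate step is analyzing the remaining $m$-dependent piece $\mathbb{G}_n[h_0(\Delta Y - m_0)] - \sqrt n\,\mathbb{P}_n[(m_\eta - m_0)h_0]$ under the Gaussian-process posterior. Here the propensity-score-adjusted prior $\eta^m = W^m + \lambda\widehat\gamma$ is essential: via the Cameron-Martin theorem and the prior-stability Assumption \ref{Assump:Prior}, the posterior of $\eta^m$ is asymptotically invariant (up to an explicit Gaussian Radon-Nikodym derivative) under the shift $\lambda \mapsto \lambda + t/\sqrt n$, which translates into an $O(n^{-1/2})$ shift of $m_\eta$ along the least-favorable direction $\widehat\gamma$. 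Combining this shift with the stochastic equicontinuity in Assumption \ref{Assump:Donsker} (to replace $h_0$ by a $\widehat\gamma$-based surrogate) and the double-robust product rate $\sqrt n\,\varepsilon_n r_n = o(1)$ in Assumption \ref{Assump:Rate} (to kill the remaining mean cross-term), the posterior Laplace transform of this piece contributes an additional Gaussian variance that, together with the Bayesian-bootstrap contribution, sums to the efficiency bound $\textsc v_0 = P_0[\widetilde\tau_0^2]$, with no cross-terms thanks to the a posteriori independence of $F$ and $m$. The main obstacle is precisely this Cameron-Martin step and the verification that $b_{0,\eta}$, rather than some other natural-looking centering, is exactly the drift left uncancelled; this hinges on the identity $D/\pi_0 - \gamma_0 = h_0$, which quantifies the gap between the naive plug-in and efficient influence functions for the ATT.
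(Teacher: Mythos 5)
Your proposal is correct and follows essentially the same route as the paper: condition on the data, treat the Dirichlet/Bayesian-bootstrap part and the Gaussian-process part separately via the conditional Laplace transform, shift the posterior of $m_\eta$ along the (estimated) least favorable direction by Cameron--Martin plus prior stability so that the drift left uncancelled is exactly $b_{0,\eta}=\mathbb{P}_n[(m_0-m_\eta)\gamma_0]$, and remove the random denominator with the conditional Slutsky lemma. Your identities $\psi_0/\pi_0-\widetilde\tau_0=h_0(\Delta Y-m_0)$ and $D/\pi_0-\gamma_0=h_0$ are just a compact rewriting of the paper's cancellation of its terms $\textcircled{a},\textcircled{b},\textcircled{c},\textcircled{e}$, and the variance accounting (bootstrap variance plus likelihood-shift variance summing to $\textsc v_0$) matches the paper's.
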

	Theorem \ref{thm:BvM} shows that, under double-robust smoothness conditions, the BvM theorem holds only up to a ``bias term" $b_{0,\eta}$, which depends on the unknown conditional mean $m_0$. This biased posterior makes the BvM infeasible in practice. We also emphasize that the derivation of this result is different from the results in \cite{BLY2022}, as we need to control the denominator in the asymptotic expansions.
	
\begin{remark}[Comparison of Bias in ATE/ATT Posteriors]\label{rem:bias:ate}
\cite{BLY2022} showed that, for inference on the ATE in the cross-sectional case, the BvM holds for a biased posterior under double-robust smoothness conditions, see also Remark \ref{rem:debias:ate}. This ``bias term'' is closely related to the influence function of the ATE, which takes the following form
\begin{align*}
 b_{0,\eta}^{ATE}
&=\frac{1}{n}\sum_{i=1}^n \Bigg\{\Bigg(\underbrace{\frac{D_i}{\pi_0(X_i)}-\frac{1-D_i}{1-\pi_0(X_i)}}_{=:\gamma_0^{ATE}(D_i,X_i)}\Bigg)
\left(m_0(D_i, X_i)-m_\eta(D_i, X_i)\right) -(\bar{m}_0(X_i) - \bar{m}_\eta(X_i))\Bigg\},
\end{align*}
where $\bar{m}_0(\cdot)=m_0(1, \cdot)-m_0(0, \cdot)$,  $\bar{m}_\eta(\cdot)=m_\eta(1, \cdot)-m_\eta(0, \cdot)$, and the Riesz representer $\gamma_0^{ATE}$ as given in the ATE case, see \cite{BLY2022}. Referring to the influence function of the ATT, we can also express it in terms of the conditional mean $m_{0}(D,X)$ involving both treated and control groups, cf. Equation (8.5) in \cite{vanderLaan2011tl}. 
Therefore, we have the following expression for the bias term in the ATT case:
\begin{align*}
	&\frac{1}{n}\sum_{i=1}^n \Bigg\{\Bigg(\underbrace{\frac{D_i}{\bar\pi_0}-\frac{1-D_i}{\bar\pi_0}\frac{\pi_0(X_i)}{1-\pi_0(X_i)}}_{=\gamma_0(D_i,X_i)}\Bigg)
	\left(m_0(D_i, X_i)-m_\eta(D_i, X_i)\right) -\frac{D_i}{\bar\pi_0}\left(\bar{m}_0(X_i) - \bar{m}_\eta(X_i)\right)\Bigg\}\\
	&=	\frac{1}{n}\sum_{i=1}^n\gamma_0(D_i,X_i)
	\left(m_0(0, X_i)-m_\eta(0, X_i)\right)= 	b_{0,\eta}^{ATT},
\end{align*}
where the simplification occurs because the term $(D_i/\bar\pi_0)(m_0(1,X_i)-m_{\eta}(1,X_i))$ cancels out in the difference. The resulting simplification of the bias term aligns with our simulation results, which show that standard Gaussian process priors also provide accurate coverage for the ATT in many cases.
\end{remark}		
	
The next result is an immediate implication of Theorem \ref{thm:BvM}. Specifically, it provides a BvM Theorem for Bayesian procedures that do not rely on posterior correction. This can be achieved if the bias term is asymptotically negligible uniformly over the underlying functional class, which requires more restrictive  smoothness conditions on the conditional mean function $m_0$. 
	\begin{corollary}\label{cor:BvM:single}
	Let Assumptions \ref{Ass:unconfounded}, \ref{Assump:NonDRSE}(i), \ref{Assump:Rate}, \ref{Assump:Donsker}, and \ref{Assump:Prior} hold.  Consider  the propensity score adjusted prior \eqref{prior:ps} on $\eta^m$ and an independent Dirichlet process prior on $F$. If, in addition, $	b_{0,\eta}=o_{P_0}(n^{-1/2})$ uniformly for $\eta\in\mathcal{H}_n$,  then we have 
	\begin{equation*}
		d_{BL}\left(\mathcal{L}_{\Pi_{\widehat{\gamma}}}(\sqrt{n}(\tau_{\eta}-\widehat{\tau})\mid Z^{(n)}), N(0,\textsc v_0) \right)\to_{P_0} 0.
		\end{equation*}
\end{corollary}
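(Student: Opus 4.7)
The plan is to derive Corollary \ref{cor:BvM:single} directly from Theorem \ref{thm:BvM} by showing that the shift by $\sqrt{n}\, b_{0,\eta}$ becomes asymptotically negligible under the posterior. Since Theorem \ref{thm:BvM} already delivers $d_{BL}(\mathcal{L}_{\Pi}(\sqrt{n}((\tau_\eta-\widehat\tau)-b_{0,\eta})\mid Z^{(n)}),\,N(0,\textsc v_0))\to_{P_0}0$, a triangle inequality for $d_{BL}$ reduces the task to bounding the distance between the shifted and unshifted posterior laws of $\sqrt{n}(\tau_\eta-\widehat\tau)$.

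First I would fix an arbitrary $f\in BL(1)$ and write
\[
\Bigl|\mathbb{E}_{\Pi}\bigl[f(\sqrt{n}(\tau_\eta-\widehat\tau))\mid Z^{(n)}\bigr]-\mathbb{E}_{\Pi}\bigl[f(\sqrt{n}(\tau_\eta-\widehat\tau)-\sqrt{n}\, b_{0,\eta})\mid Z^{(n)}\bigr]\Bigr|\leq \mathbb{E}_{\Pi}\bigl[\min(2,\sqrt{n}|b_{0,\eta}|)\mid Z^{(n)}\bigr],
\]
exploiting that $f$ is bounded by $1$ and $1$-Lipschitz. Taking the supremum over $f\in BL(1)$ yields the same upper bound for $d_{BL}$ between the two posterior laws.

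Next I would split this expectation using the event $\{\eta\in\mathcal{H}_n\}$. On $\mathcal{H}_n$, the hypothesis of the corollary gives $\sqrt n\,|b_{0,\eta}|=o_{P_0}(1)$ uniformly, so that the integrand vanishes in frequentist probability; the bounded convergence in the posterior (via the uniform envelope $2$) then makes the contribution from $\{\eta\in\mathcal{H}_n\}$ of order $o_{P_0}(1)$. On the complementary event, the integrand is bounded by $2$, and the contribution is at most $2\,\Pi(\eta\notin\mathcal{H}_n\mid Z^{(n)})$, which tends to $0$ in $P_0$-probability because the posterior contracts on $\mathcal{H}_n$ (a property used throughout Theorem \ref{thm:BvM}, implied by the contraction rate in Assumption \ref{Assump:Rate} combined with the prior-stability Assumption \ref{Assump:Prior}).

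Combining these two steps, $d_{BL}(\mathcal{L}_{\Pi}(\sqrt n(\tau_\eta-\widehat\tau)\mid Z^{(n)}),\mathcal{L}_{\Pi}(\sqrt n((\tau_\eta-\widehat\tau)-b_{0,\eta})\mid Z^{(n)}))\to_{P_0}0$, and the conclusion follows by the triangle inequality with Theorem \ref{thm:BvM}. The only subtlety I anticipate is verifying that the uniform bound $b_{0,\eta}=o_{P_0}(n^{-1/2})$ on $\mathcal{H}_n$ can be transferred into posterior-probability control; this is handled cleanly because the truncated integrand $\min(2,\sqrt n|b_{0,\eta}|)$ is uniformly bounded, so no dominated-convergence issues arise once we restrict to the posterior-probability-one set $\mathcal{H}_n$.
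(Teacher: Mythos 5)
Your proposal is correct and follows exactly the route the paper intends: the paper states Corollary \ref{cor:BvM:single} as an immediate consequence of Theorem \ref{thm:BvM}, obtained by noting that the additional hypothesis makes $\sqrt{n}\,b_{0,\eta}$ uniformly negligible on $\mathcal{H}_n$, so the shifted and unshifted posterior laws are asymptotically indistinguishable in $d_{BL}$. Your explicit bound $\sup_{f\in BL(1)}|\mathbb{E}_\Pi[f(S_n)]-\mathbb{E}_\Pi[f(S_n-\sqrt{n}b_{0,\eta})]|\leq \mathbb{E}_\Pi[\min(2,\sqrt{n}|b_{0,\eta}|)\mid Z^{(n)}]$, together with the split on $\{\eta\in\mathcal{H}_n\}$ and the posterior concentration $\Pi(\mathcal{H}_n\mid Z^{(n)})\to_{P_0}1$, is a clean and complete way to fill in the details the paper leaves implicit.
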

While Corollary \ref{cor:BvM:single} allows for arbitrarily low regularity of propensity scores, it requires the conditional mean function to be sufficiently smooth; specifically, the smoothness of $m$ must be greater than $\dim(X_i)/2$ (also referred to as the Donsker property). This condition is also called \textit{single robustness} by \cite{ray2020causal}, and indeed, this corollary extends their findings to the inference on the ATT. Also, as they point out, propensity score adjusted priors \eqref{prior:ps} relax the uniformity condition $\sup_{\eta\in\mathcal H_n}|\mathbb{G}_n\left[m_\eta-m_0\right]|=o_{P_0}(1)$ used in Theorem \ref{BvM:thm:standard} under standard Gaussian process priors.

Under double-robust assumptions,  however,  the Bayesian procedure that achieves the BvM equivalence in Theorem \ref{thm:BvM} is not feasible, because it depends on the term $b_{0,\eta}$, which is a function of the unknown conditional mean $m_0$.
Our objective is to maintain double-robust conditions, while considering pilot estimators for the unknown functional parameters in $b_{0,\eta}$. The correction term $\widehat{b}_{\eta}$, as introduced in \eqref{recentering_term}, results in a feasible Bayesian procedure that satisfies the BvM theorem, as demonstrated below.
	\begin{theorem}\label{thm:Debias}
Let Assumptions \ref{Ass:unconfounded}, \ref{Assump:NonDRSE}(i), \ref{Assump:Rate}, \ref{Assump:Donsker}, and \ref{Assump:Prior} hold.  Consider  the propensity score adjusted prior \eqref{prior:ps} on $\eta^m$ and an independent Dirichlet process prior on $F$. 	Then we have
	\begin{equation*}
		d_{BL}\left(\mathcal{L}_{\Pi_{\widehat{\gamma}}}(\sqrt{n}(\tau_{\eta}-\widehat{\tau} - 	\widehat b_\eta)\mid Z^{(n)}), N(0,\textsc v_0) \right)\to_{P_0} 0,
	\end{equation*}
	where $	\widehat b_\eta =	\mathbb{P}_n[(\widehat m-m_{\eta})\widehat \gamma]$. As a result, the Bayesian credible set $\mathcal{C}_n^{DR}(\alpha)$ given in Section \ref{sec:method_outline:DR} satisfies	$P_0\big(\tau_0\in  \mathcal{C}_n^{DR}(\alpha)\big) \to 1-\alpha$ for any $\alpha\in(0,1)$.
\end{theorem}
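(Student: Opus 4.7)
The plan is to deduce Theorem \ref{thm:Debias} directly from Theorem \ref{thm:BvM} by showing that the feasible posterior correction $\widehat b_\eta$ approximates the infeasible bias $b_{0,\eta}$ at rate $o_{P_0}(n^{-1/2})$ uniformly over the posterior-concentration set $\mathcal H_n$. Writing
\begin{align*}
\tau_\eta-\widehat\tau-\widehat b_\eta \;=\; \bigl(\tau_\eta-\widehat\tau-b_{0,\eta}\bigr) \;-\; \bigl(\widehat b_\eta - b_{0,\eta}\bigr),
\end{align*}
Theorem \ref{thm:BvM} already provides the BvM limit for the first summand, so any $f\in BL(1)$ applied to $\sqrt n(\tau_\eta-\widehat\tau-\widehat b_\eta)$ differs from $f$ applied to $\sqrt n(\tau_\eta-\widehat\tau-b_{0,\eta})$ by at most the supremum of $\sqrt n|\widehat b_\eta-b_{0,\eta}|$ over $\mathcal H_n$, up to terms tending to zero on a posterior-high-probability set. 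Hence it suffices to prove that $\sqrt n(\widehat b_\eta-b_{0,\eta})=o_{P_0}(1)$ uniformly in $\eta\in\mathcal H_n$.

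The key decomposition is
\begin{align*}
\widehat b_\eta - b_{0,\eta} \;=\; \mathbb P_n\bigl[(\widehat m-m_0)\widehat\gamma\bigr] \;+\; \mathbb P_n\bigl[(m_0-m_\eta)(\widehat\gamma-\gamma_0)\bigr] \;=:\; T_1 + T_{2,\eta}.
\end{align*}
For each term I split $\mathbb P_n = P_0 + n^{-1/2}\mathbb G_n$. The crucial orthogonality property is $P_0[\gamma_0\cdot g(X)]=0$ for any function $g$ of $X$ alone, which follows from $\E_0[D\mid X]=\pi_0(X)$ and the definition of $\gamma_0$. This yields $P_0[(\widehat m-m_0)\gamma_0]=0$, so by Cauchy--Schwarz,
\begin{align*}
\bigl|P_0[(\widehat m-m_0)\widehat\gamma]\bigr| \;=\; \bigl|P_0[(\widehat m-m_0)(\widehat\gamma-\gamma_0)]\bigr| \;\leq\; \|\widehat m-m_0\|_{2,F_0}\,\|\widehat\gamma-\gamma_0\|_{2,F_0} \;=\; O_{P_0}(\varepsilon_n r_n),
\end{align*}
where $\|\widehat\gamma-\gamma_0\|_{2,F_0}=O_{P_0}(r_n)$ by Assumption \ref{Assump:Rate} and the Lipschitz dependence of $\gamma$ on $\pi$ under the overlap condition in Assumption \ref{Ass:unconfounded}(iii). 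The empirical-process part of $T_1$ uses independence of $(\widehat m,\widehat\gamma)$ from $Z^{(n)}$: conditionally on the auxiliary sample, $\mathbb G_n[(\widehat m-m_0)\widehat\gamma]$ has variance bounded by $\|\widehat\gamma\|_\infty^2\|\widehat m-m_0\|_{2,F_0}^2=O_{P_0}(\varepsilon_n^2)$, hence is $o_{P_0}(1)$ by Chebyshev. Together with $\sqrt n\,\varepsilon_n r_n\to 0$, this gives $\sqrt n\,T_1=o_{P_0}(1)$.

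For $T_{2,\eta}$ the population piece is $O_{P_0}(\varepsilon_n r_n)$ uniformly in $\eta\in\mathcal H_n$ by Cauchy--Schwarz and Assumption \ref{Assump:Rate}, and the empirical piece $\sup_{\eta\in\mathcal H_n}|\mathbb G_n[(\widehat\gamma-\gamma_0)(m_\eta-m_0)]|=o_{P_0}(1)$ is exactly the content of Assumption \ref{Assump:Donsker}. Combining, $\sqrt n(\widehat b_\eta-b_{0,\eta})=o_{P_0}(1)$ uniformly over $\mathcal H_n$, and substituting into Theorem \ref{thm:BvM} delivers the bounded-Lipschitz conclusion. The frequentist statements about $\overline\tau_\eta^{DR}$ and $\mathcal C_n^{DR}(\alpha)$ then follow by standard Bayesian--frequentist bridging: BL-convergence of the recentered posterior implies convergence of its first moment (hence $\sqrt n(\overline\tau_\eta^{DR}-\widehat\tau)=o_{P_0}(1)$, which combined with the linear representation \eqref{def:est:chi} of $\widehat\tau$ yields asymptotic normality) and convergence of its quantiles (delivering the stated coverage).

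The main obstacle is the uniform empirical-process control of $\mathbb G_n[(\widehat\gamma-\gamma_0)(m_\eta-m_0)]$ over the random class $\{m_\eta:\eta\in\mathcal H_n\}$. I handle it here by invoking Assumption \ref{Assump:Donsker}, which precisely encodes a product-structure relaxation of the Donsker condition; verifying it from primitive smoothness and complexity conditions on the propensity score and conditional mean (in the spirit of Section \ref{sec:sq:exp}) is the genuinely technical piece, but it is isolated to the statement of that assumption. The remaining pieces of the proof are structural algebra together with the mean-zero property $P_0[\gamma_0 g(X)]=0$, the sample-splitting variance bound, and the rate condition $\sqrt n\,\varepsilon_n r_n\to 0$.
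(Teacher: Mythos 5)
Your proof is correct and follows essentially the same route as the paper's: the identical decomposition of $\widehat b_\eta - b_{0,\eta}$ into a $(\widehat\gamma-\gamma_0)(m_0-m_\eta)$ piece handled by Assumption \ref{Assump:Donsker} plus Cauchy--Schwarz with $\sqrt n\,\varepsilon_n r_n\to0$, and a $(\widehat m - m_0)\widehat\gamma$ piece handled by the orthogonality $\E_0[\gamma_0(D,X)\mid X]=0$ together with independence of the pilot estimators from $Z^{(n)}$ (the paper isolates this last step as a separate lemma). The only cosmetic difference is that the paper first replaces $\widehat\gamma$ by $\gamma_0$ before the conditional-variance argument, whereas you split $\mathbb P_n=P_0+n^{-1/2}\mathbb G_n$ directly; both are the same calculation.
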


Theorem \ref{thm:Debias} shows that the Bayesian method proposed in Algorithm \ref{algorithm_2}, $\check\tau_\eta = \tau_{\eta} - \widehat{b}_\eta$, achieves the BvM result under double-robust smoothness conditions. The following remark clarifies the relationship when considering posterior correction alone, in which case BvM results are available only under more restrictive smoothness assumptions on the propensity score and the conditional mean function.

\begin{remark}\label{remarkYiu}
Building on the idea of a one-step update in frequentist semiparametric estimation, \cite{yiu2023corrected} propose a different method of posterior correction (without prior adjustment) that involves the efficient influence function. When applying their methodology to the ATT, it is evident that both the conditional mean function and the propensity score must satisfy the Donsker property, cf. Assumption 4(c) therein. In contrast,  the relaxation of the Donsker property is one of the key technical innovations of our doubly robust Bayesian inference.
\end{remark}

The $K$-fold cross-fitted Bayesian procedure proposed in  Remark \ref{rmk:cross_fit}  is designed to make efficient use of the entire sample. A complete description of the algorithm is given in Appendix \ref{appendix:crossfit}. Our construction is inspired by the fold-specific multiplier bootstrap procedure of \cite{bach2024R}, which uses pilot estimators computed from the remaining folds. The key difference is that uncertainty propagation in our approach is primarily induced by the posterior distribution for each fold. Nonetheless, Theorem \ref{thm:BvMCF} shows that the validity of this cross-fitted method mainly relies on the BvM result for the fold-specific posterior distribution, established in Theorem \ref{thm:Debias}.

\section{Finite Sample Results}\label{sec:numerical}
This section investigates the finite-sample performance of the proposed Bayesian approaches and then applies them to a real dataset.
\subsection{Simulation Evidence}\label{sec:simulations}
We present Monte Carlo simulation results comparing our proposed semiparametric Bayesian methods with existing frequentist approaches. We adopt the four DGPs of \cite{sant2020doubly} and additionally consider a higher-dimensional DGP.  In each DGP, we generate samples of size $n \in \{1000, 2000\}$ of the observable variables $(Y_{1}, Y_{2}, D, X^\top)$ as follows:
Let $V:=\left(V_{1}, V_{2}, V_{3}, V_{4}\right)^\top \sim \mathcal{N}(0, I_4)$,  where $I_p$ denotes the $p$-dimensional identity matrix.  The covariate vector $X:=\left(X_{1}, X_{2}, X_{3}, X_{4}\right)^\top$ is constructed from $V$ as follows: 
$X_{j} = (\tilde{X}_{j} - \mathbb{E}[\tilde{X}_{j}])/\sqrt{\mathrm{Var}[\tilde{X}_{j}]}$,
where
$\tilde{X}_{1} = \exp(V_{1}/2)$,
$\tilde{X}_{2} = 10 + V_{2}/(1 + \exp(V_{1}/2))$,
$\tilde{X}_{3} = (0.6 + V_{1}V_{3}/25)^3$,
and $\tilde{X}_{4} = (20 + V_{2} + V_{4})^2$.
For a vector $w = (w_1, w_2, w_3, w_4)^\top$,  define
\begin{equation*}
  g(w) = 0.75(-w_1 + 0.5w_2 - 0.25w_3 - 0.1w_4),
  \quad
  \mu(w) = 210 + 27.4w_1 + 13.7(w_2 + w_3 + w_4).
\end{equation*}
Let  $\Psi(t) = 1/(1 + e^{-t})$ and $(\alpha, \epsilon_{1}, \epsilon_{2}(0), \epsilon_{2}(1))^\top
\sim \mathcal{N}(0, I_4)$.  
For $d \in \{0,1\}$, consider the following four DGPs, which generate the treatment indicator $D$ and the outcomes $Y_{1}$ and $Y_{2}=D Y_{2}(1) + (1 - D)Y_{2}(0)$:
\begin{enumerate}[leftmargin=0.1in]
 \item[]
\[
\begin{aligned}
\text{DGP1: } \quad
&D\sim \mathrm{Bernoulli}\left(\Psi[g(X)]\right),  \quad
Y_{1} = \mu(X) + D\mu(X) + \alpha + \epsilon_{1}, \\
&Y_{2}(d)= 2\mu(X) + D\mu(X) + \alpha + \epsilon_{2}(d).
\end{aligned}
\]
\item[]
 \[
\begin{aligned}
\text{DGP2: } \quad
&D\sim \mathrm{Bernoulli}\left(\Psi[g(V)]\right),  \quad
Y_{1} = \mu(X) + D\mu(X) + \alpha + \epsilon_{1}, \\
&Y_{2}(d)= 2\mu(X) + D\mu(X) + \alpha + \epsilon_{2}(d).
\end{aligned}
\]
\item[]
\[
\begin{aligned}
\text{DGP3: } \quad
&D\sim \mathrm{Bernoulli}\left(\Psi[g(X)]\right),  \quad
Y_{1} = \mu(V) + D\mu(V) + \alpha+ \epsilon_{1}, \\
&Y_{2}(d)= 2\mu(V) + D\mu(V) + \alpha + \epsilon_{2}(d).
\end{aligned}
\]
\item[]
\[
\begin{aligned}
\text{DGP4: } \quad
&D\sim \mathrm{Bernoulli}\left(\Psi[g(V)]\right),  \quad
Y_{1} = \mu(V) + D\mu(V) + \alpha + \epsilon_{1}, \\
&Y_{2}(d)= 2\mu(V) + D\mu(V) + \alpha+ \epsilon_{2}(d).
\end{aligned}
\]
\end{enumerate}
 The true ATT is zero in these DGPs. All the simulation results reported here are based on 1000 Monte Carlo replications for each design.

Our semiparametric Bayesian outcome regression (\textbf{OR Bayes}) and doubly robust Bayesian
(\textbf{DR Bayes}) methods follow Algorithms~\ref{algorithm_1}
and~\ref{algorithm_2}, implemented via the MATLAB package \texttt{GPML}
with $5000$ posterior draws.
DR Bayes is implemented in two variants according to the pilot estimator used for
the propensity score: logistic regression
(\textbf{DR Bayes$^{\text{Logit}}$}) or random forest
(\textbf{DR Bayes$^{\text{RF}}$}).
Although the BvM result for DR Bayes relies on sample splitting, we report the finite-sample performance with and without it. The default DR Bayes is  the sample-split version that 
uses $K$-fold cross-fitting described in Remark \ref{rmk:cross_fit} with $K=5$.  The version labeled with \textbf{FS} does not split the sample,  and instead uses the full sample for both nuisance estimation and posterior computation.

We compare the Bayesian methods with a variety of frequentist DiD
estimators.
\textbf{DR} is the improved doubly robust DiD estimator of
\cite{sant2020doubly}.
\textbf{OR} (outcome regression) is the sample analog of~(\ref{att_0})
with $m_0$ estimated by OLS on the control arm.  \textbf{IPW} represents the H\'{a}jek-type IPW DiD estimator.\footnote{Its expression is given by equation~(4.1) of \cite{sant2020doubly}.  The Horvitz--Thompson-type IPW DiD estimator  \citep{abadie2005} leads to confidence intervals twice as long as the H\'{a}jek-type,  so it is omitted in the following tables.  }\textsuperscript{,}\footnote{As \cite{sant2020doubly} documented,  the standard two-way fixed effects estimator performs very badly in all DGPs because the time trends are correlated with covariates. Thus, our tables omit it. } \textbf{DML} represents the double/debiased machine learning ATT estimator
\citep{chernozhukov2017double,chang2020double}, with $\pi_0(x)$ estimated by random forest (RF) and $m_0(x)$ estimated by random forests,  neural nets (NN), or Gaussian process regression (GP).\footnote{Frequentist DiD estimators other than DML are implemented
using the R package \texttt{DRDID}; DML estimators use the R package
\texttt{DoubleML}.  Frequentist confidence intervals are computed using analytical (influence-function-based) standard errors.   We also experimented with DML estimators that use Logit to estimate $\pi_0(x)$ and find that they perform similarly to their counterparts using RF as the propensity score estimator in DGPs 1 to 3,  but with smaller bias.  In DGP 4,  however, they perform considerably worse than their RF-based counterparts,  exhibiting large bias and substantial undercoverage.  We therefore omit them in the tables to save space.} All tables in this section report the finite-sample bias, RMSE, coverage probability (CP), and average interval length (CIL) of 95\% credible/confidence intervals.
 
\begin{table}[H]
\centering
\caption{Simulation results under DGPs~1 to~4.  Sample size $n=1000$.}
\vskip.15cm
{\small 
\label{tab:simu_n1K}
\setlength{\tabcolsep}{3pt}
\renewcommand{\arraystretch}{0.75}
\begin{tabular}{l cccc| cccc}
\toprule
Method
  & \multicolumn{4}{c}{DGP 1}
  & \multicolumn{4}{c}{DGP 2} \\
\cmidrule(lr){2-5}\cmidrule(lr){6-9}
  & Bias & RMSE & CP & CIL
  & Bias & RMSE & CP & CIL \\
\midrule
OR Bayes
  & $-0.002$ & $0.104$ & $0.951$ & $0.406$
  & $-0.000$ & $0.105$ & $0.950$ & $0.407$ \\
DR Bayes$^{\text{Logit}}$
  & $-0.003$ & $0.111$ & $0.955$ & $0.447$
  & $-0.000$ & $0.110$ & $0.954$ & $0.433$ \\
DR Bayes$^{\text{RF}}$
  & $-0.002$ & $0.147$ & $0.965$ & $0.596$
  & $-0.001$ & $0.148$ & $0.961$ & $0.611$ \\
  DR Bayes$^{\text{Logit}}$(FS)
  & $-0.003$ & $0.108$ & $0.945$ & $0.420$
  & $0.000$ & $0.108$ & $0.942$ & $0.405$ \\
DR Bayes$^{\text{RF}}$(FS)
  & $-0.002$ & $0.147$ & $0.971$ & $0.591$
  & $-0.001$ & $0.143$ & $0.961$ & $0.598$ \\
\cline{1-9}
DR
  & $-0.003$ & $0.107$ & $0.941$ & $0.407$
  & $-0.000$ & $0.107$ & $0.944$ & $0.403$ \\
OR
  & $-0.002$ & $0.102$ & $0.947$ & $0.395$
  & $-0.001$ & $0.103$ & $0.938$ & $0.393$ \\
IPW
  & $-0.040$ & $1.140$ & $0.949$ & $4.287$
  & $-0.784$ & $1.244$ & $0.838$ & $3.620$ \\
DML$^{\text{RF-RF}}$
  & $-0.160$ & $0.719$ & $0.999$ & $4.417$
  & $-0.480$ & $0.941$ & $0.992$ & $4.846$ \\
DML$^{\text{RF-NN}}$
  & $-0.017$ & $0.153$ & $0.962$ & $0.641$
  & $-0.027$ & $0.197$ & $0.951$ & $0.737$ \\
DML$^{\text{RF-GP}}$
  & $-0.338$ & $0.946$ & $0.971$ & $4.679$
  & $-0.108$ & $1.022$ & $0.989$ & $5.688$ \\
\midrule
Method
  & \multicolumn{4}{c}{DGP 3}
  & \multicolumn{4}{c}{DGP 4} \\
\cmidrule(lr){2-5}\cmidrule(lr){6-9}
  & Bias & RMSE & CP & CIL
  & Bias & RMSE & CP & CIL \\
\midrule
OR Bayes
  & $-0.053$ & $0.408$ & $0.993$ & $2.257$
  & $-0.558$ & $0.745$ & $0.981$ & $3.242$ \\
DR Bayes$^{\text{Logit}}$
  & $-0.012$ & $0.779$ & $0.967$ & $3.313$
  & $-1.037$ & $1.355$ & $0.777$ & $3.390$ \\
DR Bayes$^{\text{RF}}$
  & $-0.001$ & $1.068$ & $0.972$ & $3.951$
  & $-0.688$ & $1.441$ & $0.915$ & $4.195$ \\
  DR Bayes$^{\text{Logit}}$(FS)
  & $-0.057$ & $0.409$ & $0.967$ & $1.783$
  & $-0.553$ & $0.741$ & $0.832$ & $2.084$ \\
DR Bayes$^{\text{RF}}$(FS)
  & $-0.052$ & $0.437$ & $0.987$ & $2.327$
  & $-0.494$ & $0.705$ & $0.929$ & $2.629$ \\
\cline{1-9}
DR
  & $-0.060$ & $0.997$ & $0.944$ & $3.856$
  & $-2.506$ & $2.709$ & $0.288$ & $3.850$ \\
OR
  & $-1.386$ & $1.848$ & $0.811$ & $4.812$
  & $-5.144$ & $5.305$ & $0.018$ & $5.037$ \\
IPW
  & $-0.011$ & $1.439$ & $0.944$ & $5.499$
  & $-3.885$ & $4.182$ & $0.249$ & $5.733$ \\
DML$^{\text{RF-RF}}$
  & $-0.278$ & $0.831$ & $0.991$ & $4.470$
  & $-0.853$ & $1.204$ & $0.964$ & $5.029$ \\
DML$^{\text{RF-NN}}$
  & $-0.066$ & $0.822$ & $0.985$ & $3.924$
  & $-1.043$ & $1.463$ & $0.848$ & $4.555$ \\
DML$^{\text{RF-GP}}$
  & $-0.256$ & $1.011$ & $0.992$ & $5.313$
  & $-0.912$ & $1.597$ & $0.942$ & $7.274$ \\
\bottomrule
\end{tabular}}
\end{table}

According to Tables~\ref{tab:simu_n1K} and~\ref{tab:simu_n2K}, 
in DGPs~1 and~2, where the conditional mean $m_0$ is linear, both OR Bayes and
DR Bayes$^{\text{Logit}}$ achieve near-nominal coverage and short
intervals, performing comparably to the correctly specified parametric OR and DR estimators.
DR Bayes$^{\text{RF}}$ also attains correct coverage,  but yields intervals wider than its logit-based counterpart.  The IPW estimators exhibit large RMSEs/CILs and, for DGP~2, substantially
degraded coverage due to propensity score misspecification.
 DML variants with RF or GP for $m_0$ produce CILs considerably wider  than the Bayesian methods despite similar or lower coverage; DML with NN is more competitive and performs similarly to DR Bayes$^{\text{RF}}$.

\begin{table}[H]
\centering
\caption{Simulation results under DGPs~1 to~4.  Sample size $n=2000$.}
\vskip.15cm
{\small 
\label{tab:simu_n2K}
\setlength{\tabcolsep}{3pt}
\renewcommand{\arraystretch}{0.75}
\begin{tabular}{l cccc| cccc}
\toprule
Method
  & \multicolumn{4}{c}{DGP 1}
  & \multicolumn{4}{c}{DGP 2} \\
\cmidrule(lr){2-5}\cmidrule(lr){6-9}
  & Bias & RMSE & CP & CIL
  & Bias & RMSE & CP & CIL \\
\midrule
OR Bayes
  & $-0.004$ & $0.072$ & $0.949$ & $0.287$
  & $-0.002$ & $0.071$ & $0.956$ & $0.288$ \\
DR Bayes$^{\text{Logit}}$
  & $-0.004$ & $0.075$ & $0.948$ & $0.302$
  & $-0.002$ & $0.073$ & $0.952$ & $0.291$ \\
DR Bayes$^{\text{RF}}$
  & $-0.004$ & $0.106$ & $0.973$ & $0.405$
  & $-0.003$ & $0.097$ & $0.972$ & $0.414$ \\
  DR Bayes$^{\text{Logit}}$(FS)
  & $-0.004$ & $0.074$ & $0.947$ & $0.295$
  & $-0.002$ & $0.073$ & $0.949$ & $0.285$ \\
DR Bayes$^{\text{RF}}$(FS)
  & $-0.003$ & $0.098$ & $0.974$ & $0.408$
  & $-0.004$ & $0.097$ & $0.970$ & $0.416$ \\
\cline{1-9}
DR
  & $-0.004$ & $0.074$ & $0.949$ & $0.290$
  & $-0.002$ & $0.072$ & $0.948$ & $0.287$ \\
OR
  & $-0.003$ & $0.071$ & $0.951$ & $0.280$
  & $-0.002$ & $0.070$ & $0.953$ & $0.278$ \\
IPW
  & $-0.008$ & $0.821$ & $0.944$ & $3.067$
  & $-0.809$ & $1.045$ & $0.754$ & $2.551$ \\
DML$^{\text{RF-RF}}$
  & $-0.244$ & $0.482$ & $0.998$ & $2.913$
  & $-0.434$ & $0.648$ & $0.989$ & $3.276$ \\
DML$^{\text{RF-NN}}$
  & $-0.001$ & $0.088$ & $0.959$ & $0.351$
  & $-0.008$ & $0.102$ & $0.961$ & $0.400$ \\
DML$^{\text{RF-GP}}$
  & $-0.402$ & $0.621$ & $0.941$ & $2.554$
  & $-0.052$ & $0.531$ & $0.993$ & $3.173$ \\
\midrule
Method
  & \multicolumn{4}{c}{DGP 3}
  & \multicolumn{4}{c}{DGP 4} \\
\cmidrule(lr){2-5}\cmidrule(lr){6-9}
  & Bias & RMSE & CP & CIL
  & Bias & RMSE & CP & CIL \\
\midrule
OR Bayes
  & $-0.001$ & $0.234$ & $0.996$ & $1.393$
  & $-0.311$ & $0.436$ & $0.993$ & $2.360$ \\
DR Bayes$^{\text{Logit}}$
  & $-0.016$ & $0.370$ & $0.978$ & $1.759$
  & $-0.598$ & $0.723$ & $0.804$ & $1.912$ \\
DR Bayes$^{\text{RF}}$
  & $-0.013$ & $0.464$ & $0.989$ & $2.123$
  & $-0.412$ & $0.697$ & $0.937$ & $2.339$ \\
  DR Bayes$^{\text{Logit}}$(FS)
  & $-0.003$ & $0.234$ & $0.968$ & $1.036$
  & $-0.309$ & $0.434$ & $0.886$ & $1.341$ \\
DR Bayes$^{\text{RF}}$(FS) 
  &$-0.004$ & $0.242$ & $0.991$ & $1.302$
  & $-0.284$ & $0.416$ & $0.961$ & $1.599$ \\
\cline{1-9}
DR
  & $-0.030$ & $0.693$ & $0.959$ & $2.739$
  & $-2.482$ & $2.576$ & $0.053$ & $2.734$ \\
OR
  & $-1.347$ & $1.604$ & $0.657$ & $3.415$
  & $-5.149$ & $5.227$ & $0.000$ & $3.579$ \\
IPW
  & $-0.009$ & $1.006$ & $0.942$ & $3.860$
  & $-3.911$ & $4.045$ & $0.040$ & $4.014$ \\
DML$^{\text{RF-RF}}$
  & $-0.316$ & $0.601$ & $0.987$ & $2.923$
  & $-0.922$ & $1.067$ & $0.871$ & $3.193$ \\
DML$^{\text{RF-NN}}$
  & $-0.005$ & $0.492$ & $0.982$ & $2.322$
  & $-0.942$ & $1.126$ & $0.733$ & $2.854$ \\
DML$^{\text{RF-GP}}$
  & $-0.090$ & $0.589$ & $0.989$ & $3.056$
  & $-1.070$ & $1.325$ & $0.856$ & $4.256$ \\
\bottomrule
\end{tabular}}
\end{table}

In DGP~3 where $m_0$ is nonlinear but the index of propensity score $\Psi^{-1}(\pi_0)$ is linear,  Bayesian methods with and without DR adjustment continue to deliver desirable coverage,  though with larger CILs.  Parametric frequentist DR leads to a similar coverage performance,  with CIL comparable to DR Bayes$^{\text{RF}}$ and larger than OR Bayes and DR Bayes$^{\text{Logit}}$.  Parametric frequentist OR is markedly biased  and  undercovers,  confirming the cost of misspecifying $m_0$.  DML estimators lead to correct coverage.  Among them,  those using NN to estimate $m_0$ are more precise than those using RF- or GP-based estimators.
When it comes to DGP~4 where both nuisance functions are nonlinear,  parametric DR becomes inconsistent and undercovers severely. OR Bayes, which models $m_0$ nonparametrically, maintains high coverage.
DR Bayes$^{\text{Logit}}$ undercovers because the logistic linear propensity score model
is misspecified, yet it still substantially outperforms parametric DR.
Using the more flexible random forest for $\pi_0$, DR Bayes$^{\text{RF}}$
restores the coverage especially when $n = 2000$.
Among DML variants,  DML$^{\text{RF-RF}}$ and DML$^{\text{RF-GP}}$  are competitive in coverage
but they (especially the latter) yield substantially wider intervals than Bayesian methods.  
Tables~\ref{tab:simu_n1K} and~\ref{tab:simu_n2K} also show that DR Bayes with and without sample-split deliver comparable results. 

\begin{table}[H]
\centering
\caption{Simulation results for DR~Bayes using a frequentist pilot
  estimator $\widehat m(x)$.  Sample size $n=1000$.}
\vskip.15cm
{\small 
\label{tab:simu_freq_M}
\setlength{\tabcolsep}{3pt}
\renewcommand{\arraystretch}{0.75}
\begin{tabular}{ll cccc| cccc}
\toprule
Method & $\widehat m(x)$
  & \multicolumn{4}{c}{DGP 1}
  & \multicolumn{4}{c}{DGP 2} \\
\cmidrule(lr){3-6}\cmidrule(lr){7-10}
  & & Bias & RMSE & CP & CIL
    & Bias & RMSE & CP & CIL \\
\midrule
DR Bayes$^{\text{Logit}}$ & RF
  & $-0.002$ & $0.116$ & $0.942$ & $0.447$
  & $-0.000$ & $0.117$ & $0.933$ & $0.433$ \\
DR Bayes$^{\text{Logit}}$ & NN
  & $-0.004$ & $0.122$ & $0.927$ & $0.447$
  & $-0.000$ & $0.120$ & $0.926$ & $0.433$ \\
DR Bayes$^{\text{RF}}$    & RF
  & $-0.001$ & $0.161$ & $0.941$ & $0.596$
  & $-0.002$ & $0.162$ & $0.943$ & $0.611$ \\
DR Bayes$^{\text{RF}}$    & NN
  & $-0.000$ & $0.168$ & $0.939$ & $0.597$
  & $-0.003$ & $0.164$ & $0.949$ & $0.611$ \\
\midrule
Method & $\widehat m(x)$
  & \multicolumn{4}{c}{DGP 3}
  & \multicolumn{4}{c}{DGP 4} \\
\cmidrule(lr){3-6}\cmidrule(lr){7-10}
  & & Bias & RMSE & CP & CIL
    & Bias & RMSE & CP & CIL \\
\midrule
DR Bayes$^{\text{Logit}}$ & RF
  & $-0.061$ & $0.812$ & $0.950$ & $3.351$
  & $-1.358$ & $1.688$ & $0.639$ & $3.489$ \\
DR Bayes$^{\text{Logit}}$ & NN
  & $-0.029$ & $0.876$ & $0.943$ & $3.352$
  & $-0.563$ & $1.145$ & $0.887$ & $3.491$ \\
DR Bayes$^{\text{RF}}$    & RF
  & $-0.173$ & $1.047$ & $0.959$ & $4.012$
  & $-0.120$ & $1.500$ & $0.888$ & $4.229$ \\
DR Bayes$^{\text{RF}}$    & NN
  & $-0.047$ & $1.166$ & $0.935$ & $4.013$
  & $-0.260$ & $1.568$ & $0.865$ & $4.223$ \\
\bottomrule
\end{tabular}}
\end{table}

The advantageous performance of DR Bayes relative to DML estimators does not appear to be driven by the choice of pilot estimators in the prior and posterior adjustments.  As Tables~\ref{tab:simu_n1K} and~\ref{tab:simu_n2K} show,  DR Bayes$^{\text{RF}}$ has generally smaller bias and RMSE,  as well as shorter confidence intervals than DML$^{\text{RF-GP}}$,  despite the fact that both use random forest as the point estimator for $\pi_0$ and GP posterior mean as the point estimator for $m_0$.   On the other hand,  although we propose to use the posterior mean as the pilot estimator $\widehat m$ in the posterior correction (\ref{recentering_term}) or (\ref{post_adj_cf}),  we can replace it with some frequentist estimators.  As Table~\ref{tab:simu_freq_M} reveals, DR Bayes maintains its good performance if we instead use random forests or neural net based pilot estimator $\widehat m$ in the posterior correction.  Therefore, the strong performance of DR Bayes is more likely a result of the Bayesian inferential structure which quantifies uncertainty through the posterior.

In Tables~\ref{tab:simu_n1K} and~\ref{tab:simu_n2K}, OR
Bayes performs better than DR Bayes under DGPs~1--4. To examine their performance in a higher-dimensional setting,  we consider DGP~5 which contains eight covariates.   Let  $\left(V_{1}, \dots, V_{8}\right)^\top \sim \mathcal N\left(0, I_{8}\right)$ and $X_{j} = (\tilde{X}_{j}-\mathbb{E}[\tilde{X}_{j}])/\sqrt{\text{Var}[\tilde{X}_{j}]}$, where $\tilde{X}_{1+4k}=\exp(0.5V_{1+4k})$, $\tilde{X}_{2+4k}=10 + V_{2+4k}/(1+\exp(0.5V_{1+4k}))$,   $\tilde{X}_{3+4k}= (0.6 + V_{1+4k}V_{3+4k}/25)^{3}$ and $\tilde{X}_{4+4k}= (20+V_{2+4k}+V_{4+4k})^{2}$, for $k=0,1$.   
Variables $D$ and $Y_{t}$ are generated similarly to DGP~4,  except that  $Y_{t}$ now depends on eight covariates:
\begin{align*}
& D\sim  \text{Bernoulli}\left(\Psi\left[g(V_{1},\dots, V_{4})\right]\right),\\
&Y_{1} = \tilde \mu(V_{1},\dots,  V_{8}) + D\tilde \mu(V_{1},\dots,  V_{8}) +\alpha+ \epsilon_{1},\\
&Y_{2}(d) =  2\tilde \mu(V_{1},\dots,  V_{8})+ D\tilde \mu(V_{1},\dots,  V_{8}) +\alpha+  \epsilon_{2}(d), \text{for} ~ d=0,1,\\
& \text{where} ~ \tilde \mu(w_1, \dots, w_8)= 0.5\mu(w_1, w_2, w_3, w_4) + 0.5\mu(w_5, w_6, w_7, w_8).
\end{align*}
In Table \ref{tab:simu_more_X},  DR Bayes (true PS) plugs the true propensity score into the Riesz representer estimator $\hat\gamma$ and uses the full sample to draw the posterior of $m_0$.  This infeasible estimator sets the highest standard DR Bayes can potentially achieve under DGP~5.
As Table \ref{tab:simu_more_X} shows, OR Bayes exhibits undercoverage in this higher-dimensional setting.  The prior and posterior adjustments,  especially when the true propensity score is available,  restore the empirical coverage rate towards the nominal level.
Estimated propensity scores also yield meaningful improvement,
demonstrating the robustness of DR Bayes with respect to model complexity.

\begin{table}[H]
\centering
\caption{Simulation results under DGP~5 (more covariates)}
\vskip.15cm
{\small 
\label{tab:simu_more_X}
\setlength{\tabcolsep}{3pt}
\renewcommand{\arraystretch}{0.75}
\begin{tabular}{l cccc| cccc}
\toprule
Method 
& \multicolumn{4}{c}{$n=1000$} 
& \multicolumn{4}{c}{$n=2000$} \\
\cmidrule(lr){2-5}\cmidrule(lr){6-9}
& Bias & RMSE & CP & CIL 
& Bias & RMSE & CP & CIL \\
\midrule

OR Bayes
  & $-0.747$ & $0.984$ & $0.801$ & $2.639$
  & $-0.253$ & $0.312$ & $0.732$ & $0.752$ \\
DR Bayes (true PS)
  & $-0.191$ & $0.682$ & $0.943$ & $2.650$
  & $-0.040$ & $0.190$ & $0.956$ & $0.758$ \\
DR Bayes$^{\text{RF}}$
  & $-0.783$ & $1.063$ & $0.868$ & $3.174$
  & $-0.217$ & $0.284$ & $0.878$ & $0.863$ \\
DR Bayes$^{\text{RF}}$(FS)
  & $-0.541$ &  $0.836$ &   $ 0.879$ &   $ 2.609$ 
  &  $-0.171$ &  $0.247$ &    $0.859$ &   $ 0.753$ \\
\cline{1-9}
DR
  & $-1.295$ & $1.508$ & $0.589$ & $2.951$
  & $-0.503$ & $0.554$ & $0.401$ & $0.877$ \\
OR
  & $-2.618$ & $2.752$ & $0.129$ & $3.345$
  & $-1.037$ & $1.068$ & $0.020$ & $0.980$ \\
IPW
  & $-1.954$ & $2.225$ & $0.534$ & $4.107$
  & $-0.789$ & $0.844$ & $0.259$ & $1.174$ \\
DML$^{\text{RF-RF}}$
  & $-1.063$ & $1.323$ & $0.895$ & $4.009$
  & $-0.414$ & $0.471$ & $0.742$ & $1.105$ \\
DML$^{\text{RF-NN}}$
  & $-1.139$ & $1.410$ & $0.772$ & $3.622$
  & $-0.429$ & $0.484$ & $0.599$ & $0.987$ \\
DML$^{\text{RF-GP}}$
  & $-1.479$ & $1.679$ & $0.646$ & $3.630$
  & $-0.513$ & $0.555$ & $0.409$ & $0.920$ \\
\bottomrule
\end{tabular}}
\end{table}

Appendix \ref{appendix:simu} in the supplementary materials provides additional simulation evidence.  Our Bayesian procedures behave differently from applying Bayesian bootstrap to frequentist estimators.
The finite-sample performance of DR Bayes is largely insensitive to the choice of $\varsigma_n$, the tuning parameter governing the strength of the prior adjustment. The Bayesian procedures also demonstrate robustness to non-normality and heteroskedasticity in the error terms. Moreover, OR Bayes continues to perform well as the overlap condition approaches violation. While the coverage performance of DR Bayes deteriorates in DGP~4 under weaker overlap, it still compares favorably with the frequentist alternatives.

\subsection{Empirical Application}

We apply Bayesian DiD methods to the well-known minimum wage study of  \cite{card1994minimum}.\footnote{The data are available at \href{https://davidcard.berkeley.edu/data_sets}{https://davidcard.berkeley.edu/data\_sets}. }
The outcome variables are full-time equivalent (FTE) employment at fast-food stores in New Jersey and Pennsylvania before and after New Jersey’s minimum wage increase. The treatment variable equals one for fast-food stores in New Jersey and zero otherwise. The set of covariates $X$ includes seven store characteristics measured prior to the policy change: an indicator for company ownership, three chain type dummies, hours open per weekday, the prices of medium soda and a main course, yielding a sample of 
338 observations.
We investigate whether the findings of \cite{card1994minimum} are robust to additional covariates and more flexible model specifications. Since the data contains a non-negligible proportion of units with estimated propensity scores very close to $1$,  we follow \cite{crump2009dealing} in discarding observations with estimated propensity scores outside the range $(0, 1 - t]$ and set the trimming threshold $t=0.025$ and $0.01$.  Based on our simulation results,  we implement the DR Bayes using the RF-based propensity score estimator,  which is robust to nonlinear models. For frequentist alternatives, we consider the parametric DR,  OR,  IPW,  nonparametric DML$^{\text{RF-RF}}$ and DML$^{\text{RF-NN}}$ in the simulation exercises, as well as the standard two-way fixed effects (TWFE) estimator.

\begin{figure}[H]\caption{ATT estimates and $95\%$ credible/confidence intervals for the impact of a minimum wage increase on employment.  Trimming thresholds for the estimated propensity score are $t=0.025$ (left) and $t = 0.01$ (right). }\label{fig:mw_att}
  \includegraphics[width=1\textwidth]{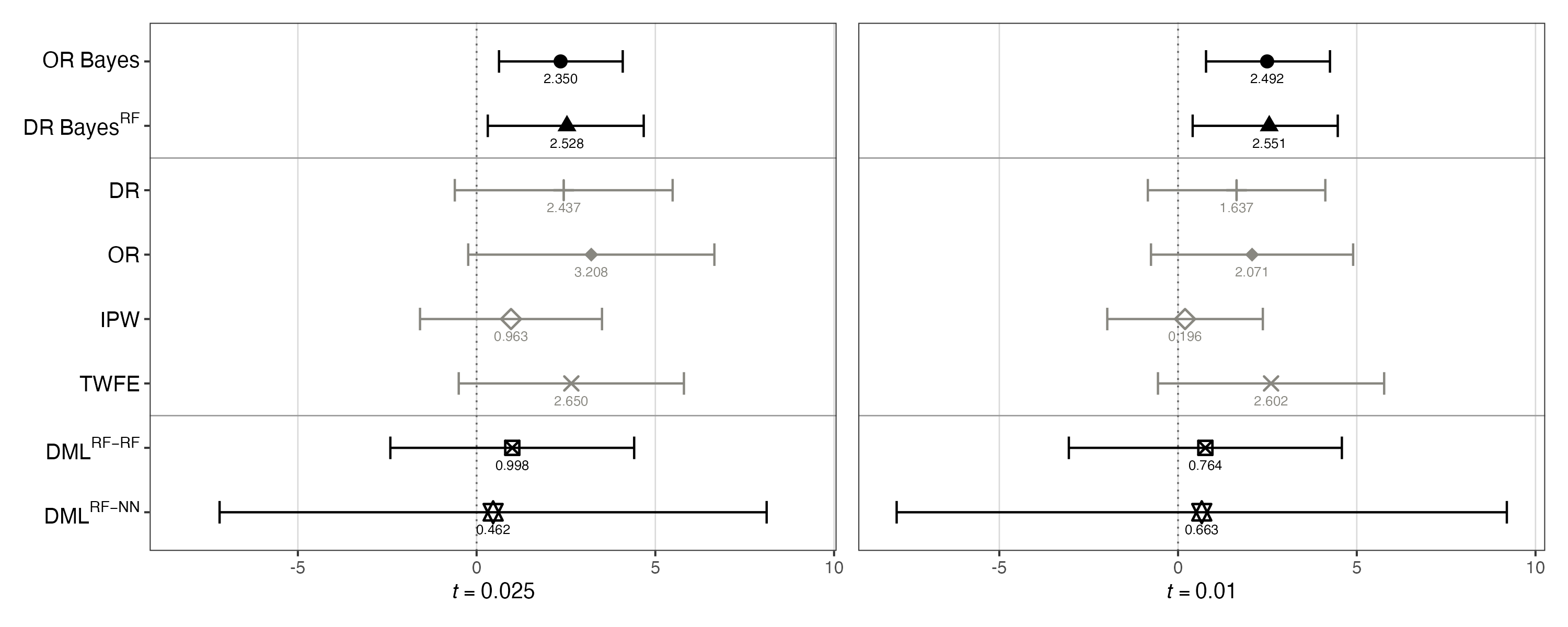}
\end{figure}

Figure \ref{fig:mw_att} presents the ATT estimates and $95\%$ credible/confidence intervals. All methods produce positive point estimates, and are either statistically insignificant or modestly significant at the $95\%$ level. DR Bayes$^{\text{RF}}$ yields point estimates around $2.5$ and credible intervals marginally on the right hand side of zero.  Bayesian credible intervals are shorter  than frequentist confidence intervals except for the IPW estimator which leads to smaller estimates than most other methods. These results are consistent with \cite{card1994minimum}'s finding of no evidence that an increase in the minimum wage reduced employment. The Bayesian estimates are also quantitatively close to \cite{card1994minimum}'s results: a canonical DiD estimate of $2.76$ (standard error $1.36$),  and a regression-adjusted estimate of $2.30$ (standard error $1.20$) when 
controlling for the chain and ownership dummies. Therefore,  our Bayesian DiD methods confirm the robustness of findings in the classic literature against model specifications.

\section{Extensions}\label{sec:extend}
We provide extensions to the canonical DiD panel data setup and show that our Bayesian DiD methods, described in Section \ref{sec:method_outline}, can be conveniently extended to cases such as clustered data and multiple periods with staggered entry.

\subsection{Clustered Data}\label{cluster}
This section considers Bayesian inference at a more aggregate level than the cross-sectional units $i$'s. We focus on the case where there are many clusters and each cluster contains a finite number of units.
Suppose that the observations $Z_{i}$, $i=1,\dots,n$ consist of $J$ mutually independent clusters, indexed $j=1,\dots,J$ where the $j$-th cluster has $n_j$ observations, with $\sum_{j=1}^Jn_j=n$. Using the double-index, we write the clustered data as $Z_{jk}=(\Delta Y_{jk},D_{jk},X_{jk}^\top)^\top$ for clusters $j=1,\dots,J$ and units $k=1,\dots,n_j$ in the $j$-th cluster. Algorithm \ref{algorithm_1_c} modifies Algorithm \ref{algorithm_1} to account for the clustering structure. Analogously, the DR Bayes in Algorithm \ref{algorithm_2} can also be modified for clustered data.\footnote{For clustered data, when DR Bayes is implemented with cross-fitting, the partition of folds is conducted at the cluster level.}
\begin{algorithm}[H]
\caption{Bayesian Procedure for Clustered Data}
\label{algorithm_1_c}
\begin{algorithmic}
    \STATE \textbf{Input:} Data $Z_{jk}$ for $j=1,\dots,J$ and $k=1,\dots,n_j$,  and number of posterior draws $S$.
    \STATE \textbf{Prior Specification:}  Set the prior for $m_\eta(x) =W^m(x)$, where $W^m$ is a Gaussian process.

\textbf{Posterior Computation:} 
    \FORNN{$s=1,\ldots, S$}
        \STATE 
  (a) Generate the $s$-th draw of the posterior of $\left(m_\eta(X_{jk}),j=1,\dots,J, k=1,\dots,n_j\right)$ using the Gaussian process prior and the data from the control arm; denote it as $\left(m^s_\eta(X_{jk}),j=1,\dots,J, k=1,\dots,n_j\right)$.
        \STATE (b) Draw Bayesian bootstrap weights: Let $V^s_{j}=e^s_j/\sum_{j=1}^J e^s_j$ where $e_j^s \stackrel{i.i.d.}{\sim} \textup{Exp}(1)$, for $j=1,\dots,J$. Set $M_{jk}^s=V_j^s/n_{j}$.\hskip-3cm
        \STATE (c) Calculate a posterior draw for the ATT: 
         \begin{equation}\label{NpBayes_c}
\tau_\eta^s=\frac{\sum_{j=1}^J\sum_{k=1}^{n_j} M_{jk}^sD_{jk} \big(\Delta Y_{jk}-m_\eta^s(X_{jk})\big)}{\sum_{j=1}^J\sum_{k=1}^{n_j} M_{jk}^s D_{jk}}.
\end{equation}
    \ENDFOR
    
 \STATE \textbf{Output: $\{\tau_\eta^{s}:s=1,\ldots,S\}$} 
\end{algorithmic}
\end{algorithm}
The main change relative to Algorithm \ref{algorithm_1} occurs in the Bayesian bootstrap weights $M_{jk}^s$, which take the same value for all units within cluster $j$ and thus preserve the within-cluster dependence. If each cluster contains the same number of units $n_j=n/J$ and the observed vectors $(D_{jk}\Delta Y_{jk},D_{jk},X_{jk})_{k=1}^{n_j}$ are \textit{i.i.d.} across $j$, one can view the Bayesian bootstrap weights in Algorithm \ref{algorithm_1_c} as arising from a DP prior on the cluster-level observed vector with zero base measure. 
To accommodate possibly different cluster sizes, our procedure normalizes by $n_j$ when averaging the conditional mean function over units within each cluster.

For posterior draws of the conditional mean function, we adopt a misspecified 
\textit{i.i.d.} Gaussian working likelihood that ignores the clustering structure, in 
the spirit of the framework developed in Section~\ref{sec:misspecify}. In particular, 
we construct the quasi-posterior from a product Gaussian working likelihood that 
treats $\Delta Y_{jk}(0)\mid D_{jk}=0,X_{jk}=x$ as normally distributed with mean 
$m_{\eta}(x)$ and unit variance, independently over all 
$j=1,\ldots,J$ and $k=1,\ldots,n_j$. 
This working specification is justified by Lemma~\ref{lemma:MisRateCluster}, which 
establishes posterior consistency for the conditional mean function while allowing 
for intra-cluster dependence, provided that clusters are independent.

We present Monte Carlo simulation evidence to support our proposal. We modify DGPs~1 to~4 by making the outcome error terms $(\epsilon_{1}, \epsilon_{2}(0), \epsilon_{2}(1))$ equi-correlated within clusters \citep{moulton1986random}, where each cluster consists of $10$ units.  Specifically,  the error $\epsilon_{1}$ for $n$ units $(\epsilon_{11}, \dots, \epsilon_{n1})^\top \sim \mathcal{N}(0, \Sigma_n)$, where $\Sigma_n = \mathrm{diag}\left\{\Sigma_1, \dots, \Sigma_J\right\}$ is block-diagonal with $J = n/10$ blocks, each of the form $\Sigma_j = (1-\rho)I_{10} + \rho\mathbf{1}_{10}\mathbf{1}_{10}^\top$, where $\rho = 0.5$ is the intra-cluster correlation and $\mathbf{1}_{10}$ denotes a length-10 vector of ones. The post-treatment potential outcome errors $(\epsilon_{12}(d), \dots, \epsilon_{n2}(d))^\top \sim \mathcal{N}(0, \Sigma_n)$ for $d = 0, 1$ follow the same block-diagonal structure.
Table \ref{tab:simu_n1K_cl} reports finite-sample performance,
comparing the proposed Bayesian methods modified for clustered data, the
cluster-robust DR estimator based on multiplier bootstrap, and DML estimators
with cluster-robust standard errors.\footnote{The cluster-robust DR is implemented using the R package \texttt{did}. DML standard errors are obtained
via the sandwich variance estimator in \texttt{DoubleML}, which aggregates
orthogonal score contributions at the cluster level.}
All methods in Table \ref{tab:simu_n1K_cl}  achieve satisfactory coverage in DGPs~1 to~3 under clustered
errors.  In DGP 4 with clustered errors,  the parametric DR once again exhibits large bias and low coverage probability. OR Bayes maintains a high coverage without relying on a propensity
score model.
DR Bayes$^{\text{Logit}}$ undercovers due to its misspecified
logit model for $\pi_0$, while DR Bayes$^{\text{RF}}$, using the more flexible
random forest for $\pi_0$,  improves the coverage.  
Overall, the relative performance between Bayesian and frequentist methods in Table~\ref{tab:simu_n1K_cl} closely
mirrors that in Table~\ref{tab:simu_n1K}. This demonstrates that the proposed cluster-robust modifications effectively account for within-cluster dependence while preserving the desirable properties of the Bayesian methods.

\begin{table}[H]
\centering
\caption{Simulation results under DGPs 1 to 4 with clustered error terms. Sample size $n=1000$.}
\vskip.15cm
{\small 
\label{tab:simu_n1K_cl}
\setlength{\tabcolsep}{3pt}
\renewcommand{\arraystretch}{0.75}

\begin{tabular}{l cccc| cccc}
\toprule
Method
& \multicolumn{4}{c}{DGP 1}
& \multicolumn{4}{c}{DGP 2} \\
\cmidrule(lr){2-5}\cmidrule(lr){6-9}
& Bias & RMSE & CP & CIL
& Bias & RMSE & CP & CIL \\
\midrule
OR Bayes
  & $-0.002$ & $0.141$ & $0.948$ & $0.545$
  & $-0.000$ & $0.143$ & $0.943$ & $0.546$ \\
DR Bayes$^{\text{Logit}}$
  & $-0.004$ & $0.145$ & $0.949$ & $0.559$
  & $-0.003$ & $0.149$ & $0.932$ & $0.548$ \\
DR Bayes$^{\text{RF}}$
  & $-0.002$ & $0.175$ & $0.951$ & $0.686$
  & $-0.003$ & $0.181$ & $0.945$ & $0.705$ \\
\cline{1-9}
DR
  & $-0.003$ & $0.144$ & $0.942$ & $0.555$
  & $-0.001$ & $0.145$ & $0.939$ & $0.549$ \\
DML$^{\text{RF-RF}}$
  & $-0.160$ & $0.701$ & $0.997$ & $4.451$
  & $-0.549$ & $0.969$ & $0.982$ & $4.916$ \\
DML$^{\text{RF-NN}}$
  & $-0.017$ & $0.180$ & $0.960$ & $0.753$
  & $-0.022$ & $0.198$ & $0.959$ & $0.844$ \\
DML$^{\text{RF-GP}}$
  & $-0.379$ & $0.961$ & $0.968$ & $4.681$
  & $-0.196$ & $1.049$ & $0.995$ & $5.844$ \\
\midrule
Method
& \multicolumn{4}{c}{DGP 3}
& \multicolumn{4}{c}{DGP 4} \\
\cmidrule(lr){2-5}\cmidrule(lr){6-9}
& Bias & RMSE & CP & CIL
& Bias & RMSE & CP & CIL \\
\midrule
OR Bayes
  & $-0.030$ & $0.416$ & $0.990$ & $2.288$
  & $-0.534$ & $0.733$ & $0.976$ & $3.220$ \\
DR Bayes$^{\text{Logit}}$
  & $-0.026$ & $0.810$ & $0.949$ & $3.244$
  & $-0.997$ & $1.304$ & $0.804$ & $3.309$ \\
DR Bayes$^{\text{RF}}$
  & $-0.003$ & $1.023$ & $0.965$ & $3.911$
  & $-0.734$ & $1.431$ & $0.887$ & $4.166$ \\
\cline{1-9}
DR
  & $-0.004$ & $1.202$ & $0.942$ & $4.645$
  & $-3.149$ & $3.426$ & $0.325$ & $5.121$ \\
DML$^{\text{RF-RF}}$
  & $-0.312$ & $0.870$ & $0.985$ & $4.481$
  & $-0.811$ & $1.222$ & $0.961$ & $5.086$ \\
DML$^{\text{RF-NN}}$
  & $-0.026$ & $0.806$ & $0.977$ & $3.870$
  & $-1.028$ & $1.439$ & $0.840$ & $4.508$ \\
DML$^{\text{RF-GP}}$
  & $-0.266$ & $1.013$ & $0.986$ & $5.299$
  & $-0.856$ & $1.642$ & $0.937$ & $7.348$ \\
\bottomrule
\end{tabular}}
\end{table}

\subsection{Multiple Periods and Staggered Entry}\label{sec:staggered}
The Bayesian DiD methods described in Section \ref{sec:method_outline} can be conveniently extended to the cases with multiple periods and staggered intervention \citep{de2020two,callaway2021difference,sun2021estimating,borusyak2024revisiting}. The related literature focuses on the identification of disaggregated causal parameters and some proper aggregation of these parameters. This section extends our Bayesian method for inference on the disaggregated ATT, specifically the group-time ATT proposed by \cite{callaway2021difference}. 

 Suppose the available panel data consists of $T$ periods indexed by $t=1,\dots,T$ and the earliest treatment intervention occurs at period $S$. We assume that the treatment intervention remains once a unit gets treated. As a result, the entire path of treatment assignment for each unit can be summarized by his/her first treated period (cohort), denoted by the cohort variable $G_i\in\{S,\dots,T,\infty\}$, where $G_i=\infty$ means the unit $i$ never gets treated. Let the cohort indicators $D_{ig}$ denote whether unit $i$ first receives treatment in period $g \in \{S, \ldots, T, \infty\}$, where $D_{i\infty}=1$ indicates that unit $i$ never receives treatment.
 
  We assume that never-treated units exist. The potential outcomes depend on cohorts and thus are denoted as $Y_{it}(g)$ for $g\in\{S,\dots,T\}$ and $Y_{it}(0)$ for $G_i=\infty$. Obviously, $\sum_{g=S}^T  D_{ig} + D_{i\infty}=1$. The realized outcome for unit $i$ at time $t$ is 
 $ Y_{it}=Y_{it}(0)+\sum_{g=S}^{T}D_{ig}\left(Y_{it}(g)-Y_{it}(0)\right)$.

We focus on the analysis of treatment effect heterogeneity by allowing the ATT to vary with the cohort $g$ ($g \neq \infty$) and the time period $t \geq g$:
\begin{align*}
\tau_{0}^{g,t}=\mathbb E_0[Y_t(g)-Y_t(0)|D_g=1],~\text{for}~ g=S,\dots,T~\text{and}~ t=g,\dots,T.
\end{align*}
The researcher observes an \textit{i.i.d.} sample of $\left\{(Y_{i1},\dots,Y_{iT}, D_{iS},\dots,D_{iT},D_{i\infty}, X_i^{\top})\right\}_{i=1}^n$, where $X_i$ is a vector of pre-treatment covariates as in Section \ref{sec:setup}.

Applying the identification strategy in \cite{callaway2021difference}, the ATT parameters $\tau_0^{g,t}$ for $g\in\mathcal G:=\{S,\dots ,T\}$ and $t=g,\dots,T$ can be identified under Assumption \ref{Ass:DiD_stagger} below.\footnote{\cite{callaway2021difference} propose two identification strategies, depending on whether the parallel trend assumption is imposed on the never-treated cohort or ``Not-Yet-Treated" cohorts. Here we consider the former version.} We only consider the just-identified situation as in \cite{callaway2021difference}. In principle, it is desirable to make use of the available information from the pre-treatment periods under strengthened parallel trends. This over-identified case has recently been examined by \cite{CSX2025Efficient}, where the proposed efficient estimation procedure explores the sequential conditional moment restrictions. Such a modeling framework has not yet been developed from a Bayesian perspective, and we leave it as an important future topic.

For the identification of the ATT, we follow the setup by \cite{callaway2021difference} and impose the following conditions, which correspond to their Assumptions 3, 4, and 6 (in their $\delta = 0$ case).

\begin{assumption}\label{Ass:DiD_stagger}
 For all $x$ in the support of $F_X$ and $g\in\mathcal G$ we have:\\
 		 (i)
 		 $\mathbb{E}_0\left[Y_t(g)\mid D_g=1, X=x\right] = \mathbb{E}_0\left[Y_t(0)\mid D_g=1, X=x\right]$ for all $t\in \{1,\dots,g-1\}$,\\
 (ii) 		 $\mathbb{E}_0\left[Y_t(0)-Y_1(0) \mid D_g=1, X=x\right]=\mathbb{E}_0\left[Y_t(0)-Y_1(0) \mid D_{\infty}=1, X=x\right]$ for all $t\in\{g,\ldots,T\}$,\\
 (iii) 
 $P_0\left(D_g=1\right) > \epsilon$
 		   and 
 		   $P_0\left(D_g=1 \mid D_g + D_{\infty}=1,  X=x \right) \leq 1-\epsilon$
 		  for some $\epsilon>0$.
 		  \end{assumption}
Assumption \ref{Ass:DiD_stagger}(i) is a ``no anticipation" assumption, Assumption \ref{Ass:DiD_stagger}(ii) is a conditional parallel trend assumption based on the never-treated cohort, and Assumption \ref{Ass:DiD_stagger}(iii) is an overlap restriction.
Under Assumption \ref{Ass:DiD_stagger}, \cite{callaway2021difference}  show that the ATT in the staggered entry case is identified by
\begin{equation}\label{att_0_stagger}
	\tau_{0}^{g,t}
	=\E_0\left[\Delta_g Y_t - m_{0}^{g,t}(X)\mid D_{g}=1\right],~\text{for}~ g\in\mathcal G~\text{and}~ t=g,\dots,T,
\end{equation}
where the difference operator $\Delta_g$ is defined by $\Delta_g Y_t := Y_t - Y_{g-1}$
and the conditional mean function $m_0^{g,t}(x):=  \mathbb{E}_0\left[\Delta_g Y_t\mid D_{\infty}=1, X=x\right]$.

The identification result in (\ref{att_0_stagger}) uses the cohort $g$ (i.e., $D_{g}=1$) as the treated group and the ``never treated" cohort ($D_{\infty}=1$)  as the control group. Using the transformed cross-sectional data $\left(\Delta_g Y_{it}, D_{ig}, X_{i} \right)$ for $i=1,\dots, n$ 
and following the notation in Section \ref{sec:bayes_frame}, we can write ATT for a given pair $(g,t)$ under a family of probability distributions $\{P_\eta:\eta\in\mathcal H\}$ as 
\begin{equation}\label{att_estimand_staggered}
	\tau_{\eta}^{g,t}:=\frac{\mathbb{E}_{\eta}[D_g\Delta_g Y_t-D_g m_{\eta}^{g,t}(X)]}{\mathbb{E}_{\eta}[D_g]},
\end{equation}
where $\mathbb{E}_{\eta}$ denotes the expectation with respect to the distribution of $\left(\Delta_g Y_t , D_g, X\right)$.
The Bayesian DiD procedures in Section \ref{sec:method_outline} can be applied in the staggered DiD case to obtain the posterior draws $\left\{(\tau_{\eta}^{g,t})^s: s=1,\ldots,S \right\}$. Specifically, this can be achieved by replacing $\Delta Y_i$, $D_i$, $m_{\eta}(\cdot)$, $\bar\pi_{\eta}$, and $\pi_{\eta}(\cdot)$ in Algorithm \ref{algorithm_1} or \ref{algorithm_2} by $\Delta_g Y_{it}$, $D_{ig}$, $ m_{\eta}^{g,t}(\cdot)$, $\bar\pi_{\eta}^g:=\mathbb{E}_{\eta}[D_g]$ and $\pi_{\eta}^g(\cdot):=P_\eta\left(D_g=1 \mid D_g + D_{\infty}=1,  X=\cdot \right)$, respectively, as defined in this section. 

The first resulting Bayesian estimator is denoted by $\tau_\eta^{g,t}$, while the second, double-robust Bayesian method is denoted by $\check{\tau}_\eta^{g,t}$ for a cohort $g \in \mathcal{G}$. The next result is an immediate implication of Theorem \ref{BvM:thm:standard} and Theorem \ref{thm:Debias}, and its proof is thus omitted.
\begin{corollary}\label{cor:stag}
Let Assumption \ref{Ass:DiD_stagger} hold, and suppose that for any $g \in \mathcal{G}$:
\begin{itemize}
    \item[(i)] If Assumptions \ref{Assump:NonDRRate}--\ref{Assump:NonDRPS} hold under the $g$--specific components, i.e., $(\Delta Y_i, D_i, m_{\eta}(\cdot), \bar\pi_{\eta}, \pi_{\eta}(\cdot))$ are replaced by $(\Delta_g Y_{it}, D_{ig}, m_{\eta}^{g,t}(\cdot), \bar\pi_{\eta}^g, \pi_{\eta}^g(\cdot))$, then the Bayesian method $\tau_\eta^{g,t}$ satisfies the BvM result in Theorem \ref{BvM:thm:standard}.
    \item[(ii)] If Assumptions \ref{Assump:NonDRSE}(i), \ref{Assump:Rate}, \ref{Assump:Donsker}, and \ref{Assump:Prior} hold under the $g$--specific components, then the doubly robust Bayesian method $\check{\tau}_\eta^{g,t}$ satisfies the BvM result in Theorem \ref{thm:Debias}. 
\end{itemize}
\end{corollary}
Corollary \ref{cor:stag} pertains to inference on cohort-specific ATTs and establishes BvM results for our two Bayesian methods, employing either standard Gaussian process priors or prior/posterior adjustments via the cohort-specific propensity score. We note that extending this framework to aggregate ATTs is highly non-trivial, as it requires a joint modeling of outcome variables across different cohorts and time periods. Hence, distinct prior and likelihood specifications in the Bayesian methodology, as well as prior/posterior adjustments of the doubly robust version, are needed. A thorough investigation is therefore left for future research.

\section{Conclusion}
This paper introduces new semiparametric Bayesian procedures that satisfy the Bernstein-von Mises results in the DiD setup. Our first proposal provides a Bayesian analog to the outcome regression in \cite{heckman1997matching}. Through simulations, we show that it performs well in models that are not overly complex and, since no propensity score specification is required, it is not sensitive to weaker overlap. Our second, doubly robust proposal incorporates prior/posterior corrections based on estimated propensity scores. In simulations it works well for complex models, i.e., when the number of covariates is large. Overall, both Bayesian methods exhibit remarkable finite-sample performance, while adapting to the functional form of the conditional mean function. Although the paper mainly focuses on the canonical DiD panel-data case, it also discusses extensions to clustered data and staggered interventions.

\bibliographystyle{econometrica}
\bibliography{references}

\appendix

\let\originalclearpage\clearpage
\let\clearpage\relax
 \section{Proofs of Main Results}\label{appendix:main:proofs}
In the Appendix, $C>0$ denotes a generic constant, whose value might change line by line. We introduce additional subscripts when there are multiple constant terms in the same display. We also show in the Supplementary Appendix \ref{appendix:lfd} that $\gamma_\eta$ determines the least favorable direction of Bayesian submodels. For simplicity of notation we write $\sum_i$ instead of $\sum_{i=1}^n$ below.

	Our theoretical results rely on a key decomposition of the frequentist estimator $\widehat{\tau}$ implied by asymptotic efficiency. The minimal asymptotic variance for estimating the ATT can be written in terms of the information norm as
\begin{align}\label{var:equivalence}
	P_0\bigl[\gamma_0^2\bigr]
	= P_0\bigl[\widetilde{\tau}_0^{\,2}\bigr]
	= \textsc v_0 ,
\end{align}
	which is used in the results below.
	In the following, we denote the log-likelihood based on $Z^{(n)}$ as
	\begin{align*}
	\ell_n(\eta)=\sum_i\log p_{\eta}(Z_i)=\ell_n^m(\eta^m)+\ell_n^f(\eta^f),
	\end{align*}
	where each term is the logarithm of the factors involving only $m$ or $f$. 
	Note that we only put a prior distribution on $\eta^m$ and $\eta^f$, and thus the consideration of the likelihood above is sufficient, as shown in the following proofs. 
	
Define the set $\mathcal{H}_n$ that contains $m_{\eta}$ with posterior probability going to 1. Recall the definition of the measurable sets $\mathcal H^m_n$ of functions $\eta$ such that $\Pi(\eta^m\in\mathcal{H}^m_n\mid Z^{(n)})\to_{P_0} 1$. We introduce the conditional prior $\Pi_n(\cdot):=\Pi(\cdot \cap \mathcal{H}^m_n)/\Pi(\mathcal{H}^m_n)$. Define $\upsilon_{\eta}=\bar\pi_\eta/\bar\pi_0$. 

As we show the conditional weak convergence via examining the convergence of the conditional Laplace transform, the following posterior Laplace transform of  $\sqrt{n}\left[\upsilon_{\eta} (\tau_\eta-\widehat{\tau})-b_{0, \eta}\right]$ for all $t\in\mathbb{R}$
	\begin{equation}\label{laplace:transform:def}
	I_n(t)= \mathbb{E}^{\Pi_n}\left[e^{t\sqrt{n}[\upsilon_{\eta}(	\tau_\eta-\widehat\tau)-b_{0, \eta}]}\mid Z^{(n)} \right],
	\end{equation}
	plays a crucial role in establishing the BvM theorem \citep{castillo2012gaussian,castillo2015bvm,ray2020causal}. See also Lemma \ref{lemma:WeakConv} in the Supplementary Appendix \ref{appendix:auxiliary:results}. 
Recall the ``bias term'' given in Theorem \ref{thm:BvM} is
\begin{align*}
	b_{0,\eta}:&=\frac{1}{n}\sum_i\gamma_0(D_i,X_i)[m_0(X_i)-m_{\eta}(X_i)]\\
	&=\frac{1}{n}\sum_i\left(\frac{D_i}{\bar\pi_0}-\frac{1-D_i}{\bar\pi_0} \frac{\pi_0(X_i)}{1-\pi_0(X_i)}\right)[m_0(X_i)-m_{\eta}(X_i)].
\end{align*}
The ``de-biasing term'' of our posterior correction is given by 
\begin{align*}
	\widehat{b}_{\eta}:&=\frac{1}{n}\sum_i\widehat{\gamma}(D_i,X_i)[\widehat{m}(X_i)-m_{\eta}(X_i)]\\
	&=\frac{1}{n}\sum_i\left(\frac{D_i}{\widehat{\bar{\pi}}}-\frac{1-D_i}{\widehat{\bar{\pi}}} \frac{\widehat\pi(X_i)}{1-\widehat \pi(X_i)}\right)[\widehat{m}(X_i)-m_{\eta}(X_i)].
\end{align*}
Because the expansions in the proof of both BvM theorems largely coincide, we keep the bias term explicit even in proving the non-doubly robust version where the bias term is asymptotically negligible.

\begin{proof}[\textbf{\textsc{Proof of Theorem \ref{BvM:thm:standard}.}}]
We begin with a useful decomposition of $\tau_\eta-\widehat{\tau}$. We may assume $\widehat\tau=\tau_0+\mathbb P_n[\widetilde{\tau}_0]$, which satisfies \eqref{def:est:chi}. Consequently, from the definition of the efficient influence function in \eqref{eif_att} we infer
	\begin{align*}
		\widehat \tau-\tau_0&=\frac{1}{\bar\pi_0 n}\sum_i\left(\frac{D_i-\pi_0(X_i)}{1-\pi_0(X_i)}\left(\Delta Y_i-m_{0}(X_i)\right)-D_i\tau_0\right).
	\end{align*}
	In addition, the definition of the Bayesian procedure in \eqref{att} implies
	\begin{align*}
		\tau_\eta-\tau_0=\frac{\mathbb{E}_{\eta}\left[D\left(\Delta Y-m_{\eta}(X)\right)-D\tau_0\right]}{\mathbb{E}_{\eta}[D]}.
	\end{align*}
Thus, using the notation $\upsilon_{\eta}=\bar\pi_\eta/\bar\pi_0$, we obtain the decomposition
\begin{align*}
 \upsilon_{\eta}\sqrt{n}(\tau_{\eta}-\widehat{\tau})&=\frac{\bar\pi_\eta}{\bar\pi_0}\sqrt{n}(\tau_{\eta}-\tau_0-(\widehat{\tau}-\tau_0))\\
 &=\frac{1}{\bar\pi_0}\sqrt{n}\,\E_{\eta}\left[D\left(\Delta Y-m_{\eta}(X)\right)-D\tau_0\right]\\
	&-\frac{1}{\bar\pi_0 \sqrt n}\sum_i\left(\frac{D_i-\pi_0(X_i)}{1-\pi_0(X_i)}\left(\Delta Y_i-m_{0}(X_i)\right)-D_i\tau_0\right)\\
	&+\underbrace{\frac{1}{\sqrt n}\sum_i\left(\frac{D_i-\pi_0(X_i)}{1-\pi_0(X_i)}\left(\Delta Y_i-m_{0}(X_i)\right)-D_i\tau_0\right)\left(1-\upsilon_{\eta}\right)}_{=:R_{n,\eta}}.
\end{align*}
Considering the second summand on the right hand side, we make use of the relation
\begin{align*}
		&\frac{1}{\bar\pi_0 n}\sum_i\left(\frac{D_i-\pi_0(X_i)}{1-\pi_0(X_i)}\left(\Delta Y_i-m_{0}(X_i)\right)-D_i\tau_0\right)\\
		&=\frac{1}{\bar\pi_0 n}\sum_i\Big(D_i(\Delta Y_i-m_0(X_i)-\tau_0) -(1-D_i)\frac{\pi_0(X_i)}{1-\pi_0(X_i)}\left(\Delta Y_i-m_0(X_i)\right)\Big)
	\end{align*}
and hence arrive at the following decomposition
\begin{align*}
	 \upsilon_{\eta}t\sqrt{n}(\tau_{\eta}-\widehat{\tau})-tR_{n,\eta}
	&=\underbrace{\frac{t}{\sqrt n}\sum_i\left(\E_\eta\left[\frac{D}{\bar\pi_0}(\Delta Y-m_\eta(X)-\tau_0)\right]-\frac{D_i}{\bar\pi_0}(\Delta Y_i-m_\eta(X_i)-\tau_0)\right)}_{=t\sqrt{n}(P_{\eta}-\mathbb{P}_n)\left[\frac{d}{\bar\pi_0}(\Delta y-m_\eta(x)-\tau_0)\right]}\\
&+\underbrace{\frac{t}{\sqrt n}\sum_i\frac{D_i}{\bar \pi_0}(m_0(X_i)-m_{\eta}(X_i))}_{\textcircled{a}}\\
&+\underbrace{\frac{t}{\sqrt n}\sum_i\frac{1-D_i}{\bar\pi_0} \frac{\pi_0(X_i)}{1-\pi_0(X_i)} \left(\Delta Y_i-m_0(X_i)\right).}_{\textcircled{b}}
\end{align*}

In order to show the conditional (on the observed data) convergence of the posterior distribution in the bounded Lipschitz distance, it is sufficient to show the pointwise convergence of the posterior Laplace transform for every $t$ in a neighborhood of $0$, by Theorem 1.13.1 of \cite{van1996empirical}. 
The Laplace transform given in \eqref{laplace:transform:def} can be written for all $t\in\mathbb{R}$ as
\begin{equation*}
	I_n(t)=\int\frac{\int_{\mathcal{H}_n}\left[e^{t\sqrt n \upsilon_{\eta}\left(	\tau_\eta-\widehat{\tau}\right)-tR_{n,\eta}-t\sqrt{n}b_{0,\eta}}\right]e^{\ell_n^m(\eta^m)}\mathrm{d}\Pi(\eta^m)}{\int_{\mathcal{H}_n}e^{\ell_n^m(\eta^{m\prime})}\mathrm{d}\Pi(\eta^{m\prime})}\mathrm{d}\Pi(F_{\eta}\mid Z^{(n)}_{\text{Treated}}).
\end{equation*}
We consider the perturbation via the least favorable direction for the log-likelihood part that depends on $m_{\eta}$. Specifically, we introduce
\begin{align*}
\eta_t^m:=\eta_t\left(\eta^m\right):=\eta^m-\frac{t}{\sqrt{n}} \gamma_{c,0} \quad\text{where}\quad \gamma_{c,0}:=-\frac{(1-d)}{\bar\pi_0}\frac{\pi_0(x)}{1-\pi_0(x)},
\end{align*}
which defines a perturbation of $\eta^m$ along the control arm of the least favorable direction $\gamma_{c,0}$.

We further evaluate for the Laplace transform for all $t\in\mathbb{R}$: 
\begin{equation*}
	I_n(t)=\int\frac{\int_{\mathcal{H}_n}\left[e^{t\sqrt n \upsilon_{\eta}\left(	\tau_\eta-\widehat{\tau}\right)-tR_{n,\eta}-t\sqrt{n}b_{0,\eta}}\right]e^{\ell_n^m(\eta^m)-\ell_n^m(\eta^m_t)}e^{\ell_n^m(\eta^m_t)}d\Pi(\eta^m)}{\int_{\mathcal{H}_n}e^{\ell_n^m(\eta^{m\prime})}\mathrm{d}\Pi(\eta^{m\prime})}\mathrm{d}\Pi(F_{\eta}\mid Z^{(n)}_{\text{Treated}}).
\end{equation*}
By Lemma \ref{lemma:likelihood} we further obtain for the likelihood functions uniformly for $\eta\in\mathcal H_n$:
\begin{align*}
	\ell_n^m(\eta^m)-\ell_n^m(\eta^m_t)&=-\underbrace{\frac{t}{\sqrt n}\sum_{i=1}^n\frac{1-D_i}{\bar\pi_0} \frac{\pi_0(X_i)}{1-\pi_0(X_i)} \left(\Delta Y_i-m_0(X_i)\right)}_{\textcircled{c}}\\
	&+\underbrace{\frac{t^2}{2}\E_0\left[ \frac{1-D}{\bar{\pi}_0^2} \frac{\pi_0^2(X)}{(1-\pi_0(X))^2}(\Delta Y-m_0(X))^2\right]}_{\textcircled{d}}\\
	&-\underbrace{\frac{t}{\sqrt n}\sum_{i=1}^n\frac{1-D_i}{\bar\pi_0} \frac{\pi_0(X_i)}{1-\pi_0(X_i)} (m_0(X_i)-m_{\eta}(X_i))}_{\textcircled{e}}+o_{P_0}(1).
\end{align*}
We immediately see that the term $\textcircled{b}$ cancels with $\textcircled{c}$, and $\textcircled{a}-\textcircled{e}$ cancels $t\sqrt{n}\,b_{0,\eta}$ in the expression for the Laplace transform $I_n(t)$. In addition, because all variables have been integrated out in the integral in the denominator, it is a constant relative to either $m_{\eta}$ or $F_\eta$. By Fubini's Theorem, the double integral without this normalizing constant is
\begin{align*}
	\int_{\mathcal{H}^m_n}\! \! \exp&\left(\frac{t^2}{2}\E_0\left[ \frac{1-D}{\bar{\pi}_0^2} \frac{\pi_0^2(X)}{(1-\pi_0(X))^2}(\Delta Y-m_0(X))^2\right]+\ell_n^m(\eta^m_t)\right)\\
	&\times \underbrace{\int \exp\left(t\sqrt{n}(P_{\eta}-\mathbb{P}_n)\left[\frac{d}{\bar\pi_0}(\Delta y-m_\eta(x)-\tau_0)\right]\right)\mathrm{d}\Pi(F_\eta\mid Z_{\text{Treated}}^{(n)})}_{\textcircled{f}}\mathrm{d}\Pi(\eta^m).
\end{align*}
Note that the posterior law of $F_{\eta}$ conditional on the observed data is equivalent to the Bayesian bootstrap measure. 
Using the envelope condition imposed in Assumption \ref{Assump:NonDRSE},  we may apply Lemma \ref{lemma:DP} to the $\textcircled{f}$ term so that for the conditional Laplace transform we have for all $t\in\mathbb{R}$:
\begin{eqnarray}
I_n(t)&=\exp\left( \frac{t^2}{2}\left(Var_0\left[\frac{D}{\bar\pi_0}(\Delta Y-m_0(X)-\tau_0)\right] + \E_0\left[
\frac{1-D}{\bar{\pi}_0^2} \frac{\pi_0^2(X)}{(1-\pi_0(X))^2}(\Delta Y-m_0(X))^2\right]\right)\right)\notag\\
&\qquad\times\underbrace{\frac{\int_{\mathcal{H}_n}e^{\ell_n^m(\eta^m_t)}\mathrm{d}\Pi(\eta^m)}{\int_{\mathcal{H}_n}e^{\ell_n^m(\eta^{m\prime})}\mathrm{d}\Pi(\eta^{m\prime})}}_{\textcircled{g}}\times \exp(o_{P_0}(1))\label{prior_vary} \\
&=\exp\left(\frac{t^2}{2}\textsc v_0\right)+o_{P_0}(1)\notag,
\end{eqnarray}
where the last line follows from that the term $\textcircled{g}=1+o_{P_0}(1)$, which is implied by Assumption \ref{Assump:NonDRPS} and Lemma \ref{lemma:prior:stability}. 

We apply Lemma \ref{lemma:WeakConv} by taking $S_n=\sqrt{n}\left[\upsilon_{\eta}(\tau_\eta-\widehat{\tau})-b_{0,\eta}\right]-R_{n,\eta}$ and the limiting law $L$ as the normal distribution $N(0,\textsc v_0)$. Thus, we have shown that the posterior distribution of $\sqrt{n}\left[\upsilon_{\eta}(\tau_\eta-\widehat{\tau})-b_{0,\eta}\right]-R_{n,\eta}$ converges to $N(0,\textsc v_0)$ in the bounded Lipschitz norm.
 Note that the bias term is asymptotically negligible given the stochastic equicontinuity in Assumption \ref{Assump:NonDRSE}. We have also shown the negligibility of $R_{n,\eta}$ in Lemma \ref{lemma:Remainder}. Hence, we apply Lemma \ref{lemma:Slutsky} to show the conditional convergence of $\sqrt{n}(\tau_\eta-\widehat{\tau})$, as $\upsilon_{\eta}$ converges to $1$, in $P_0$-probability conditional on the data, which concludes the proof.
	\end{proof}

\begin{proof}[Proof of Lemma \ref{lemma:MisRate}] We apply Lemma \ref{lemma:MisRateCluster} when the number of clusters equals the sample size and each cluster contains only one unit. The tail bound of the latent error (\ref{ErrorMoment}), as well as the prior mass (\ref{PriorMass}) and the control of covering number (\ref{CoveringNumber}), give us
			\begin{align*}
					\Pi\left(
					m_{\eta}:\Vert m_{\eta}-m_0\Vert_{ 2,F_0}^2\geq M\varepsilon_n^2
					\mid Z^{(n)}
					\right)
					\to_{P_0}0,
				\end{align*}
		for a sufficiently large constant $M>0$, which is the desired result.
	\end{proof}

\begin{proof}[Proof of Theorem \ref{thm:BvM}]
	Since the estimated least favorable direction $\widehat{\gamma}$ is based on observations that are independent of $Z^{(n)}$, we may apply Lemma 2 of \cite{ray2020causal}. That is, it suffices to handle the ordinary posterior distribution with $\widehat{\gamma}$ set equal to a deterministic function $\gamma_n$. Consequently, for the analysis of the conditional Laplace transform $I_n(t)$, we can follow the proof of Theorem \ref{BvM:thm:standard}.
	Further, the prior stability condition is satisfied by Assumption \ref{Assump:Prior} and the proof of Lemma B.2 from \cite{BLY2025supplement}.
	
In sum, we have shown that $\sqrt{n}\left[\upsilon_{\eta}(\tau_\eta-\widehat{\tau})-b_{0,\eta}\right]-R_{n,\eta}$ converges to the normal distribution $N(0,\textsc v_0)$ in bounded Lipschitz norm by Lemma \ref{lemma:WeakConv}. In Lemma \ref{lemma:Remainder}, we prove that $\sup_{\eta\in \mathcal{H}_n}\vert \sqrt{n}(\upsilon_{\eta}-1)b_{0,\eta}\vert =o_{P_0}(1)$ which implies the conditional weak convergence of $\sqrt{n}\upsilon_{\eta}(\tau_\eta-\widehat{\tau}-b_{0,\eta})$ to the same normal distribution (under $P_0$). Finally, we establish this result for  $\sqrt{n}(\tau_\eta-\widehat{\tau}-b_{0,\eta})$ by dropping the scaling factor $\upsilon_{\eta}$, due to Lemma \ref{lemma:Slutsky}. 
\end{proof}

	\begin{proof}[Proof of Theorem \ref{thm:Debias}]
		It is sufficient to show that
	\begin{equation*}
		\sup_{\eta\in\mathcal{H}_n}\left|b_{0,\eta}-\widehat{b}_{\eta}\right|=o_{P_0}(n^{-1/2}),
	\end{equation*}
	where $b_{0,\eta}=	\mathbb{P}_n[\gamma_0(m_0-m_{\eta})]$ and $\widehat{b}_\eta=	\mathbb{P}_n[\widehat \gamma(\widehat m - m_{\eta}) ]$.
	We make use of the decomposition
	\begin{equation}\label{dec:proof:feasible:drb}
		b_{0,\eta}-\widehat{b}_{\eta}=	\mathbb{P}_n[(\gamma_0-\widehat \gamma)(m_0-m_{\eta})]
		+ \mathbb{P}_n[\widehat \gamma(m_0- \widehat m) ].
	\end{equation}
	Consider the first summand on the right hand side of the previous equation.
From Assumption \ref{Assump:Donsker} we infer
\begin{align*}
		\sqrt n\sup_{\eta\in\mathcal{H}_n}&\left|	\mathbb{P}_n[(\gamma_0-\widehat \gamma)(m_0-m_{\eta})]\right|
		\leq 		\sup_{\eta\in\mathcal{H}_n}\left|\mathbb{G}_n[(\gamma_0-\widehat \gamma)(m_0-m_{\eta})]\right|\\
		&\qquad\qquad\qquad\qquad\qquad\qquad+\sqrt n\sup_{\eta\in\mathcal{H}_n}\left|P_0[(\gamma_0-\widehat \gamma)(m_0-m_{\eta})]\right|\\
		&\qquad\leq o_{P_0}(1)+	O_{P_0}(1)\times \sqrt{n}\Vert \pi_0 - \widehat{\pi}\Vert_{2, F_0}\sup_{\eta\in\mathcal{H}_n}\Vert m_\eta-m_0\Vert_{2, F_0}=o_{P_0}(1),
	\end{align*} 
using the Cauchy-Schwarz inequality and Assumption \ref{Assump:Rate}.
	Consider the second summand on the right hand side of \eqref{dec:proof:feasible:drb}. 
Another application of the Cauchy-Schwarz inequality and Assumption \ref{Assump:Rate} yields
	\begin{align*}
		 \mathbb{P}_n[\widehat \gamma(m_0 -\widehat m ) ]= \mathbb{P}_n[\gamma_0(m_0 -\widehat m )]+o_{P_0}(n^{-1/2}).
	\end{align*}
	Using Lemma \ref{lemma:negligible} we have $\mathbb{P}_n[\gamma_0(m_0 -\widehat m )]=o_{P_0}(n^{-1/2})$ which completes the proof.
\end{proof}

For the exponential family, we have the conditional mean as follows:
\begin{align*}
	\mathbb E_{\eta}[\Delta Y|D=0,X=x]=\frac{(A'\circ q^{-1})(\eta^m(x))}{a(q'\circ q^{-1})(\eta^m(x))}.
\end{align*}
Now the operator under consideration is $\Upsilon:= \frac{1}{a}A\circ q^{-1}$ and its derivative is given by $\Upsilon'= \frac{1}{a}(A'\circ q^{-1})/(q'\circ q^{-1})$. We can further simplify the expression to 
\begin{align*}
	\mathbb E_{\eta}[\Delta Y|D=0,X=x]=\Upsilon^{\prime}(\eta^m(x)),
\end{align*}
which is used in the proofs below. We write $L_n$ as some term which is a polynomial of $(\log n)$, whose exact value may change from line to line.

	\section{BvM under Low-level Conditions}\label{sec:sq:exp}

This section provides primitive conditions for the high-level assumptions in Sections \ref{sec:high} and \ref{sec:dr_high} to hold and derives the BvM Theorems under them. We focus on squared exponential process priors as an example of Gaussian process (GP) priors. Moreover, we consider specific smoothness classes to derive the explicit regularity conditions implied by our high-level assumptions. One of the main innovations in the nonparametric Bayesian literature is about the adaptivity. Nonparametric Bayesian methods based on flexible GP priors can adapt to unknown features of the underlying functions in theory and numerical results \citep{ghosal2017fundamentals}.

A Gaussian process is completely characterized by its mean and covariance functions \citep{williams2006gaussian}. Below we consider a GP prior, which has mean zero and the covariance function specified by $\mathbb{E}[W(s)W(t)]=\exp(-\Vert s-t\Vert_{\ell_2}^2)$. This is known as the squared exponential process prior, which is one of the most commonly used priors in applications; see \cite{williams2006gaussian} and \cite{murphy2023pml}. Following \cite{BLY2022}, we consider a rescaled Gaussian process $\big(W(a_nt):\,t\in [0,1]^p\big)$. Intuitively, $a_n^{-1}$ can be thought of as a bandwidth parameter. For a large $a_n$, the prior sample path $t\mapsto W(a_nt)$ is obtained by shrinking the original sample path $t\mapsto W(t)$. 

Let $\mathcal{C}^{s_m}([0,1]^p)$ denote a H\"older space with the smoothness index $s_m>0$. We illustrate our theory with the case where $m_0\in \mathcal{C}^{s_m}([0,1]^{p})$. Given such a H\"older-type smoothness condition, we choose
\begin{equation}\label{RescaleRate}
a_n\sim n^{1/(2s_m+p)}(\log n)^{-(1+p)/(2s_m+p)}.
\end{equation}			
The particular choice of $a_n$ mimics the corresponding kernel bandwidth based on the standard kernel smoothing method. Note that the minimax posterior contraction rate for the conditional mean function $m_\eta$ given by $\varepsilon_n=n^{-s_m/(2s_m+p)}(\log n)^{s_m(1+p)/(2s_m+p)}$; see Section 11.5 of \cite{ghosal2017fundamentals}.

\begin{proposition}[Unadjusted Squared Exponential Process Priors]\label{prop:exponential}
Suppose $m_0\in \mathcal{C}^{s_m}([0,1]^{p})$ and $\pi_0\in \mathcal{C}^{s_\pi}([0,1]^{p})$ under the smoothness conditions $\min(s_\pi,s_m)>p/2$. Consider the prior on $m$ given by $m(x) =q^{-1}\left(W^m(x)\right)$, where $W^m$ is the rescaled squared exponential process, with its rescaling parameter $a_n$ of the order in \eqref{RescaleRate}, combined with an independent Dirichlet process prior on $F$. Then, under Assumption \ref{Ass:unconfounded}, the posterior distribution for the ATT satisfies Theorem \ref{BvM:thm:standard}.
\end{proposition}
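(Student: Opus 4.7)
The plan is to invoke Theorem \ref{BvM:thm:standard}, which requires verifying its three high-level inputs, namely Assumptions \ref{Assump:NonDRRate}, \ref{Assump:NonDRSE}, and \ref{Assump:NonDRPS}, for the rescaled squared exponential process prior on $\eta^m$ combined with a Dirichlet process prior on $F$. The Dirichlet part handles the distribution of $(D\Delta Y,D,X)$ automatically (Bayesian bootstrap for the ratio functional), so the work concentrates on the Gaussian process side.

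First, I would establish the posterior contraction of $m_\eta$ around $m_0$ required by Assumption \ref{Assump:NonDRRate}. The rescaled squared exponential process with $a_n$ as in \eqref{RescaleRate} attains the minimax rate (up to log factors) $\varepsilon_n=n^{-s_m/(2s_m+p)}(\log n)^{s_m(1+p)/(2s_m+p)}$ for densities whose natural parameter $\eta^m=q(m)$ lies in $\mathcal{C}^{s_m}([0,1]^p)$; this is a direct application of Theorem 11.55 of \cite{ghosal2017fundamentals} together with the prior-mass and sieve construction of \cite{van2009adaptive}, after verifying the standard Kullback–Leibler compatibility for the exponential family \eqref{condpdf} (note $q$ is smooth on the range of $m_0$). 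The requirement $s_m>p/2$ then yields $\sqrt n\varepsilon_n^2\to 0$. The sieve $\mathcal{H}_n^m$ that receives asymptotic posterior mass is the standard one: a ball in the RKHS plus a small $\sup$-norm ball, whose elements are essentially functions in a Hölder-type class of a smoothness arbitrarily close to that of the RKHS.

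Next, I would verify Assumption \ref{Assump:NonDRSE}. Because $\mathcal{H}_n^m$ is eventually contained in a Hölder ball of smoothness exceeding $p/2$ (after composing with the smooth link $q^{-1}$), $\{m_\eta:\eta\in\mathcal{H}_n\}$ is uniformly bounded and forms a $P_0$-Donsker class by Theorem 2.7.1 of \cite{van1996empirical}. This delivers both the uniform Glivenko–Cantelli statement in (i), with envelope inherited from the Hölder ball (hence any polynomial moment), and the stochastic equicontinuity in (ii) via the asymptotic equicontinuity of the empirical process over the shrinking neighborhood $\{m_\eta:\|m_\eta-m_0\|_{2,F_0}\leq\varepsilon_n\}$.

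The main obstacle is verifying the prior stability condition in Assumption \ref{Assump:NonDRPS}, as it requires simultaneously approximating the Riesz representer $\gamma_0(0,\cdot)=-\pi_0(\cdot)/[\pi_0(1-\pi_0(\cdot))]$ in sup norm by an element $\overline\gamma_n$ of the RKHS $\mathbb{H}^m$ while keeping $\|\overline\gamma_n\|_{\mathbb{H}^m}$ small, and then transferring that approximation into a posterior translation invariance along the direction $\overline\gamma_n/\sqrt n$. The approximation step uses the rescaled-SE RKHS approximation lemma (see Lemma 11.36 and Section 11.5 of \cite{ghosal2017fundamentals}, building on \cite{van2009adaptive}): for $\gamma_0\in\mathcal{C}^{s_\pi}([0,1]^p)$ there exists $\overline\gamma_n\in\mathbb{H}^m$ with $\|\overline\gamma_n-\gamma_0\|_\infty\lesssim a_n^{-s_\pi}$ and $\|\overline\gamma_n\|_{\mathbb{H}^m}^2\lesssim a_n^p$. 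Setting $\zeta_n:=\max\{a_n^{-s_\pi},\,a_n^{p/2}/\sqrt n\}$ places $\overline\gamma_n\in\mathbb{H}^m(\zeta_n)$, and plugging in $a_n\sim n^{1/(2s_m+p)}$ up to log factors gives $\sqrt n\,\varepsilon_n\zeta_n=o(1)$ precisely when $\min(s_m,s_\pi)>p/2$. Finally, the translation-invariance statement $\Pi(\eta^m\in\mathcal{H}_n^m-t\overline\gamma_n/\sqrt n\mid Z^{(n)})\to_{P_0}1$ follows from the Cameron–Martin theorem: the Radon–Nikodym derivative of the shifted Gaussian law with respect to the original is $\exp(-t\langle W^m,\overline\gamma_n\rangle_{\mathbb{H}^m}/\sqrt n-t^2\|\overline\gamma_n\|_{\mathbb{H}^m}^2/(2n))$, which is tight under $P_0$ because $\|\overline\gamma_n\|_{\mathbb{H}^m}^2/n\lesssim\zeta_n^2=o(1)$, so posterior mass is preserved under the shift. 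With all three high-level assumptions in hand, Theorem \ref{BvM:thm:standard} applies and the conclusion of Proposition \ref{prop:exponential} follows.
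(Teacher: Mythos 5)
Your proposal is correct and follows essentially the same route as the paper's proof: posterior contraction for the rescaled squared exponential prior at rate $\varepsilon_n=n^{-s_m/(2s_m+p)}L_n$, the Donsker/stochastic-equicontinuity property of the sieve under $s_m>p/2$, and verification of the prior-stability condition by approximating $\gamma_0(0,\cdot)$ in the RKHS via the decentering result (Lemma \ref{lemma:Decentering}) under $s_\pi>p/2$, followed by a Cameron--Martin argument. The only difference is presentational: you use a single choice $\zeta_n=\max\{a_n^{-s_\pi},\,a_n^{p/2}/\sqrt{n}\}$ to cover both smoothness regimes at once, whereas the paper splits into the cases $s_\pi\geq s_m$ and $s_\pi<s_m$ (taking $\overline\gamma_n=\gamma_0$ in the first case); your unified treatment is arguably cleaner and yields the same condition $\min(s_m,s_\pi)>p/2$.
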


Proposition \ref{prop:exponential} makes explicit the smoothness requirements for the BvM Theorem to hold when standard Gaussian process priors are placed on the conditional mean function. This result shows that the smoothness of both the conditional mean function and the propensity score function must exceed $\dim(X)/2$. In situations where one is confident that these regularity conditions are met, no additional modifications to the Bayesian procedures are necessary to achieve the BvM result. 
 \begin{proposition}[Adjusted Squared Exponential Process Priors]\label{prop:exponential:dr}
The estimator $\widehat\gamma$ satisfies $\|\widehat\gamma\|_\infty=O_{P_0}(1)$ and $\|\widehat{\gamma}-\gamma_0\Vert_\infty= O_{P_0}\big((n/\log n)^{-s_\pi/(2s_\pi+p)}\big)$ for some $s_\pi>0$. Suppose $m_0\in \mathcal{C}^{s_m}([0,1]^{p})$ for some $s_m>0$ with $\sqrt{s_\pi \, s_m}>p/2$. Also, $\|\widehat{m}-m_0\Vert_{2, F_0}= O_{P_0}\big((n/\log n)^{-s_m/(2s_m+p)}\big)$. 
Consider a Dirichlet process prior on $F$ combined with the independent prior on $m$ given by $m(x) =q^{-1}\left(W^m(x) + \lambda\,\widehat \gamma(0,x)\right)$, where $W^m$ is the rescaled squared exponential process, with  rescaling parameter $a_n$ satisfying  \eqref{RescaleRate} and 
$\left(n/\log n\right)^{-s_m/(2s_m+p)}\lesssim u_n\varsigma_n$ for some deterministic sequence $u_n\to 0$, and  $\varsigma_n\lesssim 1$.
	Then,  under Assumption \ref{Ass:unconfounded}, the corrected posterior distribution for the ATT satisfies Theorem \ref{thm:BvM}.
\end{proposition}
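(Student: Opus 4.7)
The plan is to verify Assumptions \ref{Assump:NonDRSE}(i), \ref{Assump:Rate}, \ref{Assump:Donsker}, and \ref{Assump:Prior} of Theorem \ref{thm:BvM}; Assumption \ref{Ass:unconfounded} is given. The overall structure parallels the proof of Proposition \ref{prop:exponential}, but two additional pieces are required: handling the additive prior shift $\lambda\widehat{\gamma}(0,\cdot)$ in the posterior contraction analysis, and exploiting the double-robust smoothness condition $\sqrt{s_m s_\pi}>p/2$ in place of $\min(s_m,s_\pi)>p/2$.

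First, I would establish the posterior contraction rate for $m_\eta$. The rescaled squared exponential process with $a_n$ as in \eqref{RescaleRate} attains the near-minimax rate $\varepsilon_n\sim n^{-s_m/(2s_m+p)}(\log n)^{s_m(1+p)/(2s_m+p)}$ over $\mathcal{C}^{s_m}([0,1]^p)$ by the concentration-function arguments of \cite{van2009adaptive} (see also Section 11 of \cite{ghosal2017fundamentals}). The additional random shift $\lambda\widehat{\gamma}(0,\cdot)$ with $\lambda\sim N(0,\varsigma_n^2)$ and $\varsigma_n\lesssim 1$ is handled following \cite{ray2020causal}: because $\|\widehat{\gamma}\|_\infty=O_{P_0}(1)$, the small-ball and prior-mass bounds of the shifted process match those of the unadjusted process up to constants, so the contraction rate is inherited. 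This produces a set $\mathcal{H}_n^m$ (a shrinking $\varepsilon_n$-neighborhood of $m_0$ intersected with a bounded Hölder ball) on which the posterior places mass tending to one.

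Second, I would verify the remaining regularity assumptions. For Assumption \ref{Assump:Rate}, with $\|\widehat{\pi}-\pi_0\|_{2,F_0}=O_{P_0}(r_n)$ for $r_n\sim n^{-s_\pi/(2s_\pi+p)}(\log n)^{\kappa}$, an elementary algebraic identity yields
\[
\frac{s_m}{2s_m+p}+\frac{s_\pi}{2s_\pi+p}>\frac{1}{2}\;\Longleftrightarrow\; 4s_m s_\pi>p^2,
\]
so $\sqrt{s_m s_\pi}>p/2$ is exactly what is needed to ensure $\sqrt{n}\,\varepsilon_n r_n\to 0$. Assumption \ref{Assump:NonDRSE}(i) is immediate since $\mathcal{H}_n^m$ lies in a uniformly bounded Hölder ball and is therefore Glivenko–Cantelli. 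For the stochastic equicontinuity in Assumption \ref{Assump:Donsker}, I would condition on the auxiliary sample defining $\widehat\gamma$: the class $\{(\gamma_0-\widehat\gamma)(m_\eta-m_0):\eta\in\mathcal{H}_n\}$ has envelope of order $\|\gamma_0-\widehat\gamma\|_\infty\cdot O(1)=o_{P_0}(1)$ and its bracketing entropy is controlled by that of the Hölder ball for $m_\eta$; a standard maximal inequality then drives the uniform empirical process to zero.

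The main technical obstacle is the prior stability Assumption \ref{Assump:Prior}. For part (i), the calibration $(n/\log n)^{-s_m/(2s_m+p)}\lesssim u_n\varsigma_n$ is tailored so that the marginal posterior for $\lambda$ concentrates on an interval of length $O(u_n\varsigma_n^2\sqrt n)$; this is established by combining the Gaussian prior tail with a lower bound on the evidence for the one-dimensional slice, in the spirit of Theorem 2 of \cite{ray2020causal}. For part (ii), the translation $(w,\lambda)\mapsto(w,\lambda+tn^{-1/2})$ of the product Gaussian prior has Radon–Nikodym derivative $\exp\bigl(t\lambda/(\sqrt n\varsigma_n^2)-t^2/(2n\varsigma_n^2)\bigr)$ by the Cameron–Martin theorem, and this density is uniformly integrable under the posterior because $n\varsigma_n^2\to\infty$ and $\lambda=O_{P_0}(u_n\varsigma_n^2\sqrt n)$ under the posterior by part (i); hence the posterior mass of the translated event still tends to one. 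With all four assumptions verified, Theorem \ref{thm:BvM} applies and delivers the BvM claim of Proposition \ref{prop:exponential:dr}.
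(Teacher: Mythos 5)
Your overall skeleton (verify Assumptions \ref{Assump:NonDRSE}(i), \ref{Assump:Rate}, \ref{Assump:Donsker}, \ref{Assump:Prior} and invoke Theorem \ref{thm:BvM}) is the same as the paper's, and your contraction, rate-product, and prior-stability steps are essentially the arguments the paper delegates to \cite{BLY2022} and \cite{ray2020causal}; in particular your algebra showing $\tfrac{s_m}{2s_m+p}+\tfrac{s_\pi}{2s_\pi+p}>\tfrac12\iff 4s_ms_\pi>p^2$ is exactly how the paper obtains the condition $\sqrt{s_\pi s_m}>p/2$.

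However, your verification of Assumption \ref{Assump:Donsker} has a genuine gap, and it sits precisely at the step where double robustness is supposed to enter. You argue that the class $\{(\gamma_0-\widehat\gamma)(m_\eta-m_0):\eta\in\mathcal{H}_n\}$ has a vanishing envelope and that ``its bracketing entropy is controlled by that of the H\"older ball for $m_\eta$; a standard maximal inequality then drives the uniform empirical process to zero.'' Under the hypothesis $\sqrt{s_ms_\pi}>p/2$ one may have $s_m\le p/2$, in which case the $\mathcal{C}^{s_m}$ ball has bracketing entropy of order $\epsilon^{-p/s_m}$ with $p/s_m\ge 2$, the entropy integral diverges, and no standard maximal inequality yields $\sup_\eta|\mathbb{G}_n[(\gamma_0-\widehat\gamma)(m_\eta-m_0)]|=o_{P_0}(1)$ from entropy plus a small envelope alone; a vanishing envelope does not by itself give stochastic equicontinuity at the $\sqrt n$ scale. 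The correct argument — the one the paper uses via Lemma C.5 of \cite{BLY2025supplement} — is a product decomposition: bound the quantity by
$4\Vert\widehat\gamma-\gamma_0\Vert_\infty\,\E_0\sup_{\eta\in\mathcal{H}_n^m}|\mathbb{G}_n[m_\eta-m_0]|$ plus $\sqrt n$ times an $L^2(F_0)$ cross term, where the expected supremum is taken over the \emph{sieve} set $\mathcal{B}^m_n=\varepsilon_n\mathbb{B}_1^{s_m,p}+M_n\mathbb{H}_1^{a_n}$ (whose complexity is governed by the metric entropy of the rescaled squared-exponential RKHS ball, of order $a_n^p$ up to logarithms), not over the full H\"older ball. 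That supremum is allowed to \emph{diverge} at rate roughly $n^{p/(2(2s_m+p))}$ (polylog), and it is the multiplication by $\Vert\widehat\gamma-\gamma_0\Vert_\infty=O_{P_0}\big((n/\log n)^{-s_\pi/(2s_\pi+p)}\big)$ that makes the product $o_{P_0}(1)$; checking $p/(2(2s_m+p))<s_\pi/(2s_\pi+p)$ again reduces to $4s_ms_\pi>p^2$. As written, your argument only goes through when $s_m>p/2$, i.e., in the single-robust (Donsker) regime of Corollary \ref{cor:BvM:single}, and therefore does not establish the double-robust claim of the proposition.
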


Proposition \ref{prop:exponential:dr} requires $\sqrt{s_\pi \, s_m}>p/2$, which represents a trade-off between the smoothness requirement for $m_0$ and $\pi_0$. This corresponds to the \textit{double robustness} property; i.e., a lack of smoothness of the conditional mean function $m_0$ can be mitigated by exploiting the regularity of the propensity score $\pi_0$,  and vice versa.

The rest of the section presents proofs for Propositions \ref{prop:exponential} and \ref{prop:exponential:dr}.
\begin{proof}[Proof of Proposition \ref{prop:exponential}]
	Regarding the conditional mean function $m_{\eta}$, we consider the set $\mathcal{H}_{n}^m:= \left\{w:w\in \mathcal{B}^m_n,\Vert \Upsilon^\prime(w(\cdot))-m_0(\cdot)\Vert_{2, F_0}\leq \varepsilon_n\right\}$,
	where $\mathcal{B}_n^m$ is the set defined in (\ref{SieveSet}), that contains the Gaussian process with its posterior probability going to one.
 The posterior rate of contraction follows from the proof of Proposition 4.1 in \cite{BLY2022} without restricting the additional $\lambda$ used in the prior adjustment. The Donsker property is satisfied, following the calculation on Page 561 in the same proof from \cite{BLY2022}, if $s_{m}>p/2$.
	
	We show the prior stability by verifying Conditions (3.18) in Proposition 1 from \cite{ray2020causal}.
	Recall the definition of the ball in $\mathbb{H}^m$ centered at the true Riesz representer $\gamma_0$ given by 
\begin{align*}
\mathbb{H}^m(r_n):=\left\{h \in \mathbb{H}^m : \Vert h-\gamma_0\Vert_{\infty}\leq r_n\text{ and }\Vert h\Vert_{\mathbb{H}^m}\leq \sqrt{n}r_n\right\}
\end{align*}
for some rate $r_n$. 
We need to verify Assumption \ref{Assump:NonDRPS}, that is, there exists $\overline\gamma_n\in\mathbb{H}^m(\zeta_n)$ for  a sequence $\zeta_n=o(1)$ with $\sqrt{n}\varepsilon_n\zeta_n=o(1)$ where 
$\varepsilon_n=n^{-s_m/(2s_m+p)}L_n$ throughout the analysis. We need to consider two cases separately. 
	
	(I) If $s_{\pi}\geq s_{m}$ (meaning the Riesz representer is more regular than the conditional mean, hence it also belongs to $\mathcal{C}^{s_m}([0,1]^p)$ itself), we can simply take $\overline\gamma_n=\gamma_0$ and $\zeta_n=n^{-1/2}\Vert \gamma_0\Vert_{\mathbb{H}^m}$. Because the Donsker property already enforces $s_m>p/2$, it is easy to see that the condition $\sqrt{n}\varepsilon_n\zeta_n\to0$ is indeed satisfied.
	
	(II) If $s_{\pi}<s_m$ (meaning the Riesz representer is less regular than the conditional mean), we apply Lemma \ref{lemma:Decentering} with $\zeta_n=a_n^{-s_{\pi}}=n^{-s_\pi/(2s_m+p)}L_n$, so that $\Vert \overline\gamma_n\Vert_{\mathbb{H}^m}\leq C a_n^{p/2}$. It is straightforward to check that $\Vert \overline\gamma_n\Vert_{\mathbb{H}^m}\leq \sqrt{n}\zeta_n$ automatically holds if $s_{\pi}<s_m$. Finally, $\sqrt{n}\varepsilon_n\zeta_n\to0$ holds if and only if $n^{1/2-(s_\pi+s_m)/(2s_m+p)}L_n\to0$. The aforementioned condition holds if $s_\pi>p/2$.
\end{proof}

\begin{proof}[Proof of Proposition \ref{prop:exponential:dr}]
The posterior contraction follows from the proof of Proposition 4.1 in \cite{BLY2022}, if one restricts to the control group only.  Note that $\widehat\gamma$ is based on an auxiliary sample and hence we can treat $\widehat\gamma$ below as a deterministic function denoted by $\gamma_n$ satisfying the rate restrictions $\|\gamma_n\|_\infty=O(1)$ and $\|\gamma_n-\gamma_0\Vert_\infty= O\big((n/\log n)^{-s_\pi/(2s_\pi+p)}\big)$.  Regarding the conditional mean function $m_{\eta}$, we consider the set $\mathcal{H}_{n}^m:= \left\{w+\lambda\gamma_n:(w,\lambda)\in \mathcal{W}_{n}\right\}$, where for some constant $C>0$:
{\small \begin{align}\label{WnSet}
		\mathcal{W}_{n}:= \left\{(w,\lambda):w\in \mathcal{B}^m_n,|\lambda|\leq C\varsigma_n\sqrt{n}\varepsilon_n \right\}\cap \left\{(w,\lambda):\Vert \Upsilon^\prime(w+\lambda\gamma_n)-m_0\Vert_{2, F_0}\leq \varepsilon_n \right\},
\end{align}}%
where $\mathcal{B}_n^m$ is the set defined in (\ref{SieveSet}), that contains the Gaussian process with its posterior probability going to one.

We first verify Assumption \ref{Assump:Rate} with $\varepsilon_n=(n/\log n)^{-s_m/(2s_m+p)}$. The posterior contraction rate is shown in Lemma  C.3 of \cite{BLY2025supplement}. We next check the product rate condition, i.e., $\sqrt{n}\varepsilon_nr_n=o(1)$ for $r_n\sim  (n/\log n)^{-s_\pi/(2s_\pi+p)}$. This is satisfied if $2s_m/(2s_m+p)+2s_\pi/(2s_\pi+p)>1$, which can be rewritten as $\sqrt{s_\pi \, s_m}>p/2$.

We now verify Assumption \ref{Assump:Donsker}. It is sufficient to deal with the resulting empirical process $\mathbb{G}_n$.
From Lemma  C.5 in \cite{BLY2025supplement} we infer
\begin{align*}
	\E_0\sup_{\eta\in\mathcal{H}^m_n}\left|\mathbb{G}_n[(\gamma_n-\gamma_0)(m_\eta-m_0)] \right|&\leq 4\Vert \gamma_n-\gamma_0\Vert_{\infty}\E_0\sup_{\eta\in\mathcal{H}^m_n}\left|\mathbb{G}_n[m_\eta-m_0] \right|\\
	&\qquad+\Vert \gamma_n-\gamma_0\Vert_{2, F_0}\sup_{\eta\in\mathcal{H}_n}\Vert m_\eta-m_0\Vert_{2, F_0}\\
	&=  (n/\log n)^{-s_\pi/(2s_\pi+p)}\E_0\sup_{\eta\in\mathcal{H}^m_n}\left|\mathbb{G}_n[m_\eta-m_0] \right|+o(1)\\
	&=o(1),
\end{align*}
where the last line follows from the proof of Proposition 4.1 in \cite{BLY2022}.
Assumption \ref{Assump:Prior} (prior stability) follows from the proof on pages 561-562 in \cite{BLY2022}. 
\end{proof}
\let\clearpage\originalclearpage

\newpage
$\,$
\setcounter{page}{1}
\vskip 1cm

\begin{center}
{\LARGE Supplement to ``Semiparametric Bayesian Difference-in-Differences"\par\vspace{0.6\baselineskip}}
\end{center}
\begin{center}
{ \large Christoph Breunig}
  \qquad \quad{ \large Ruixuan Liu}
    \qquad \quad{ \large Zhengfei Yu}
\end{center}
\begin{center}
{ \large\today}
\end{center}

This supplementary appendix contains materials to support our main paper.
Appendix \ref{appendix:lfd} derives the least favorable direction for the ATT. Appendix \ref{appendix:misspecified} proves Lemma \ref{lemma:MisRateCluster}. Appendix \ref{appendix:crossfit} provides details for the cross-fitting algorithm.  
Appendix \ref{appendix:auxiliary:results} collects auxiliary results used for the derivations of our Bernstein-von Mises results.
Appendix \ref{details:implementation} provides details regarding the implementation.
Appendix \ref{appendix:simu} presents additional simulation evidence. 

In this supplement, $C>0$ denotes a generic constant, whose value might change line by line. We introduce additional subscripts when there are multiple constant terms in the same display. 

\addtocounter{section}{0}
\renewcommand{\thefigure}{\thesection\arabic{figure}}
\addtocounter{figure}{0}

\section{Least Favorable Direction}\label{appendix:lfd}
	Our prior correction through the Riesz representer $\gamma_0$ is motivated by the least favorable direction of Bayesian submodels. As we show below, this correction is indeed sufficient for our double-robust BvM theorem. We first provide least favorable direction (LFD) calculations of Bayesian submodels, which are closely linked to semiparametric efficiency derivations. We present the LFD for a more general class of likelihood functions known as the single-parameter exponential family that nests the normal likelihood as a special case. Let $\tilde{Z}=\left(D\Delta Y, D, X^\top\right)$ and use $f$ to denote its density function, as in Section \ref{sec:bayes_frame}.

Consider the one-dimensional submodel $t\mapsto \eta_t$ defined by the path
	\begin{align}\label{submodel}
	m_t(\cdot) =  q^{-1}(\eta^m+t\mathfrak{m})(\cdot )\quad\text{and}\quad 
	f_{t} (\cdot )= f(\cdot )e^{t\mathfrak{f}(\cdot)}\left(\int e^{t\mathfrak{f}(\tilde z)} f(\tilde z)\,\mathrm{d}\tilde z \right)^{-1},
	\end{align}
	for the given direction $(\mathfrak{m},\mathfrak{f})$ with $\int \mathfrak{f}(\tilde z) f(\tilde z)\,\mathrm{d}\tilde z =0$.
	The difficulty of estimating the parameter $\tau_{\eta_{t}}$ for the submodels depends on the direction $(\mathfrak{m},\mathfrak{f})$. Among them, let $ \xi_{\eta}= (\xi_{\eta}^{m},\xi_{\eta}^{f})$ be the \emph{least favorable direction} that is associated with the most difficult submodel, i.e., gives rise to the largest asymptotic optimal variance for estimating $\tau_{\eta_{t}}$. 
Let $p_{\eta_t}$ denote the joint density of $Z$ depending on $\eta_t:= (m_t, f_{t})$. Taking derivative of the logarithmic density $\log p_{\eta_t} (z)$ with respect to $t$ and evaluating at $t=0$ gives a score operator $B_\eta$, which we derive explicitly in the following proof. The least favorable direction is defined as the solution $\xi_\eta$ of the equation
$B_{\eta}\xi_{\eta}=\widetilde{\tau}_{\eta}$, see \citet[p.370]{ghosal2017fundamentals}, where $\widetilde{\tau}_{\eta}$ is the efficient influence function for estimation of the ATT given in \eqref{eif_att}.

\begin{lemma}\label{lemma:lfd}
Let Assumption \ref{Ass:unconfounded} be satisfied, then the least favorable direction for estimating the ATT in \eqref{att} is:
\begin{equation*}
	\xi_{\eta}(y,d,x)
	= \left(\frac{\gamma_\eta(0,x)}{a}, \frac{d\big(y-m_\eta(x)-\tau_\eta\big)}{\bar\pi_\eta}\right)
\end{equation*}
where the Riesz representer $\gamma_\eta$ is given in \eqref{riesz:def}.
\end{lemma}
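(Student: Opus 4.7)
The plan is to compute the score operator $B_\eta$ associated with the submodel in \eqref{submodel} and then solve the information identity $B_\eta \xi_\eta = \widetilde\tau_\eta$ by matching the control-arm and treated-arm pieces separately. First I would differentiate $\log p_{\eta_t}(z)$ with respect to $t$ at $t=0$. Writing the log-likelihood as $\log f_t(z) + (1-d)\log f_{\Delta Y\mid D,X}(y\mid 0,x;m_t)$, the $f$-block contributes $\mathfrak{f}(z)$ using the centering $\int \mathfrak{f}\,f = 0$, while for the control-arm block the exponential family form in \eqref{condpdf} gives $(1-d)\bigl[ay\mathfrak{m}(x) - A'(m(x))\mathfrak{m}(x)/q'(m(x))\bigr]$ after using $\dot m_t|_{t=0} = \mathfrak{m}/q'(m)$. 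The exponential family identity $A'(m)/(aq'(m)) = \mathbb{E}_\eta[\Delta Y\mid D=0,X=x]$, combined with the mean equality noted after \eqref{condpdf}, namely $\mathbb{E}_\eta[\Delta Y\mid D=0,X=x] = m_\eta(x)$, simplifies this to $(1-d)\,a\,(y - m_\eta(x))\,\mathfrak{m}(x)$. Hence
\begin{equation*}
B_\eta(\mathfrak{m},\mathfrak{f})(y,d,x) = (1-d)\,a\,(y - m_\eta(x))\,\mathfrak{m}(x) \,+\, \mathfrak{f}(y,d,x).
\end{equation*}

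Second, I would rewrite the efficient influence function \eqref{eif_att} by expanding the Riesz representer via \eqref{riesz:def} into treated and control pieces:
\begin{equation*}
\widetilde\tau_\eta(y,d,x) \,=\, \frac{d\,(y - m_\eta(x) - \tau_\eta)}{\pi_\eta} \,-\, \frac{(1-d)}{\pi_\eta}\,\frac{\pi_\eta(x)}{1-\pi_\eta(x)}\,(y - m_\eta(x)).
\end{equation*}
The key observation is that the first summand is supported on the treated arm (proportional to $d$) while the second is supported on the control arm (proportional to $1-d$) and exhibits the product structure $(1-d)(y-m_\eta(x))\cdot[\text{function of }x]$ that aligns perfectly with the control-arm block of $B_\eta$. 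Matching summands in $B_\eta\xi_\eta = \widetilde\tau_\eta$ therefore forces
\begin{equation*}
\xi_\eta^f(z) = \frac{d\,(y - m_\eta(x) - \tau_\eta)}{\pi_\eta}, \qquad a\,\xi_\eta^m(x) = -\frac{1}{\pi_\eta}\,\frac{\pi_\eta(x)}{1-\pi_\eta(x)},
\end{equation*}
and the second equation recovers $\xi_\eta^m(x) = \gamma_\eta(0,x)/a$ from \eqref{riesz:def}, delivering precisely the pair claimed in the lemma.

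To close the proof I would add two consistency checks. First, $\xi_\eta^f$ must be centered under $P_\eta$ to qualify as a valid tangent direction for the $f$-block; this follows directly from $\mathbb{E}_\eta[D(\Delta Y - m_\eta(X))] = \pi_\eta\tau_\eta$ by the identification formula \eqref{att}. Second, a direct quotient-rule computation of $\dot\tau_{\eta_t}|_{t=0}$ from \eqref{att} should reproduce $\mathbb{E}_\eta[\widetilde\tau_\eta\cdot B_\eta(\mathfrak{m},\mathfrak{f})]$, confirming that $\widetilde\tau_\eta$ is the efficient influence function relative to the score operator $B_\eta$ and hence that $\xi_\eta$ solves the LFD equation in the appropriate information-weighted sense.

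The main obstacle is purely bookkeeping: keeping track of the exponential-family normalization so that the $A'/q'$ term collapses cleanly to $a\,m_\eta$, and verifying that the centering $\mathbb{E}_\eta[\xi_\eta^f] = 0$ is compatible with the normalization by $\pi_\eta$ in the denominator of \eqref{att} so that the treated-arm contribution to $\widetilde\tau_\eta$ carries the correct $-\tau_\eta d/\pi_\eta$ summand.
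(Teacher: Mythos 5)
Your proposal is correct and follows essentially the same route as the paper's own proof: differentiate the perturbed log-likelihood to obtain the score operator, use the exponential-family identity $A'(m)/q'(m)=a\,m_\eta$ to collapse the control-arm score to $a(1-d)(y-m_\eta(x))\mathfrak{m}(x)$, and solve $B_\eta\xi_\eta=\widetilde\tau_\eta$ by matching the treated ($\propto d$) and control ($\propto 1-d$) pieces of the influence function. The pathwise-differentiability step that you only sketch as a consistency check is precisely the quotient-rule computation the paper carries out explicitly, so no substantive gap remains.
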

\begin{proof}
For the submodel defined in (\ref{submodel}), the definition of the joint density $p_\eta$ given \eqref{condpdf:eta} evaluated at the perturbation $\eta_t$ yields
	\begin{align*}
	\log p_{\eta_t}(y,d,x) &= 
		 \log c(y) +ay(\eta^m+t\mathfrak{m})(x) -A(q^{-1}(\eta^m+t\mathfrak{m}))(x)\\
	& + t\mathfrak{f}(\tilde z) - \log\int e^{t\mathfrak{f}(\tilde z)} f(\tilde z)\,\mathrm{d}\tilde z + \log f(\tilde z).
	\end{align*}	
Taking derivative with respect to $t$ and evaluating at $t=0$ gives the score operator:
\begin{equation}\label{score_ate}
B_{\eta}(\mathfrak{m},\mathfrak{f}_X)(Z)= B_{\eta}^{m}\mathfrak{m}(Z) + B_{\eta}^{f}\mathfrak{f}_X(Z),
\end{equation}
where $B_{\eta}^{f}\mathfrak{f}_X(Z)= \mathfrak{f}_X(\tilde Z)$ and 
\begin{eqnarray*}
B_{\eta}^{m}\mathfrak{m}(Z) &=& (1-D)\left[a\Delta Y-\frac{A^\prime(m_{\eta}( X))}{q^\prime(m_{\eta}(X))}\right]\mathfrak{m}(X),\notag\\
&=& a(1-D)\left(\Delta Y-m_{\eta}( X)\right)\mathfrak{m}(X).
\end{eqnarray*}
In the last equation, we made use of the relation
\begin{eqnarray*}
A^\prime(m_{\eta}(x)) 
& =& q^\prime(m_{\eta}(x))\int ayc(y)\exp\left[q(m_{\eta}(x)) ay-A(m_{\eta}(x))\right]\mathrm{d}y\\
&=&  q^\prime(m_{\eta}(x))\,\mathbb{E}_{\eta}\left[a\Delta Y|D=0,X=x\right],
\end{eqnarray*}
which follows from the exponential family assumption. 
The efficient influence function for estimation of the ATT $\tau_\eta$ in \eqref{eif_att} is given by  $	\widetilde \tau_{\eta}(\Delta Y,D,X) = \gamma_\eta(D,X)(\Delta Y-m_\eta(X))-\frac{D}{\bar\pi_\eta}\tau_\eta$.
The score operator $B_{\eta}$ given in \eqref{score_ate} applied to 
\begin{equation*}
	\xi_{\eta}(y,d,x)
	= \left(\frac{\gamma_\eta(0,x)}{a}, \frac{d\big(y-m_\eta(x)-\tau_\eta\big)}{\bar\pi_\eta}\right)
\end{equation*}
yields  $B_{\eta}\xi_{\eta}=\widetilde{\tau}_{\eta}$. The result follows as $\widetilde{\tau}_{\eta}$ is the efficient influence function under $P_\eta$ given our assumptions following \cite{sant2020doubly}.
	\end{proof}	

\section{Posterior Contraction with Misspecified Likelihood and Clustered Data} \label{appendix:misspecified}
In this part, we present the formal result to establish the posterior contraction, when the working likelihood of the conditional distribution for the controlled outcome is standard normal. The true likelihood can be non-normal and admits the clustering structure. The asymptotic analysis lets the number of clusters $J\to\infty$, while the number of observations $n_j$ within each cluster remains bounded. Recall that $\mathcal{M}$ denotes the class of uniformly bounded functions mapping $\mathcal{X}$ to $\mathbb{R}$. We let $	N\left(\varepsilon,\mathcal{M},\Vert\cdot \Vert_{2,F_0}\right)$ be the covering number for any small $\varepsilon>0$.
\begin{lemma}\label{lemma:MisRateCluster}
	Let Assumption \ref{Ass:unconfounded} hold and consider the model \eqref{model:potential:control}, where $m_0\in\mathcal{M}$. Let the data consist of independent clusters $Z_j=\{Z_{jk}:k=1,\ldots,n_j\}$, $ j=1,\ldots,J$. Suppose that the
	cluster sizes are uniformly bounded, that is,	$\sup_j n_j \leq \bar n <\infty$. 	Suppose further that the unobservables $U_{jk}$ satisfy, for all $t\geq 0$,
	\begin{align}\label{ErrorMomentCluster}
		\sup_{j,k}\sup_{x\in\mathcal X}
		P_0\left(
		|U_{jk}|>t \mid D_{jk}=0, X_{jk}=x
		\right)
		\lesssim e^{-\underline{c} t^\beta}
	\end{align}
	for some constants $\underline{c}>0$ and $\beta>1$.
	If $\varepsilon_J$ is a sequence of positive numbers satisfying
	$\varepsilon_J\to 0$ and $J\varepsilon_J^2\to\infty$ such that, for some constant
	$c>0$ and all $J$,
	\begin{align}
		\Pi\left(
		m_{\eta}\in\mathcal{M}:
		P_0(m-m_0)^2\leq \varepsilon_J^2\right)
		&\geq e^{-cJ\varepsilon_J^2}, \label{PriorMass}
		\\
		N\left(
		\varepsilon_J,\mathcal{M},\Vert\cdot\Vert_{2,F_0}
		\right)
		&\leq e^{J\varepsilon_J^2}, \label{CoveringNumber}
	\end{align}
	then, we have
	\begin{align*}
		\Pi\left(
		m_{\eta}\in\mathcal{M}:
		P_0(m-m_0)^2\geq M\varepsilon_J^2
		\mid Z^{(J)}
		\right)
		\to_{P_0}0 ,
	\end{align*}
	for a sufficiently large constant $M>0$.
\end{lemma}
The conclusion in Lemma \ref{lemma:MisRate} also follows from the proof of Lemma \ref{lemma:MisRateCluster} by taking the number of clusters to be the same as the total sample size, when each cluster only has one observation. In this case, the working normal likelihood satisfies:
\begin{equation*}
	\log\frac{p_{c,\eta^m}}{p_{c,0}}(Z)=-\frac{1}{2}(m_{\eta}-m_0)^2(X)+U(m_{\eta}-m_{0})(X),
\end{equation*}
cf. Equation (4.1) in \cite{kleijn2006misInf}.

In our analysis, we maintain the assumption that the prior class of functions contains the true conditional mean $m_0$ and only examine the consequence of misspecifying the likelihood function. In other words, $m^*=m_0$, where $m^*$ stands for the function that minimizes the KL divergence of $p_{c,\eta^m}$ with respect to $p_{c,0}$. For general misspecified models, the complication in nonparametric Bayesian analysis is due to the covering number for testing under misspecification, which is introduced by \cite{kleijn2006misInf}. In the context of regression models with light-tailed errors, the problem becomes more tractable under reasonable assumptions with bounded conditional mean functions. Therein, one can obtain the posterior rate of contraction under standard conditions for the entropy number and prior mass, as in Theorem 4.1 of \cite{kleijn2006misInf}.

 \begin{proof}[Proof of Lemma \ref{lemma:MisRateCluster}]
 	The posterior is computed from the product working likelihood, whereas the
 	probability statements are taken under the true clustered law \(P_0\), under
 	which the clusters \(Z_1,\dots,Z_J\) are independent. Let \(P_0^c\) denote the distribution of the control sample, conditional on
 	\(D=0\), and let \(F_0^c\) denote the corresponding marginal distribution of
 	\(X\). 
 	
 	We first observe that, for every \(M>0\),
 	\begin{align*}
 		\E_0\left[e^{M |U_{jk}|}\mid D_{jk}=0, X_{jk}=x\right]
 		&= 1 + \int_0^\infty M e^{M t} \,
 		\P_0\left(|U_{jk}| > t \mid D_{jk}=0,  X_{jk}=x\right)\, \mathrm  dt .
 	\end{align*}
 	Consequently, under condition \eqref{ErrorMomentCluster}, there exists a
 	constant \(K>0\) such that
 	\begin{align*}
 		\E_0\left[e^{M |U_{jk}|}\mid D_{jk}=0, X_{jk}=x\right]
 		&\le 1 + M K \int_0^\infty e^{M t - \underline{c} t^\beta} \,\mathrm dt .
 	\end{align*}
 	Since \(\beta>1\), \(\underline{c} t^\beta\) dominates \(M t\) as \(t\to\infty\). Hence,
 	for every \(M>0\),
 	\begin{align*}
 		\sup_{j,k}\sup_{x\in\mathcal X}
 		\E_0\left[e^{M |U_{jk}|}\mid D_{jk}=0, X_{jk}=x\right]
 		<\infty .
 	\end{align*}
 	Moreover, the same argument implies that, for every \(k\in\mathbb N\) and every
 	\(M>0\),
 	\begin{align*}
 		\sup_{j,k}\sup_{x\in\mathcal X}
 		\E_0\left[|U_{jk}|^k e^{M |U_{jk}|}\mid D_{jk}=0, X_{jk}=x\right]
 		<\infty .
 	\end{align*}
 	
 	We verify the analogues of the inequalities in \((4.2)\) of
 	\cite{kleijn2006misInf} at the cluster level.  Since \(m_0\in\mathcal M\), the
 	pseudo-true regression function equals \(m_0\).	For cluster \(j\), define the working likelihood ratio
 	\begin{align*}
 	L_j(m_\eta,m_0)
 	=
 	\prod_{k=1}^{n_j}
 	\frac{p_{c,\eta^m}(Z_{jk})}{p_{c,0}(Z_{jk})}.
 	\end{align*}
 	Using $\E_0[U_{jk}\mid D_{jk}=0,X_{jk}]=0$, we obtain
 	\begin{align*}
 		P^c_0\left(\log L_j(m_{\eta},m_0)\right)
 		=
 		-\frac{1}{2}
 		\sum_{k=1}^{n_j}
 		P_0^c\left[(m-m_0)^2\right]
 		\leq
 		-\frac{1}{2}
 		P_0^c\left[(m-m_0)^2\right].
 	\end{align*}
 	Summing over independent clusters gives
 	\begin{align*}
 		\sum_{j=1}^J
 		P^c_0\left(\log L_j(m_{\eta},m_0)\right)
 		&\leq
 		-\frac{J}{2}
 		P_0^c\left[(m_{\eta}-m_0)^2\right],
 	\end{align*}
 	up to some multiplicative constant depending only on the bounded cluster
 	size.
 	
 	Next, using the inequality $
 	\left(\sum_{k=1}^{n_j} a_k\right)^2
 	\leq n_j \sum_{k=1}^{n_j} a_k^2$, and the uniform bound \(n_j\leq \bar n\), we obtain
 	\begin{align*}
 		P^c_0\left(\log L_j(m_0,m)\right)^2
 		&\leq
 		\bar n
 		\sum_{k=1}^{n_j}
 		P^c_0\left(
 		\log\frac{p_{c,0}(Z_{jk})}{p_{c,\eta^m}(Z_{jk})}
 		\right)^2
 		\\
 		&\lesssim
 		\sum_{k=1}^{n_j}
 		P_0^c\left[(m-m_0)^2\right]
 	\lesssim
 		P_0^c\left[(m-m_0)^2\right].
 	\end{align*}
 	Indeed, for each observation,
 	\begin{align*}
 		P^c_0\left(
 		\log\frac{p_{c,0}(Z_{jk})}{p_{c,\eta^m}(Z_{jk})}
 		\right)^2
 		&\leq
 		P_0^c\left[(m_{\eta}-m_0)^4\right]
 		+
 		2P_0^c\left[U_{jk}^2(m_{\eta}-m_0)^2\right]
 		\\
 		&=
 		P_0^c\left[(m_{\eta}-m_0)^4\right]
 		+
 		2P_0^c\left[
 		\E_0\left(U_{jk}^2\mid D_{jk}=0,X_{jk}\right)(m_{\eta}-m_0)^2
 		\right]
 		\\
 		&\lesssim
 		P_0^c\left[(m_{\eta}-m_0)^2\right],
 	\end{align*}
 	where we used the uniform boundedness of \(\mathcal M\) and
 	\[
 	\sup_{j,k}\sup_{x\in\mathcal X}
 	\E_0[U_{jk}^2\mid D_{jk}=0,X_{jk}=x]<\infty .
 	\]
 	
 	It remains to check the weighted square-log bound. For any \(\alpha\in(0,1)\),
 	boundedness of \(\mathcal M\) implies that, for a sufficiently large constant
 	\(C<\infty\),
 	\begin{align*}
 		\left(\frac{p_{c,\eta^m}(Z_{jk})}{p_{c,0}(Z_{jk})}\right)^\alpha
 		&\leq
 		\exp\left\{2\alpha(C^2+C|U_{jk}|)\right\}.
 	\end{align*}
 	Therefore,
 	\begin{align*}
 		& P^c_0\left[
 		\left(\log L_j(m_{\eta},m_0)\right)^2
 		L_j(m_{\eta},m_0)^\alpha
 		\right]
 		\\
 		&\quad \leq
 		\bar n
 		\sum_{k=1}^{n_j}
 		P_0\left[
 		\left(\log\frac{p_{c,\eta^m}(Z_{jk})}{p_{c,0}(Z_{jk})}\right)^2
 		\left(\frac{p_{c,\eta^m}(Z_{jk})}{p_{c,0}(Z_{jk})}\right)^\alpha
 		\prod_{\ell\neq k}
 		\left(\frac{p_{c,\eta^m}(Z_{j\ell})}{p_{c,0}(Z_{j\ell})}\right)^\alpha
 		\right].
 	\end{align*}
 	Since \(n_j\leq \bar n\) and \(\mathcal M\) is uniformly bounded, the remaining
 	within-cluster likelihood ratios are bounded by an exponential term involving
 	only finitely many \(|U_{j\ell}|\)'s. Hence, by the uniform exponential moment
 	bounds above,
 	\begin{align*}
 		 P_0\left[
 		\left(\log L_j(m_{\eta},m_0)\right)^2
 		L_j(m_{\eta},m_0)^\alpha
 		\right]
 		\lesssim
 		P_0^c\left[(m_{\eta}-m_0)^2\right].
 	\end{align*}
 	
 	In sum, the three inequalities in \((4.2)\) of \cite{kleijn2006misInf} continue to
 	hold after grouping observations into independent clusters, with constants that
 	may depend on the upper bound \(\bar n\) on the cluster size, but not on \(J\).
 	Since \(m_0\in\mathcal M\), the pseudo-true value is \(m^*=m_0\), and therefore
 	the additional orthogonality condition in Theorem 4.1 of
 	\cite{kleijn2006misInf} is automatically satisfied:
 	\begin{align*}
 		P_0^c\left[(m_{\eta}-m^*)(m^*-m_0)\right]
 		&=0 .
 	\end{align*}
 	
 	Applying Theorem 4.1 of \cite{kleijn2006misInf} to the independent clusters
 	\(Z_1,\ldots,Z_J\), the prior mass and entropy conditions with
 	\(J\varepsilon_J^2\) imply that, for a sufficiently large constant \(M>0\),
 	\begin{align*}
 		\Pi\left(
 		m\in\mathcal M:
 		P_0^c\left[(m_{\eta}-m_0)^2\right]\geq M\varepsilon_J^2
 		\mid Z^{(J)}
 		\right)
 		&\to_{P_0}0 .
 	\end{align*}
Note that the conditional density of covariates given $D=0$ is defined by
\begin{equation*}
	f_{X|D=0}(x):=\frac{P_0(D=0\mid X=x )f_X(x)}{P_0(D=0 )},
\end{equation*}
which means 
\begin{equation*}
	f_X(x)=\frac{1-\bar\pi_0}{1-\pi_0(x)}	f_{X|D=0}(x)
\end{equation*}
Under the overlapping condition in Assumption \ref{Ass:unconfounded}, we have $1-\bar\pi_0\leq 1$ and $(1-\pi_0(x))^{-1}\leq \epsilon^{-1}$. Therefore, the convergence in $L_2$ norm with respect to the conditional distribution given $D=0$ also holds with respect to the marginal distribution of covariates $F_X$. This proves the clustered version of the claim.
 \end{proof}

\section{Cross-Fitted Version}\label{appendix:crossfit}
Admittedly,  our cross-fitted procedure does not fall into the Bayesian paradigm in the strict sense. When we apply the Bayes' rule to obtain the fold-specific posterior draws of the conditional mean function, the other folds are used in a non-Bayesian way for pilot estimators. Also, when we move along different folds, their roles change. Nonetheless, the method remains pragmatically Bayesian, as we make use of the posterior draws assembled together after the cross-fitting process, and we build the corresponding credible set conditional on the observed data.  Algorithm \ref{algorithm_2_cf} describes the procedure

We denote the likelihood function in the $k$-th fold by $\ell_{n_k}(m^k_{\eta})$ for $k=1,2$. In similar spirit, the vectors $Z^{(n_k)}$ and $W^{(n_k)}$ denote the observable data and Bayesian bootstrap weight for the $k$--th fold. Regarding the pilot estimator, we denote the conditional mean and propensity score computed over the data excluding the $k$-th fold by $\widehat{m}^{(-k)}(\cdot)$ and $\widehat{\pi}^{(-k)}$. Thereafter, we define the estimated Riesz representer as 
\begin{eqnarray}\label{riesz_cf}
\widehat{\gamma}^{(-k)}(d,x):=\frac{d}{\hat{\bar\pi}^{(-k)}}-\frac{(1-d)\widehat{\pi}^{(-k)}(x)}{\hat{\bar\pi}^{(-k)}(1-\widehat{\pi}^{(-k)}(x))},
\end{eqnarray}
and the estimated bias term as $\widehat{b}^{k}_\eta =	\mathbb{P}_{n_{k}}[(\widehat m^{(-k)}-m^k_{\eta})\widehat{\gamma}^{(-k)}]$. We denote the frequentist DML estimator by $\widehat{\tau}^{k}$ computed in the $k$--th fold, which serves as the centering point in the BvM theorem.  

To outline the theoretical justification, we focus on the two-fold case without loss of generality. The extension to the general $K$--fold case, for any $K>2$, is straightforward. In this case, we write $\widehat{\gamma}^2:=\widehat{\gamma}^{(-1)}$ and $\widehat{\gamma}^1:=\widehat{\gamma}^{(-2)}$ to further lower the notational burden. Because the prior is created by plugging in pilot estimators computed over the other fold, the random variable $\tau_{\eta}^{1}$ depends on $Z^{(n_2)}$ via the propensity score adjusted prior $\Pi_{\widehat\gamma^{2}}(\cdot)$ by construction. Given the priors, we apply the Bayes' rule to obtain two sequences of independent draws for the posterior of the conditional mean, as well as the Bayesian bootstrap weights. The resulting posterior draws from the distributions $\mathcal{L}_{\Pi_{\widehat\gamma^{2}}}(\cdot\mid Z^{(n_k)})$, which encode the randomness of the conditional mean function $m_\eta$ and the bootstrap weights $W_n$, are made independent over different folds. To clarify the notion of independence, it is helpful to view the construction as producing a distribution for the ATT within each fold. Although each distribution depends on the full dataset, the Bayesian perspective allows us to interpret it as a posterior distribution specific to that fold, even though our correction incorporates information from other folds. One can then sample independently from each fold-specific distribution and average the resulting draws.

\begin{algorithm}[H]
\caption{Doubly robust Bayesian Procedure with $K$-fold cross-fitting}
\label{algorithm_2_cf}
\begin{algorithmic}
    \STATE \textbf{Input:} Data $Z_i=(\Delta Y_i,D_i,X_i^\top)^\top$ for $i=1,\dots,n$, number of posterior draws $S$.
     \STATE Take a $K$-fold random partition $(I_{k})_{k=1}^K$ of indices, $\left\{1,\dots,  n\right\}$,  each with size $n_k$.
  \FOR{$k=1,\ldots, K$}
   \STATE \textbf{Initial estimators:}   
   \STATE  (a) Use the subsample $\left\{(D_i,X_i^\top)\right\}_{i\notin I_{k}}$ to calculate the quantity $\hat{\bar\pi}^{(-k)}=\sum_{i\notin I_{k}}D_i/\sum_{j\neq k}n_j$,  construct the estimated propensity score function $x\rightarrow \widehat \pi^{(-k)}(x)$,  and obtain $\widehat \gamma^{(-k)}(d,x)$ following (\ref{riesz_cf}). 
    \STATE  (b) Use the subsample $\left\{(\Delta Y_i,X_i^\top): D_i=0\right\}_{i\notin I_{k}}$ to construct the estimated conditional mean function $x\rightarrow\widehat m^{(-k)} (x)$ following (\ref{m:est}).
    \STATE \textbf{Prior Specification:}  Set the adjusted prior:
\begin{align}\label{prior:ps_cf}
m_{\eta}(x) = q^{-1}\left(\eta^m(x)\right),  \quad \eta^m(x)=W^m(x) + \lambda\,\widehat \gamma^{(-k)}(0,x), 
\end{align}    
where $W^m$ is the Gaussian process independent of $\lambda \sim N(0,\varsigma_n^2)$,  $\varsigma_n=\nu\log n_{c,k}/(\sqrt{n_{c,k}}\, \Gamma_{n,k})$, $\nu$ is the hyperparameter in (\ref{SE_cov}), $n_{c,k}=\sum_{i\in I_{k}}(1-D_i)$, and $\Gamma_{n,k}=\sum_{i\in I_{k}}\vert\widehat\gamma^{(-k)}(0,X_i)(1-D_i)\vert/n_{c,k}$.

\textbf{Posterior Computation:} 
    \FORNN{$s=1,\ldots, S$}
        \STATE  (a)
        Generate the $s$-th draw of the posterior of $(m_\eta(X_i))_{i\in I_{k}}$ using the adjusted prior in (\ref{prior:ps_cf}) and the subsample $\left\{(\Delta Y_i,X_i^\top): D_i=0\right\}_{i\in I_{k}}$; denote it as $(m^{k,s}_{\eta}(X_i))_{i\in I_{k}}$.
        \STATE (b)  Draw Bayesian bootstrap weights $M^{k,s}_{i}=e^{k,s}_i/\sum_{j\in I_{k}} e^{k,s}_j$ where $e_i^{k,s} \stackrel{\text{iid}}{\sim} \textup{Exp}(1)$, $i\in I_{k}$.
      
        \STATE (c) Calculate the corrected posterior draw for the ATT: $\check{\tau}_\eta^{k,s}=\tau_\eta^{k,s}-\widehat{b}^{k,s}_{\eta}$
        where 
        \begin{equation}
        \tau_\eta^{k,s}=\frac{\sum_{i\in I_{k}} M_{i}^{k,s}D_i \big(\Delta Y_i-m_\eta^{k,s}(X_i)\big)}{\sum_{i\in I_{k}} M_{i}^{k,s} D_i},
        \end{equation}
and 
\begin{align}\label{recentering_term_cf}
\widehat{b}^{k,s}_{\eta}=\frac{1}{n_k}\sum_{i\in I_{k}} \widehat{\gamma}^{(-k)}(D_i,X_i) (\widehat m^{(-k)}-  m_{\eta}^{k,s})(X_i).
\end{align}
    \ENDFOR
     \ENDFOR
    \STATE \textbf{Output: $\{\check{\tau}_\eta^s=\frac{1}{K}\sum_{k=1}^K \check\tau_\eta^{k,s}:s=1,\ldots,S\}$} 
\end{algorithmic}
\end{algorithm}

	\begin{theorem}\label{thm:BvMCF}
	Let Assumptions \ref{Ass:unconfounded}, \ref{Assump:NonDRSE}(i), \ref{Assump:Rate}, \ref{Assump:Donsker}, and \ref{Assump:Prior} hold. Consider the propensity score adjusted prior \eqref{prior:ps_cf} on $\eta^m$ and an independent Dirichlet process prior on $F$. Then we have
	\begin{equation*}
		d_{BL}\left(	\mathcal{L}_{\Pi_{\widehat{\gamma}^2}}\left(\frac{\sqrt{n_1}}{\sqrt 2}(\tau_{\eta}^{1}-\widehat{\tau}^{1}-\widehat{b}^1_{\eta})|Z^{(n_1)}\right)\ast	\mathcal{L}_{\Pi_{\widehat{\gamma}^1}}\left(\frac{\sqrt{n_2}}{\sqrt 2}(\tau_{\eta}^{2}-\widehat{\tau}^{2}-\widehat{b}^2_{\eta})|Z^{(n_2)}\right), N(0,\textsc v_0) \right)\to_{P_0} 0.
	\end{equation*}
\end{theorem}

\begin{proof}
To establish the desired conditional convergence of the corresponding distributions, we consider the posterior Laplace transforms
defined for $k=1,2$ by:
\begin{align*}
	I_{n_k}(t)&=\mathbb{E}_{\Pi_{\widehat\gamma^{3-k}}}[e^{t\sqrt{n_k}/\sqrt{2}(\check{\tau}_{\eta}^{k}-\widehat\tau^k-\widehat{b}^k_{\eta})}|Z^{(n_k)}]\\
	&=\frac{\iint e^{t\sqrt{n_k}/\sqrt 2(\check{\tau}_{\eta}^{k}-\widehat\tau^k-\widehat{b}^k_{\eta}) } \mathrm{d}\Pi_W(W^{(n_k)}\mid Z^{(n_k)})e^{\ell_{n_k}(m^k_{\eta})}\mathrm{d}\Pi_{\widehat\gamma^{3-k}}(m^k_{\eta})}{\int e^{\ell_{n_k}(m^k_{\eta})} \mathrm{d}\Pi_{\widehat\gamma^{3-k}}(m^k_{\eta})}.
\end{align*}
The conditional Laplace transform $I_{n_k}(t)$ depends not only on the observation $Z^{(n_k)}$ itself, but also on $Z^{(n_{-k})}$ through the pilot estimation for $k=1,2$. In other words, both $I_{n_1}(t)$ and $I_{n_2}(t)$ are functions of the entire dataset $(Z^{(n_1)}, Z^{(n_2)})$. Despite this complication, we have
\begin{align*}
I_{n_1}(t)\times 	I_{n_2}(t)&=\prod_{k\in\{1,2\}}\frac{\iint e^{t\sqrt{n_k}/\sqrt 2(\check{\tau}_{\eta}^{k}-\widehat\tau^k-\widehat{b}^k_{\eta}) } \mathrm{d}\Pi_W(W^{(n_k)}\mid Z^{(n_k)})e^{\ell_{n_k}(m^k_{\eta})}\,\mathrm{d}\Pi_{\widehat\gamma^{3-k}}(m^k_{\eta})}{\int e^{\ell_{n_k}(m^k_{\eta})}\, \mathrm{d}\Pi_{\widehat\gamma^{3-k}}(m^k_{\eta})}\\
		&\to_{P_0} \exp(\textsc v_0t^2/4)\times \exp(\textsc v_0t^2/4)=\exp(\textsc v_0t^2/2),
\end{align*}
where the convergence in probability has already been established in the proof of Theorem \ref{thm:Debias}. This verifies convergence in probability of the product of the two posterior Laplace transforms for every $t$ in a neighborhood of zero, which in turn implies conditional convergence of the convoluted law of two fold-specific posteriors to the desired normal limiting distribution in the bounded Lipschitz distance.
\end{proof}

	\section{Auxiliary Results}\label{appendix:auxiliary:results}
\subsection{Useful Lemmas}

\subsubsection{Results on Conditional Weak Convergence}
We first present a useful result from part of Theorem 1.13.1 in \cite{van1996empirical} concerning conditional weak convergence. To do so, we introduce a sequence of random variables $S_n$, a subfield $\mathcal{H}_n$ of their associated $\sigma$-algebra, and a Borel probability measure $L$.
\begin{lemma}\label{lemma:WeakConv}
 The following two statements are equivalent:
(i) $d_{BL}(\mathcal{L}(S_n\mid \mathcal{H}_n),L)\overset{P_0}{\to} 0$; (ii) for every $t$ in some neighborhood of $0$, 
\begin{equation*}
	\mathbb{E}[e^{t S_n}\mid \mathcal{H}_n]\overset{P_0}{\to} \int e^{t x}dL(x)<\infty.
\end{equation*}
\end{lemma}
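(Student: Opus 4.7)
The plan is to reduce the statement to the corresponding deterministic equivalence and then invoke classical results, since this lemma is essentially a restatement of part of Theorem 1.13.1 in \cite{van1996empirical}. The key maneuver throughout is the subsequence characterization of convergence in $P_0$-probability: a sequence of real-valued random variables $Y_n$ satisfies $Y_n \to 0$ in $P_0$-probability if and only if every subsequence admits a further subsequence along which $Y_n \to 0$ $P_0$-almost surely. Applying this to both $d_{BL}(\mathcal{L}(S_n\mid\mathcal{H}_n),L)$ and to each of the real-valued functionals $\mathbb{E}[e^{tS_n}\mid\mathcal{H}_n] - \int e^{tx}\,dL(x)$ reduces the problem to showing, for each fixed sample point along an almost sure subsequence, the deterministic equivalence between bounded Lipschitz convergence of the conditional laws and pointwise convergence of their Laplace transforms on a neighborhood of zero.

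For the direction (ii) $\Rightarrow$ (i), I would pick a countable dense set $T$ in the neighborhood of $0$ on which (ii) holds and, using a diagonal argument, extract an almost sure subsequence along which $\mathbb{E}[e^{tS_n}\mid\mathcal{H}_n]\to\int e^{tx}dL(x)$ simultaneously for every $t\in T$. Since the limiting Laplace transform is finite in a neighborhood of $0$, all moments of $L$ exist and the family $\{\mathcal{L}(S_n\mid\mathcal{H}_n)\}$ is tight along the subsequence by a Markov-inequality argument on exponential moments. Curtiss's continuity theorem for moment generating functions then yields weak convergence of the conditional laws to $L$, which implies $d_{BL}(\mathcal{L}(S_n\mid\mathcal{H}_n),L)\to 0$ along the subsequence.

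For the direction (i) $\Rightarrow$ (ii), pass again to an almost sure subsequence for $d_{BL}$ and approximate $x\mapsto e^{tx}$ by a truncated bounded Lipschitz function $\phi_K(x) = e^{tx}\,\mathbf{1}\{|x|\le K\} +(\text{linear cutoff})$. Convergence in bounded Lipschitz distance handles the truncated piece, and the tails $\mathbb{E}[e^{tS_n}\mathbf{1}\{|S_n|>K\}\mid\mathcal{H}_n]$ are controlled by a uniform integrability argument that uses the finiteness of the limiting Laplace transform at some $t'$ with $|t|<|t'|$ (which is available by enlarging the neighborhood slightly if necessary, or by assuming (ii) holds at $t'$ as well); this bounds the conditional moment $\mathbb{E}[e^{t'S_n}\mid\mathcal{H}_n]$ in probability, yielding the required uniform integrability by a Hölder/Chebyshev comparison.

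The main obstacle will be the tail control in the (i) $\Rightarrow$ (ii) step: bounded Lipschitz convergence is too weak on its own to deliver convergence of unbounded continuous functionals, so the argument hinges on combining it with the a priori exponential moment bound coming from the limiting Laplace transform. Since each of these pieces is entirely classical and since the statement reproduces the conclusion of Theorem 1.13.1 of \cite{van1996empirical}, the cleanest presentation in the paper is to cite that result directly after observing that the exponential family $\{e^{t\cdot}:t\in U\}$ forms a separating class of test functions for the bounded Lipschitz topology under the assumed moment condition.
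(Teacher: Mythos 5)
The paper offers no proof of this lemma at all: it is stated as ``part of Theorem 1.13.1 in \cite{van1996empirical}'' and simply cited, so any self-contained argument is necessarily a different route from the paper's. Your reconstruction of the direction (ii) $\Rightarrow$ (i) is essentially correct, and it is the only direction the paper ever uses (in the proofs of Theorems \ref{BvM:thm:standard} and \ref{thm:BvM} the conditional Laplace transform $I_n(t)$ is computed and the lemma is invoked to conclude conditional weak convergence). The chain you describe --- subsequence characterization of convergence in $P_0$-probability, diagonal extraction over a countable dense set $T$ of arguments, upgrade from $T$ to the full neighborhood by convexity of moment generating functions, tightness from the exponential moment bound, and Curtiss's continuity theorem --- is the standard way to make the citation self-contained and is sound.

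The direction (i) $\Rightarrow$ (ii), however, contains a genuine gap that your own ``main obstacle'' paragraph does not close. Bounded Lipschitz convergence gives no control over the tails of the unbounded functional $x\mapsto e^{tx}$, and the finiteness of $\int e^{t'x}\,dL(x)$ is a statement about the limit $L$ only; it implies nothing about $\mathbb{E}[e^{t'S_n}\mid\mathcal{H}_n]$. Your parenthetical fix --- ``or by assuming (ii) holds at $t'$ as well'' --- assumes precisely the conclusion you are trying to establish. The implication is in fact false without an additional uniform exponential-integrability hypothesis: take $\mathcal{L}(S_n\mid\mathcal{H}_n)=(1-n^{-1})\delta_0+n^{-1}\delta_n$ deterministically, so that $d_{BL}\big(\mathcal{L}(S_n\mid\mathcal{H}_n),\delta_0\big)\le 2/n\to 0$ while $\mathbb{E}[e^{tS_n}\mid\mathcal{H}_n]=1-n^{-1}+n^{-1}e^{tn}\to\infty$ for every $t>0$. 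You should either restrict the lemma to the implication (ii) $\Rightarrow$ (i), which suffices for everything in the paper, or add a hypothesis of the form $\sup_n\mathbb{E}[e^{t'|S_n|}\mid\mathcal{H}_n]=O_{P_0}(1)$ for some $t'$ exceeding in modulus the $t$'s under consideration, under which your truncation and uniform-integrability argument does go through.
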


We now state a conditional Slutsky result, which coincides with Lemma 10 in \cite{yiu2023corrected} and is included here for completeness.
\begin{lemma}\label{lemma:Slutsky}
Let $Z^{(n)}=(Z_1,\ldots,Z_n)$ be \textit{i.i.d.} variables from a distribution $P_0$ on a Polish sample space $(\mathcal{Z},\mathcal{A})$. Suppose that $(P_n)_n$ is a sequence of random probability measures on $(\mathbb{R}^2,\mathcal{B}(\mathbb R^2))$ such that $P_n$ is $\sigma(Z^{(n)})$-measurable for each $n$.
Let $(U_n,V_n)$ be variables each taking values in $\mathbb R$ with $(U_n,V_n)|P_n\sim P_n$ and denote the marginals by $P_n^U$ and $P_n^V$ for $U_n$ and $V_n$ respectively. Suppose that 
\begin{eqnarray*}
	d_{BL}(P_n^U,P^U)\rightarrow^{P_0}0\\
	d_{BL}(P_n^V,\delta_{\{c\}})\rightarrow^{P_0}0,
\end{eqnarray*}
where $P^U$ is a fixed probability measure on $(\mathbb{R},\mathcal{B}(\mathbb R))$, and $c$ is a fixed constant in $\mathbb R$. Then we have
\begin{eqnarray*}
	d_{BL}(\mathcal L(U_n+V_n| P_n),\mathcal L(U_n+c| P_n))\rightarrow^{P_0}0\\
	d_{BL}(\mathcal L(U_nV_n| P_n),\mathcal L(cU_n| P_n))\rightarrow^{P_0}0.
\end{eqnarray*}
\end{lemma}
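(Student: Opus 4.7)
The plan is to work from the definition of the bounded Lipschitz distance, $d_{BL}(P,Q)=\sup_{f\in BL(1)}|\int f\,d(P-Q)|$, and to control, uniformly in $f\in BL(1)$, the $P_n$-conditional expectation of $|f(U_n+V_n)-f(U_n+c)|$ (for the sum) and of $|f(U_nV_n)-f(cU_n)|$ (for the product), using only the concentration of $V_n$ at $c$ together with the tightness of $P^U$. Any $f\in BL(1)$ is both $1$-Lipschitz and uniformly bounded by $1$, so it satisfies the pointwise bound $|f(a)-f(b)|\leq \min(|a-b|,2)$, which is the basic inequality driving both statements.

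For the sum statement, this bound gives
\begin{equation*}
\bigl|\E_{P_n}[f(U_n+V_n)]-\E_{P_n}[f(U_n+c)]\bigr|\leq \E_{P_n}\bigl[\min(|V_n-c|,2)\bigr].
\end{equation*}
The map $v\mapsto \min(|v-c|,2)$ is $1$-Lipschitz and bounded by $2$, hence lies in $3\cdot BL(1)$, and it vanishes at $v=c$. Consequently, the right-hand side is at most $3\,d_{BL}(P_n^V,\delta_{c})\to_{P_0}0$. Taking the supremum over $f\in BL(1)$ delivers the first conclusion.

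For the product statement, the analogous bound reads $|f(U_nV_n)-f(cU_n)|\leq \min(|U_n||V_n-c|,2)$, and the unbounded factor $|U_n|$ obstructs a direct application of the same argument. I would introduce a truncation level $M>0$ and use the purely deterministic pointwise inequality
\begin{equation*}
\min(|U_n||V_n-c|,2)\leq \min(M|V_n-c|,2)+2\,\mathbf{1}\{|U_n|>M\}.
\end{equation*}
For fixed $M$ the first term has $P_n$-expectation at most $(2+M)\,d_{BL}(P_n^V,\delta_{c})$, which tends to zero in $P_0$-probability. To handle the second term, pick $\phi_M\in BL(1)$ equal to $0$ on $\{|u|\leq M\}$, equal to $1$ on $\{|u|\geq M+1\}$, and linear between; then $P_n(|U_n|>M+1)\leq \E_{P_n}[\phi_M(U_n)]\leq P^U(|U|>M)+d_{BL}(P_n^U,P^U)$. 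Given $\varepsilon>0$, first choose $M$ so large that $P^U(|U|>M)<\varepsilon$, which is possible because $P^U$ is a fixed (hence tight) probability measure, and then send $n\to\infty$ so that the second summand vanishes in $P_0$-probability. A standard $\varepsilon$-diagonal argument closes the proof.

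The main obstacle is the product case, because $P_n$ is a joint distribution on $\mathbb{R}^2$ without any independence assumption, and the interaction between the possibly unbounded factor $U_n$ and the concentrating factor $V_n$ cannot be dismantled by factorization of expectations. The key insight sidestepping this difficulty is that the deterministic truncation inequality above reduces the integrand to two pieces whose $P_n$-expectations depend only on the marginals $P_n^V$ and $P_n^U$, so the hypotheses on $d_{BL}(P_n^V,\delta_c)$ and $d_{BL}(P_n^U,P^U)$ together with tightness of $P^U$ suffice.
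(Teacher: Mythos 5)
Your argument is correct, but note that the paper does not actually prove this lemma: it states it verbatim as Lemma 10 of \cite{yiu2023corrected} and defers to that reference, so there is no in-paper proof to compare against. Your self-contained route --- bounding $\sup_{f\in BL(1)}\big|\E_{P_n}[f(g_1(U_n,V_n))-f(g_2(U_n,V_n))]\big|$ via the pointwise inequality $|f(a)-f(b)|\le\min(|a-b|,2)$, reducing everything to integrals against the marginals $P_n^V$ and $P_n^U$, and handling the product case by truncation plus tightness of the fixed limit $P^U$ --- is exactly the standard way to prove such a conditional Slutsky statement, and the observation that the truncated decomposition decouples into a $v$-only piece and a $u$-only piece (so no joint structure of $P_n$ is needed) is the right key point. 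Two cosmetic slips: with the paper's normalization $\sup|f|+\mathrm{Lip}(f)\le 1$, your interpolating function $\phi_M$ lies in $2\cdot BL(1)$ rather than $BL(1)$ (so the bound picks up a factor $2\,d_{BL}(P_n^U,P^U)$, which is harmless); and there is an off-by-one mismatch between the event $\{|U_n|>M\}$ appearing in your truncation inequality and the event $\{|U_n|>M+1\}$ that $\phi_M$ controls --- either truncate at level $M+1$ in the deterministic inequality or use the interpolant supported on $[M-1,M]$. Neither affects the validity of the argument.
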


	We now state the following generalization of Lemma 1 from \cite{ray2020causal}, where the function $g(\cdot)$ may depend on random variables beyond the covariates $X$. A close inspection of their proof shows that the argument remains valid when $g(\cdot)$ is a function of $Z = (Y, D, X^\top)^\top$.
\begin{lemma}\label{lemma:DP}
	Suppose $\mathcal{G}_n$ is a sequence of separable classes of measurable functions, such that
	\begin{equation*}
		\sup_{g\in\mathcal{G}_n}\left|\frac{1}{n}\sum_{i=1}^ng(Z_i)-\mathbb{E}_0[g(Z)]\right|\to_{P_0} 0,
	\end{equation*}
and there exists an envelope function $G_n$ such that $\mathbb{E}_0[G_n^{2+\delta}]=O(1)$, for some $\delta>0$. Then for every $t$ in a sufficiently small neighborhood of $0$, 
	\begin{equation*}
		\sup_{g\in\mathcal{G}_n}
		\left|\mathbb{E}_0\left[\exp\left(t\sqrt{n}\sum_{i=1}^n (M_{i}-1/n)g(Z_i)\right)\Bigm\vert Z^{(n)}\right]-\exp\left(t^2Var_0(g(Z))/2\right) \right|\to_{P_0} 0.
	\end{equation*}
\end{lemma}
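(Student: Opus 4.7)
The statement is a direct extension of Lemma~1 of \cite{ray2020causal}: since the Bayesian bootstrap weights $(M_{ni})_{i=1}^n$ are, by construction, drawn independently of $Z^{(n)}$, the fact that $g$ may depend on the full vector $Z_i$ rather than on $X_i$ alone is immaterial. I therefore adopt the Ray--van der Vaart strategy, making every step uniform over the sequence of classes $\mathcal{G}_n$.

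The first step uses the explicit representation $M_{ni} = e_i/T_n$, $T_n=\sum_j e_j$, $e_j\stackrel{iid}{\sim}\mathrm{Exp}(1)$, independent of $Z^{(n)}$. Since $\sum_i (g(Z_i)-\bar g_n)=0$ with $\bar g_n:=n^{-1}\sum_i g(Z_i)$,
\[
\sqrt n \sum_i (M_{ni}-1/n)\,g(Z_i) \;=\; \frac{n}{T_n}\,\xi_n(g), \qquad \xi_n(g):=\frac{1}{\sqrt n}\sum_i (e_i-1)\bigl(g(Z_i)-\bar g_n\bigr).
\]
By the SLLN on $(e_i)$, $n/T_n=1+O_{P_0}(n^{-1/2})$; together with the conditional bound $\mathbb{E}[\xi_n(g)^2\mid Z^{(n)}]=\mathbb{P}_n(g-\bar g_n)^2\le \mathbb{P}_n G_n^2=O_{P_0}(1)$ (uniformly in $g$ by the envelope condition), this allows replacing $n/T_n$ by $1$ in the exponent up to a multiplicative factor $1+o_{P_0}(1)$ that is uniform in $g\in\mathcal{G}_n$, via the elementary inequality $|e^{u+\varepsilon}-e^u|\le e^{|u|+|\varepsilon|}|\varepsilon|$.

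The second step exploits conditional independence of $(e_i)$ given $Z^{(n)}$ to factorize
\[
\mathbb{E}\bigl[\exp(t\,\xi_n(g))\bigm|Z^{(n)}\bigr] \;=\; \prod_{i=1}^n \varphi\!\left(\tfrac{t(g(Z_i)-\bar g_n)}{\sqrt n}\right), \qquad \varphi(s)=\frac{e^{-s}}{1-s},\ |s|<1,
\]
which satisfies $\log\varphi(s)=s^2/2+r(s)$, with $|r(s)|\le C|s|^3$ for $|s|\le 1/2$. Truncating the envelope at the scale $n^{1/(2+\delta)}$ guarantees the Taylor bound for all $i$ uniformly in $g$ and $|t|$ in a small neighborhood of $0$, while the tail event $\{\max_i G_n(Z_i)>n^{1/(2+\delta)}\}$ has vanishing $P_0$-probability by Markov applied to $G_n^{2+\delta}$. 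Summing the logarithms yields
\[
\tfrac{t^2}{2}\bigl(\mathbb{P}_n g^2-\bar g_n^2\bigr) \;+\; R_n(g), \qquad \sup_{g\in\mathcal{G}_n}|R_n(g)|\;\lesssim\; |t|^3\, n^{-1/2}\,\mathbb{P}_n G_n^3 \;=\; o_{P_0}(1),
\]
where after the truncation $\mathbb{P}_n G_n^3\le n^{(1-\delta)/(2+\delta)}\,\mathbb{P}_n G_n^{2+\delta}=O_{P_0}(n^{(1-\delta)/(2+\delta)})$, so the displayed product is indeed $o_{P_0}(1)$ for every $\delta>0$.

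It remains to convert the leading quadratic expression into $t^2\,\mathrm{Var}_0(g(Z))/2$ uniformly over $\mathcal{G}_n$. The uniform Glivenko--Cantelli hypothesis directly yields $\sup_{g\in\mathcal{G}_n}|\bar g_n-P_0 g|\to_{P_0}0$, and an analogous uniform LLN for the squared class $\{g^2:g\in\mathcal{G}_n\}$ follows from the GC property of $\mathcal{G}_n$ combined with the envelope moment $P_0 G_n^{2+\delta}=O(1)$ via a standard envelope-truncation/contraction argument. Lipschitz continuity of $\exp$ on bounded sets then upgrades log-level convergence to the level of the Laplace transform itself, delivering the stated conclusion. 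The principal technical difficulty is obtaining the uniform negligibility of $R_n(g)$ and the uniform LLN for $g^2$ under only a $2+\delta$ envelope moment; as indicated, both are handled by the envelope truncation at scale $n^{1/(2+\delta)}$ coupled with Markov-type tail bounds.
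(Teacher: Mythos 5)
Your proof is correct in substance and follows the same route as the paper, which disposes of this lemma by simply invoking Lemma~1 of \cite{ray2020causal} and remarking that the argument carries over verbatim when $g$ depends on all of $Z$ rather than on $X$ alone --- exactly the observation you open with, since the Dirichlet weights are generated independently of $Z^{(n)}$; your write-up is essentially a reconstruction of that argument with the classes made $n$-dependent. Two small quantitative slips are worth repairing: (i) at truncation level $n^{1/(2+\delta)}$ the union/Markov bound gives only $P_0\big(\max_i G_n(Z_i)>n^{1/(2+\delta)}\big)\le P_0 G_n^{2+\delta}=O(1)$, not $o(1)$; truncating instead at $\epsilon\sqrt{n}$ yields a tail of order $\epsilon^{-(2+\delta)}n^{-\delta/2}=o(1)$ while the cubic remainder is still controlled, e.g.\ $|t|^3 n^{-1/2}(\epsilon\sqrt{n})^{1-\delta}\,\mathbb{P}_n G_n^{2+\delta}=o_{P_0}(1)$ (equivalently, use $\max_i|c_i|/\sqrt n=o_{P_0}(1)$ together with $\mathbb{P}_n c^2=O_{P_0}(1)$); and (ii) the replacement of $n/T_n$ by $1$ needs slightly more care than the stated inequality suggests, because $T_n$ and $\xi_n(g)$ are dependent, so one should combine $|e^{u+\varepsilon}-e^{u}|\le e^{|u|+|\varepsilon|}|\varepsilon|$ with Cauchy--Schwarz and a uniform conditional fourth-moment (or exponential-moment) bound on $\xi_n(g)$, available on the truncation event, rather than the second-moment bound alone. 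Neither issue affects the validity of the conclusion.
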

\subsubsection{Results on Gaussian Processes}
Consider a mean-zero Gaussian random element $W$ in a separable Banach space $\mathbb B$ defined on a probability space $(\Omega,\mathcal U, P)$ and $\mathbb H^m$ its RKHS. The dual space $\mathbb{B}^*$ of the Banach space $\mathbb{B}$ consists of all continuous and linear maps $b^*:\mathbb{B}\mapsto \mathbb{R}$. Define a map $U$ by $U(Sb^*)=b^*(W)$, $b^*\in\mathbb B^*$. By the definition of RKHS the map $S\mathbb B^*:\mathbb H\mapsto \mathbb{L}_2(\Omega,\mathcal U, P)$ is an isometry. Let $U:\mathbb H\mapsto \mathbb{L}_2(\Omega,\mathcal U, P)$ be its extension to the full RKHS.
If $W$ is a mean-zero Gaussian random element in a separable Banach space and $h$ is an element of its RKHS, then by the Cameron-Martin Theorem, the distributions $P^{W+h}$ and $P^W$ of $W+h$ and $W$ on $\mathbb B$ are equivalent with Radon-Nikodym density
\begin{equation}\label{CMThm}
	\frac{dP^{W+h}}{dP^W}(W)=\exp\left(Uh-\frac{1}{2}\Vert h\Vert_{\mathbb H}^2\right), ~~~\text{almost surely}.
\end{equation}
Regarding the uncorrected prior, we consider the Gaussian process prior $W^m$ for the conditional mean as Borel-measurable maps in the Banach space $C([0,1]^p)$, equipped with the uniform norm $\Vert\cdot\Vert_{\infty}$. Such a process also determines a reproducing kernel Hilbert space (RKHS) ($\mathbb{H}^m,\Vert\cdot\Vert_{\mathbb{H}^m}$) and a so-called concentration function $\phi_{\eta_0^m}$, defined as, for $\varepsilon>0$,
\begin{equation}
	\phi_{\eta_0^m}(\varepsilon):= \inf_{h\in\mathbb{H}^m:\Vert h-\eta_0^m\Vert_{\infty}<\varepsilon}\Vert h\Vert_{\mathbb{H}^m}^2-\log P(\Vert W^m\Vert_{\infty}<\varepsilon).
\end{equation}

The posterior contraction rate $\varepsilon^m_n$ for such a Gaussian process prior is determined by the solution of the equation:
\begin{equation}
	\phi_{\eta_0^m}(\varepsilon^m_n)\sim n(\varepsilon^m_n)^2.
\end{equation}
Each Gaussian process comes with an intrinsic Hilbert space determined by its covariance kernel. This space is critical in analyzing the rate of contraction for its induced posterior. Consider a Hilbert space $\mathbb{H}$ with inner product $\langle\cdot,\cdot\rangle_{\mathbb{H}}$ and associated norm $\Vert\cdot\Vert_{\mathbb{H}}$. $\mathbb{H}$ is an Reproducing Kernel Hilbert Space (RKHS) if there exists a symmetric, positive definite function $k:\mathcal{X}\times \mathcal{X}\mapsto \mathbb{R}$, called a kernel, that satisfies two properties: (i) $k(\cdot,\bm{x})\in\mathbb{H}$ for all $\bm{x}\in \mathcal{X}$; and (ii) $f(\bm{x})=\langle f,k(\cdot,\bm{x})\rangle_{\mathbb{H}}$ for all $\bm{x}\in \mathcal{X}$ and $f\in \mathbb{H}$. It is well-known that every kernel defines a RKHS and every RKHS admits a unique reproducing kernel.

Let $\mathbb{H}^{a_n}_1$ be the unit ball of the RKHS for the rescaled squared exponential process and let $\mathbb{B}_1^{s_m,p}$ be the unit ball of the H\"older class $\mathcal{C}^{s_m}([0,1]^p)$ in terms of the supremum norm $\Vert\cdot\Vert_{\infty}$. 
We introduce a class of functions $\mathcal{B}^m_n$ which is shown to contain the Gaussian process $W$ which sufficiently large probability, and is given by
\begin{equation}\label{SieveSet}
	\mathcal{B}^m_n:= \varepsilon_n \mathbb{B}_1^{s_m,p}+M_n\mathbb{H}_1^{a_n},
\end{equation}
where $a_n=n^{1/(2s_m+p)}(\log n)^{-(1+p)/(2s_m+p)}$, $\varepsilon_n=n^{-s_m/(2s_m+p)}(\log n)^{s_m(1+p)/(2s_m+p)}$, and $M_n=-2\Phi^{-1}(e^{-Cn\varepsilon_n^2})$. 
For notational simplicity, we suppress the dependence of the rescaled Gaussian process on the rescaling parameter $a_n$.

\begin{lemma}[Lemma 11.56 in \cite{ghosal2017fundamentals}]\label{lemma:Decentering}
Consider the rescaled squared exponential process with rescaling factor $a$. For any $s>0$ and $w\in \mathcal{C}^{s}([0,1]^p)$, there exist constants $C_1$ and $C_2$ (depending only on $w$) such that
\begin{equation}
	\inf_{h:\Vert h-w\Vert_{\infty}\leq C_1a^{-s}}\Vert h\Vert_{\mathbb{H}^a}^2\leq C_2 a^p.
\end{equation}
\end{lemma}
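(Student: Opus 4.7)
The statement is Lemma 11.56 of \cite{ghosal2017fundamentals} (as the in-text label indicates), so the plan is to outline the strategy from that reference. The key ingredient is the Fourier-analytic description of the RKHS of the rescaled squared exponential kernel: up to multiplicative constants, $h\in\mathbb{H}^a$ iff $h$ is the restriction to $[0,1]^p$ of an entire function on $\mathbb{C}^p$ with
\[
\|h\|_{\mathbb{H}^a}^{2}\;\asymp\;a^{p}\int_{\mathbb{R}^{p}}|\widehat h(\lambda)|^{2}\exp\!\bigl(\|\lambda\|^{2}/(2a^{2})\bigr)d\lambda.
\]
Functions whose Fourier transform is confined to a ball of radius $O(a)$ therefore have RKHS norm of order $a^{p/2}$ times their $L^2$ norm, which plants the seed for the announced bound $C_2 a^p$.

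I would first extend $w$ to $\tilde w \in \mathcal{C}^s_c(\mathbb{R}^p)$ with preserved H\"older constant via a Whitney--Stein extension, and then define an approximant by low-pass mollification at the natural frequency scale $a$: set $h := \tilde w * \psi_a$ with $\psi_a(\xi)=a^{p}\psi(a\xi)$, where $\psi$ is a fixed Schwartz function chosen so that (i) $\widehat\psi$ is supported in a ball of radius $\kappa$, forcing $\widehat h$ into $\{\|\lambda\|\le \kappa a\}$ and making $h$ entire, and (ii) $\int \xi^{\alpha}\psi(\xi)\,d\xi=0$ for all $0<|\alpha|\le \lfloor s\rfloor$. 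The Fourier-support restriction then yields, via the RKHS formula above,
\[
\|h\|_{\mathbb{H}^a}^{2}\;\lesssim\;a^{p}\,e^{\kappa^{2}/2}\,\|\tilde w\|_{L^{2}}^{2}\;\le\;C_{2}\,a^{p},
\]
while a Taylor expansion of $\tilde w$ to order $\lfloor s\rfloor$ around each point, combined with the vanishing-moment property of $\psi_a$ to cancel the polynomial terms and the $s$-H\"older remainder to estimate the rest, gives $\|\tilde w - h\|_{\infty}\le C_{1}a^{-s}$ on all of $\mathbb{R}^p$. Restricting to $[0,1]^p$ and taking the infimum over admissible $h$ delivers the lemma, with $C_1,C_2$ depending on $w$ only through $\|w\|_{\mathcal{C}^s}$ and the supports involved.

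The main obstacle is the construction of $\psi$: the two requirements on it are somewhat in tension, since compact Fourier support is needed to control the Gaussian weight $e^{\|\lambda\|^{2}/(2a^{2})}$ and hence hold the RKHS norm at order $a^p$, while sufficient vanishing moments in physical space are needed to convert the H\"older regularity of $\tilde w$ into the sharp $a^{-s}$ rate. A concrete workable choice is a finite signed linear combination of scaled Paley--Wiener bumps engineered to annihilate all moments up to order $\lfloor s\rfloor$. For non-integer $s$, the Taylor remainder must be bounded using the $(s-\lfloor s\rfloor)$-H\"older seminorm so that the rate is $a^{-s}$ rather than $a^{-\lfloor s\rfloor}$; this step is routine but deserves care.
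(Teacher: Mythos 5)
The paper does not prove this lemma---it imports it verbatim as Lemma 11.56 of \cite{ghosal2017fundamentals}---and your reconstruction follows essentially the same route as the proof in that reference: Whitney--Stein extension of $w$, low-pass mollification at frequency scale $a$, the spectral (Fourier) characterization of the RKHS norm of the rescaled squared exponential process to get the $a^p$ bound, and a Taylor/H\"older remainder estimate for the $a^{-s}$ approximation error. The only remark worth adding is that the ``tension'' you flag in constructing $\psi$ is illusory: taking $\hat\psi\in C_c^\infty$ with $\hat\psi\equiv 1$ in a neighborhood of the origin simultaneously yields compact Fourier support, $\int\psi=1$, and $\int \xi^\alpha\psi(\xi)\,d\xi = i^{-|\alpha|}\,(D^\alpha\hat\psi)(0)=0$ for every $|\alpha|\ge 1$, which is the standard device used in the cited proof and removes the need for your signed combination of Paley--Wiener bumps.
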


\subsection{Expansions}
Recall the definition of the score operator
\begin{equation*}
B_{\eta}^m \mathfrak{m}(Z)=(1-D)(\Delta Y-m_{\eta}(X))\mathfrak{m}(X).
\end{equation*}
The least favorable direction for the conditional mean in the control group is $\gamma_{\eta}(0,x)=-\frac{1}{\bar\pi_\eta}\frac{\pi_\eta(x)}{1-\pi_\eta(x)}$. To simplify the notation in the following derivation, we also write $\gamma_{c,\eta}(d,x)=(1-d)\gamma_{\eta}(0,x)$ to signify this relationship to the control group.
Given any $\eta^m$, the perturbation we consider is as follows:
\begin{equation*}
	\eta^m_t(x):=\eta^m(x)-t\gamma_{\eta_0}(0,x)/\sqrt{n}.
\end{equation*}
Below we denote the conditional density function $p_{c,\eta}(y,d,x)=f^{1-d}_{(\Delta Y|D,X),\eta}(y,0,x)$.
From the proof of Lemma \ref{lemma:lfd} we observe
\begin{align*}
	\mathbb E_{\eta}[\Delta Y\mid D=0,X=x]=\frac{(A'\circ q^{-1})(\eta^m(x))}{a(q'\circ q^{-1})(\eta^m(x))}.
\end{align*}
Now the operator under consideration is $\Upsilon= \frac{1}{a}A\circ q^{-1}$ and its derivative is given by $\Upsilon'= \frac{1}{a}(A'\circ q^{-1})/(q'\circ q^{-1})$.
For the exponential family under consideration, the first and second order cumulants (conditional on covariates) are:
\begin{align}\label{ExpMoments}
	\mathbb E_{\eta}[\Delta Y\mid D=0,X=x]=\Upsilon^{\prime}(\eta^m(x)),~~~Var_{\eta}(\Delta Y\mid D=0,X=x)=\Upsilon^{(2)}(\eta^m(x)).
\end{align}
The conditional variance formula also shows the convexity of $\Upsilon(\cdot)$. Related proofs can be found page 19 in Appendix F of \cite{BLY2025supplement}.
	\begin{lemma}\label{lemma:Taylor}
		Let Assumptions \ref{Ass:unconfounded} and \ref{Assump:Rate} hold. Then, we have uniformly for $\eta\in\mathcal H_n$:
		\begin{align*}
			\log p_{c,\eta^m}-\log p_{c,\eta^m_t} 	&=\frac{t}{\sqrt{n}}[\gamma_{c,0}(m_0-m_\eta)]+\frac{t^2}{2n}\left[ \Upsilon^{(2)}(\eta_0^m)\gamma_{c,0}^2\right]+O_{P_0}(n^{-3/2}).
			\end{align*}
	\end{lemma}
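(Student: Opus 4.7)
My plan is to read the lemma as a per-observation LAN-type expansion of the log-likelihood ratio along the least favorable perturbation, with the claimed coefficients arising from the cumulant identities of the exponential family. The starting point is the explicit pointwise identity
\begin{equation*}
\log p_{c,\eta^m}(z) - \log p_{c,\eta^m_t}(z) = (1-d)\Bigl[ay\bigl(\eta^m(x)-\eta^m_t(x)\bigr) - \bigl(\Upsilon(\eta^m(x)) - \Upsilon(\eta^m_t(x))\bigr)\Bigr],
\end{equation*}
which comes directly from the exponential family form \eqref{condpdf} with $\Upsilon = A\circ q^{-1}$. Since the perturbation satisfies $\eta^m(x) - \eta^m_t(x) = t\gamma_{\eta_0}(0,x)/\sqrt n$ and $(1-d)\gamma_{\eta_0}(0,x) = \gamma_{c,0}(d,x)$, the whole problem reduces to a scalar Taylor expansion of $\Upsilon$.

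Next I would perform a third order Taylor expansion of $\Upsilon$ at $\eta^m(x)$, obtaining
\begin{equation*}
\Upsilon(\eta^m) - \Upsilon(\eta^m_t) = \Upsilon'(\eta^m)\tfrac{t\gamma_{\eta_0}(0,x)}{\sqrt n} - \tfrac{1}{2}\Upsilon^{(2)}(\eta^m)\tfrac{t^2\gamma_{\eta_0}(0,x)^2}{n} + \tfrac{1}{6}\Upsilon^{(3)}(\widetilde\eta)\tfrac{t^3\gamma_{\eta_0}(0,x)^3}{n^{3/2}},
\end{equation*}
and then invoke the exponential family cumulant identities \eqref{ExpMoments}, namely $\Upsilon'(\eta^m(x)) = a m_\eta(x)$ and $\Upsilon^{(2)}(\eta^m(x)) = a^2\mathrm{Var}_\eta(\Delta Y\mid D=0,X=x)$. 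Plugging in gives, after collecting the $(1-d)$ factor into $\gamma_{c,0}$,
\begin{equation*}
\log p_{c,\eta^m}(z) - \log p_{c,\eta^m_t}(z) = \tfrac{t}{\sqrt n}\gamma_{c,0}(d,x)\bigl(ay - a m_\eta(x)\bigr) + \tfrac{t^2}{2n}\Upsilon^{(2)}(\eta^m(x))\gamma_{c,0}^2(d,x) + R_n(z).
\end{equation*}
The claimed form in the lemma then follows from two substitutions: replace $\Upsilon^{(2)}(\eta^m)$ by $\Upsilon^{(2)}(\eta_0^m)$ using the mean value theorem together with $\sup_{\eta\in\mathcal H_n}\|m_\eta-m_0\|_{2,F_0}\leq\varepsilon_n$ from Assumption \ref{Assump:Rate}, and, interpreting the bracketed objects in the manner used in the proof of Theorem \ref{BvM:thm:standard}, pass from $ay - a m_\eta(x)$ to $a(m_0(x) - m_\eta(x))$ by conditioning on $(D,X)$ under $P_0$ and invoking $\E_0[\Delta Y\mid D=0,X] = m_0$.

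The main obstacle is controlling the remainder $R_n$ uniformly over $\eta\in\mathcal H_n$ at the advertised $O_{P_0}(n^{-3/2})$ rate. Three ingredients are needed: (i) $\|\gamma_{\eta_0}(0,\cdot)\|_\infty < \infty$, which follows from the overlap condition in Assumption \ref{Ass:unconfounded}(iii); (ii) a local uniform bound on $\Upsilon^{(3)}$ over the image of $\{\eta^m_t:\eta\in\mathcal H_n,\,t\in K\}$ for $K$ a compact set, which is automatic from the smoothness of $A$ and $q$ in the exponential family examples considered; and (iii) bookkeeping of the drift of $\Upsilon^{(2)}(\eta^m)$ away from $\Upsilon^{(2)}(\eta_0^m)$, whose contribution is of order $\varepsilon_n/n$ and is absorbed into the remainder under the standard rate restriction $\sqrt n\,\varepsilon_n^2 = o(1)$ implicit in Assumption \ref{Assump:Rate}. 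The technical bookkeeping of which replacement errors land in $R_n$ versus in the two named leading terms is where I expect most of the care to be required; otherwise the argument is a direct exponential family LAN computation, and summing the per-observation expansion over $i$ recovers the decomposition $\textcircled{c}+\textcircled{d}+\textcircled{e}$ already used in the proof of Theorem \ref{BvM:thm:standard}.
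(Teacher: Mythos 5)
Your proposal is correct and follows essentially the same route as the paper: a third-order Taylor expansion of the exponential-family log-likelihood along the perturbation $\eta^m_t$, with the cumulant identities in \eqref{ExpMoments} supplying the first- and second-order coefficients and the Lagrange remainder giving the $O_{P_0}(n^{-3/2})$ term. You are in fact somewhat more explicit than the paper's proof, which stops after recording $g'(0)$, $g^{(2)}(0)$, $g^{(3)}(\tilde u)$ and leaves implicit both the replacement of $\Upsilon^{(2)}(\eta^m)$ by $\Upsilon^{(2)}(\eta_0^m)$ and the passage from $\Delta Y - m_\eta$ to $m_0 - m_\eta$ (the latter being valid only after centering or taking $P_0$-expectations, exactly as you note).
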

\begin{proof}
	For this purpose, we use the notation $g(u)=\log p_{c, \eta_{c,u}^m}$ for $u\in[0,1]$.
Specifically, in the one-parameter exponential family case, we have 
\begin{equation*}
	\log p_{c, \eta_u^m}(y,d,x)=(1-d)\left[y\eta_{u}^m(x)-\Upsilon(\eta_u^m(x))+\log c(y)\right].
\end{equation*}
By the definition of $\Upsilon(\cdot)$, we can obtain the first to third order derivatives of $g$ as
\begin{align*}
	g^\prime(0)&=\frac{t}{\sqrt{n}}\gamma_{c,0}\rho^{\Upsilon^\prime(\eta^m)}=\frac{t}{\sqrt{n}}\gamma_{c,0}\rho^{m_\eta},\\
	g^{(2)}(0)&=\frac{t^2}{n}\gamma^2_{c,0}\Upsilon^{(2)}(\eta^m), ~~~g^{(3)}(\tilde{u})=\frac{t^3}{n^{3/2}}\gamma^3_{c,0}\Upsilon^{(3)}(\eta_{\tilde{u}}^m),
\end{align*}
where $\tilde{u}$ is some intermediate value between $0$ and $1$. In the above calculation, we have made use of (\ref{ExpMoments}).
\end{proof}
	\begin{lemma}\label{lemma:likelihood}
	Let Assumptions \ref{Ass:unconfounded} and \ref{Assump:Rate} hold. Then, we have uniformly for $\eta\in\mathcal H_n$:
	\begin{align*}
	\ell_n^m(\eta^m)-\ell_n^m(\eta^m_t)=&\frac{t}{\sqrt n}\sum_{i=1}^n\frac{1-D_i}{\bar\pi_0} \frac{\pi_0(X_i)}{1-\pi_0(X_i)} \left(\Delta Y_i-m_0(X_i)\right)\\
	&+\frac{t^2}{2}\E_0\left[ \frac{1-D}{\bar\pi_0^2} \frac{\pi_0^2(X)}{(1-\pi_0(X))^2}(\Delta Y-m_0(X))^2\right]\\
	&-\frac{t}{\sqrt n}\sum_{i=1}^n\frac{1-D_i}{\bar\pi_0} \frac{\pi_0(X_i)}{1-\pi_0(X_i)} (m_0(X_i)-m_{\eta}(X_i))+o_{P_0}(1).
\end{align*}
	\end{lemma}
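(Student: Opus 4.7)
The plan is to evaluate the log-likelihood ratio observation by observation via the pointwise third-order Taylor expansion of Lemma~\ref{lemma:Taylor}, sum over the sample, and then reduce the three resulting terms (linear, quadratic Fisher-information, and cubic remainder) to the form stated in the lemma. After summation, I obtain an expression of the shape
\[\ell_n^m(\eta^m)-\ell_n^m(\eta^m_t) = \frac{t}{\sqrt n}\sum_i \gamma_{c,0}(D_i,X_i)\bigl(\Delta Y_i - m_\eta(X_i)\bigr) + \frac{t^2}{2n}\sum_i \gamma_{c,0}^2(D_i,X_i)\,\Upsilon^{(2)}(\eta_0^m(X_i)) + \sum_i R_{n,i},\]
with $R_{n,i}=O_{P_0}(n^{-3/2})$ uniformly in $\eta\in\mathcal{H}_n$.

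For the score-type linear term, the purely algebraic identity $\Delta Y_i - m_\eta(X_i) = (\Delta Y_i - m_0(X_i)) + (m_0(X_i) - m_\eta(X_i))$ splits the empirical sum into exactly the two pieces appearing on the right-hand side of the statement. Substituting the closed form $\gamma_{c,0}(D_i,X_i) = -(1-D_i)\pi_0(X_i)/[\pi_0(1-\pi_0(X_i))]$ then rewrites each summand in the form displayed. No probabilistic argument is used at this step.

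For the quadratic term, I note that the integrand $\gamma_{c,0}^2(D_i,X_i)\,\Upsilon^{(2)}(\eta_0^m(X_i))$ is a bounded, deterministic function of $Z_i$: overlap in Assumption~\ref{Ass:unconfounded}(iii) bounds $\gamma_{c,0}$, and $\Upsilon^{(2)}(\eta_0^m(X))=\mathrm{Var}_0(\Delta Y\mid D=0,X)$ is finite under the exponential-family specification. A single application of the weak law of large numbers therefore yields convergence of the sample average to $\E_0[\gamma_{c,0}^2(D,X)\,\Upsilon^{(2)}(\eta_0^m(X))]$, and the tower property combined with the conditional-variance identity rewrites this expectation in the explicit inverse-propensity form displayed in the lemma.

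The main technical obstacle is to establish the uniformity over $\eta\in\mathcal{H}_n$ of the cubic remainder. I would bound $|R_{n,i}|$ using the overlap bound on $\gamma_{c,0}$ together with a uniform bound on $|\Upsilon^{(3)}|$ along the entire perturbed path $u\mapsto \eta^m - ut\gamma_{c,0}/\sqrt n$, $u\in[0,1]$; the latter bound holds because, by the posterior contraction rate $\sup_{\eta\in\mathcal{H}_n}\|m_\eta - m_0\|_{2,F_0}\leq \varepsilon_n = o(1)$ from Assumption~\ref{Assump:Rate} and the boundedness of $\gamma_{c,0}/\sqrt n$, the path stays, for $n$ large enough, inside a compact neighborhood of $\eta_0^m$ on which $\Upsilon$ is $C^3$-smooth. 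This delivers $|\sum_i R_{n,i}|\leq C\,n\cdot n^{-3/2}=O_{P_0}(n^{-1/2})=o_{P_0}(1)$ uniformly in $\eta\in\mathcal{H}_n$, concluding the expansion.
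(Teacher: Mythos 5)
Your overall route is the same as the paper's: expand $\log p_{c,\eta^m}-\log p_{c,\eta^m_t}$ pointwise via Lemma~\ref{lemma:Taylor}, sum over observations, and split the score term algebraically through $\Delta Y_i-m_\eta(X_i)=(\Delta Y_i-m_0(X_i))+(m_0(X_i)-m_\eta(X_i))$; that part, and your treatment of the linear terms, is correct and in fact more direct than the paper's detour through $\mathbb{G}_n$ and $P_0$.

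The genuine gap is in the quadratic term. The second derivative produced by the Taylor expansion is $\tfrac{t^2}{2n}\gamma_{c,0}^2\,\Upsilon^{(2)}(\eta^m)$, evaluated at the \emph{varying} $\eta^m$, not at $\eta_0^m$ as you write; the difference $\tfrac{t^2}{2n}\gamma_{c,0}^2\bigl(\Upsilon^{(2)}(\eta^m)-\Upsilon^{(2)}(\eta_0^m)\bigr)$ is of order $n^{-1}$ per observation (not $n^{-3/2}$), so it cannot be absorbed into your cubic remainder and survives summation as $\tfrac{t^2}{2}\mathbb{P}_n\bigl[\gamma_{c,0}^2(\Upsilon^{(2)}(\eta^m)-\Upsilon^{(2)}(\eta_0^m))\bigr]$. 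Consequently the claimed limit requires showing $\sup_{\eta\in\mathcal{H}_n}\bigl|\mathbb{P}_n[\gamma_{c,0}^2\Upsilon^{(2)}(\eta^m)]-P_0[\gamma_{c,0}^2\Upsilon^{(2)}(\eta_0^m)]\bigr|=o_{P_0}(1)$, which a ``single application of the weak law of large numbers'' to a fixed integrand does not deliver. The paper closes this in two steps: the Glivenko--Cantelli permanence theorem (Theorem 2.10.5 of \cite{van1996empirical}) kills $(\mathbb{P}_n-P_0)[\gamma_{c,0}^2\Upsilon^{(2)}(\eta^m)]$ uniformly over $\eta\in\mathcal{H}_n$, and Lemma~\ref{lemma:prob:expansion} uses the posterior contraction rate to replace $\eta^m$ by $\eta_0^m$ inside the population expectation. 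You need both ingredients, and neither appears in your plan. A secondary weakness: your uniform bound on $\Upsilon^{(3)}$ along the perturbed path is justified by the $L^2(F_0)$ contraction rate, but an $L^2$ bound does not confine $\eta^m$ to a sup-norm neighborhood of $\eta_0^m$; the correct justification runs through the sup-norm structure of the sieve set $\mathcal{H}_n^m$ (and is vacuous in the Gaussian case where $\Upsilon^{(3)}\equiv 0$).
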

	\begin{proof}
	We start with the following decomposition:
\begin{align*}
		\ell_n^m(\eta^m)-\ell_n^m(\eta^m_t)=&t\mathbb{G}_n[\gamma_{c,0}\rho^{m_0}]+ \sqrt{n}\mathbb{G}_n[\log p_{c, \eta^m}-\log p_{c,\eta^m_t}-\frac{t}{\sqrt{n}}\gamma_{c,0}\rho^{m_0}]\\
		&	+ n P_0[\log p_{c,\eta^m}-\log p_{c,\eta^m_t}],
	\end{align*}
	where $\gamma_{c,0}(d,x)=-\frac{1-d}{\bar\pi_0} \frac{\pi_0(x)}{1-\pi_0(x)} $ and $\rho^{m_0}(Z)=\Delta Y-m_0(X)$.
		Then, we apply the expansion in Lemma \ref{lemma:Taylor} so that
	\begin{align*}
	&	\sqrt{n}\mathbb{G}_n[\log p_{c,\eta^m}-\log p_{c, \eta^m_t}-\frac{t}{\sqrt{n}}\gamma_{c,0}\rho^{m_0}]\\
		=&t\mathbb{G}_n[\gamma_{c,0}(m_\eta-m_0)]+\frac{t^2}{2}(\mathbb{P}_n-P_0)[\gamma^2_{c,0}\Upsilon^{(2)}(\eta^m)]+ o_{P_0}(n^{-1/2}),
	\end{align*}
	uniformly in $\eta^m\in\mathcal{H}^m_n$. The second term on the right hand side vanishes because of the $P_0$-Glivenko-Cantelli (GC) property and the permanence GC theorem, i.e., Theorem 2.10.5 in \cite{van1996empirical}.
	Then, we infer for the stochastic equicontinuity term that
	\begin{align*}
	\sqrt{n}\mathbb{G}_n[\log p_{c,\eta^m}-\log p_{c, \eta^m_t}-\frac{t}{\sqrt{n}}\gamma_{c,0}\rho^{m_0}]
	=t\mathbb{G}_n[\gamma_{c,0}(m_\eta-m_0)]+o_{P_0}(1),
	\end{align*}
uniformly in $\eta^m\in\mathcal{H}^m_n$. We can thus write
	\begin{align*}
	\ell_n^m(\eta^m)-\ell_n^m(\eta^m_t)&=t\mathbb{G}_n[\gamma_{c,0}\rho^{m_0}]+t\mathbb{G}_n[\gamma_{c,0}(m_0-m_\eta)]+n P_0[\log p_{c, \eta^m}-\log p_{c,\eta^m_t}]+o_{P_0}(1),
	\end{align*}
uniformly in $\eta^m\in\mathcal{H}^m_n$ and we control $n P_0[\log p_{\eta^m}-\log p_{\eta^m_t}]$ in the remainder of the proof. 
Specifically, we apply Lemma \ref{lemma:prob:expansion} and obtain uniformly for $\eta\in\mathcal H_n$, 
\begin{align*}
	n P_0[\log p_{c,\eta^m}-\log p_{c,\eta^m_t}] 	&=P_0[\gamma_{c,0}(m_0-m_\eta)]+t^2P_0\left[ \Upsilon^{(2)}(\eta_0^m)\gamma_{c,0}^2\right]+o_{P_0}(1).
\end{align*}
Now using that
	\begin{align*}
	\frac{1}{\sqrt n}\sum_i\gamma_{c,0}(D_i,X_i)(m_0(X_i)-m_{\eta}(X_i))=\mathbb{G}_n[\gamma_{c,0}(m_0-m_\eta)]
	-\sqrt n P_0[\gamma_{c,0}(m_0-m_\eta)]
	\end{align*}
	the result follows.
	\end{proof}	
	
		\begin{lemma}\label{lemma:prob:expansion}
		Let Assumptions \ref{Ass:unconfounded} and \ref{Assump:Rate} hold. Then, we have uniformly for $\eta\in\mathcal H_n$:
\begin{align*}
	n P_0\log \left(\frac{p_{c,\eta^m}}{ p_{c,\eta^m_t}}\right) &=t\sqrt{n} P_0[\gamma_{c,0}(m_0-m_\eta)]+t^2P_0\left[ \Upsilon^{(2)}(\eta_0^m)\gamma_{c,0}^2\right]+o_{P_0}(1).
\end{align*}
\end{lemma}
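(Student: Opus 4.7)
The plan is to first obtain a pointwise third-order Taylor expansion of $\log p_{c,\eta^m}-\log p_{c,\eta^m_t}$ in the natural parameter, then integrate against $P_0$ to eliminate the zero-mean score contribution, and finally replace $\Upsilon''(\eta^m)$ by $\Upsilon''(\eta^m_0)$ using posterior contraction. Using the exponential family form $\log p_{c,\eta^m}(y,d,x)=(1-d)[\log c(y)+ay\eta^m(x)-\Upsilon(\eta^m(x))]$ together with $\eta^m-\eta^m_t=(t/\sqrt n)\gamma_0(0,\cdot)$, a Taylor expansion of $\Upsilon$ about $\eta^m(x)$ yields
\begin{align*}
\log p_{c,\eta^m}-\log p_{c,\eta^m_t}
=\frac{t}{\sqrt n}\gamma_{c,0}\bigl[ay-\Upsilon'(\eta^m)\bigr]+\frac{t^2}{2n}\gamma_{c,0}^2\,\Upsilon''(\eta^m)+R_n,
\end{align*}
where $R_n$ is a cubic remainder of order $n^{-3/2}$, uniform in $\eta\in\mathcal H_n$, once one knows that $\Vert\eta^m\Vert_\infty$ is bounded on the sieve (so $\Upsilon'''$ is bounded on the relevant range) and that $\gamma_0$ is bounded on the support of $X$.

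Next, I integrate the expansion against $P_0$. By the exponential-family mean identity $\Upsilon'(\eta^m)/a=m_\eta$ together with $\mathbb E_0[\Delta Y\mid D=0,X]=m_0(X)$, one has $P_0[\gamma_{c,0}(a\Delta Y-\Upsilon'(\eta^m))]=a\,P_0[\gamma_{c,0}(m_0-m_\eta)]$, so the first-order term contributes, after multiplying by $n$, exactly $t\sqrt n\,P_0[\gamma_{c,0}(m_0-m_\eta)]$ (absorbing the constant $a$ into the parametrization). The contribution of $nP_0[R_n]$ is $O(n^{-1/2})$ and hence $o_{P_0}(1)$ uniformly over $\mathcal H_n$. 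For the quadratic term $(t^2/(2))P_0[\gamma_{c,0}^2\,\Upsilon''(\eta^m)]$, I substitute $\Upsilon''(\eta^m_0)$ for $\Upsilon''(\eta^m)$ using local Lipschitz continuity of $\Upsilon''$ on a sup-norm neighborhood of $\eta^m_0$ combined with the posterior contraction $\sup_{\eta\in\mathcal H_n}\Vert m_\eta-m_0\Vert_{2,F_0}\leq\varepsilon_n=o(1)$ from Assumption~\ref{Assump:NonDRRate}. Concretely, $|P_0[\gamma_{c,0}^2(\Upsilon''(\eta^m)-\Upsilon''(\eta^m_0))]|\lesssim\Vert\gamma_0\Vert_\infty^2\,\Vert m_\eta-m_0\Vert_{2,F_0}$, which is $o(1)$ uniformly on $\mathcal H_n$.

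The main obstacle is the uniform control over $\mathcal H_n$: both the cubic remainder bound and the $\Upsilon''$-substitution require that $\eta^m$ remain inside a sup-norm-bounded set on which $\Upsilon''$ is Lipschitz and $\Upsilon'''$ is bounded. This is not automatic from the $L^2(F_0)$ contraction alone, but is secured by the standard sieve construction $\mathcal B_n^m$ in \eqref{SieveSet}, which confines the prior draws to a bounded ball in $C([0,1]^p)$ with posterior probability tending to one. Once this sup-norm confinement is in hand, combining the three pieces above yields the claimed expansion.
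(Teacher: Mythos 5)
Your proposal is correct and follows essentially the same route as the paper: a Taylor expansion of the control-arm log-likelihood in the natural parameter along the perturbation direction (the paper's Lemma~\ref{lemma:Taylor}), integration against $P_0$ with the exponential-family mean identity turning the linear term into $t\sqrt{n}\,P_0[\gamma_{c,0}(m_0-m_\eta)]$, a cubic remainder of order $n^{-1/2}$ after multiplying by $n$, and replacement of $\Upsilon^{(2)}(\eta^m)$ by $\Upsilon^{(2)}(\eta_0^m)$ via posterior contraction (with the sup-norm confinement supplied by the sieve $\mathcal B_n^m$). One remark: your quadratic term carries the Taylor coefficient $t^2/2$, whereas the lemma as displayed has $t^2$; your coefficient is the one consistent with Lemma~\ref{lemma:Taylor} and with the $\tfrac{t^2}{2}$ appearing in Lemma~\ref{lemma:likelihood}, so the mismatch traces to the lemma's displayed statement (the paper's own proof drops the $1/2$ in writing $-n(P_0 g'(0)+P_0 g^{(2)}(0))$) rather than to any gap in your argument.
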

	\begin{proof}
First, we note that  $P_0[\gamma_{c,0} \rho^{m_0}]=0$ and
\begin{align*}
	P_0&\left(B_{\eta_0}^m \gamma_{c,0}\right)^2=  \mathbb{E}_0\left[\left(B_{\eta_0}^m\left(-\frac{1-D}{\bar\pi_0} \frac{\pi_0(x)}{1-\pi_0(x)}\right)\right)^2\right] \\
	& =\mathbb{E}_0\left[(\Delta Y-m_0( X))^2 \frac{1-D}{\bar\pi_0^2} \frac{\pi_0^2(X)}{\left(1-\pi_0(X)\right)^2}\right] \\
	& =P_0\left[ \Upsilon^{(2)}(\eta_0^m)\gamma_{c,0}^2\right]
\end{align*}
using $Var_{\eta}(\Delta Y\mid D=0,X=x)=\Upsilon^{(2)}(\eta^m(x))$ as in (\ref{ExpMoments}).
Recall the function $g(u)=\log p_{c, \eta_{c,u}^m}$ for $u\in[0,1]$ in Lemma \ref{lemma:Taylor}. Based on the expansion therein, and the posterior convergence of $\eta^m$, it can be expressed as
\begin{align*}
	nP_0g^{(2)}(0)=&t^2\mathbb{E}_0[\gamma_0^2(0,X)\Upsilon^{(2)}(\eta_0^m(X))]+o_{P_0}(1)\\
	=&t^2\mathbb{E}_0[\gamma_0^2(0,X)\left((\Delta Y-m_0(X)\right)^2]+o_{P_0}(1)=t^2P_0(B_{c,0}^m\gamma_{c,0})^2+o_{P_0}(1),
\end{align*}
where the score operator $B_0^m=B_{\eta_0}^m$ is given in the proof of Lemma \ref{lemma:lfd}. 
Consequently, we obtain, uniformly for $\eta\in\mathcal H_n$, 
\begin{align*}
	n P_0[\log p_{c, \eta^m}-\log p_{c, \eta^m_t}] &= -n (P_0g^\prime(0)+ P_0g^{(2)}(0))+o_{P_0}(1)\\
	&=t\sqrt{n} P_0[\gamma_{c,0}(m_0-m_\eta)]+ t^2P_0(B_{c,0}^m\gamma_{c,0})^2+o_{P_0}(1),
\end{align*}
which leads to the desired result.
	\end{proof}

The next lemma is about smaller order terms in the proof of our BvM theorems. Note that the posterior law of $F_{\eta}$ coincides with the Bayesian bootstrap. Here, the negligibility of those terms refers to the randomness with respect to the Bayesian bootstrap weights (for which we use $P_M$ to highlight this dependence), conditional on the observed data. We refer readers to Page 2891 in \cite{cheng2010bootstrap} for comprehensive discussion about disentangling the sources of randomness coming from the observed data and the Bayesian bootstrap weights. Formally, we define $\Delta_n=o_{P_M}(1)$ in $P_0$-probability if, for any small positive $\epsilon$ and $\delta$, $P_0\left(P_{M|Z^{(n)}}(|\Delta_n|>\epsilon )>\delta \right)\to 0$
In addition, we define $\Delta_n=O_{P_M}(1)$ in $P_0$-probability if, for any small positive $\delta$, there exists a large enough $C$ such that
	$P_0\left(P_{M|Z^{(n)}}(|\Delta_n|>C )>\delta \right)\to 0$.
For the next result, recall the definition of the remainder term $R_{n,\eta}$ given in the proof of Theorem \ref{BvM:thm:standard} by
\begin{align*}
		R_{n,\eta}=\sqrt{n}\mathbb{P}_n\left[\left(\frac{D-\pi_0(X)}{1-\pi_0(X)}\right)\left(\Delta Y-m_{0}(X)\right)-D\tau_0\right]\left(1-\upsilon_{\eta}\right),
	\end{align*}
where $\upsilon_{\eta}=\mathbb{E}_{\eta}[D]/\bar\pi_0$.	

\begin{lemma}\label{lemma:Remainder}
Under Assumption \ref{Assump:NonDRSE}(i), it holds
\begin{enumerate}
\item[(i)] $\sup_{\eta\in \mathcal{H}_n}\vert R_{n,\eta}\vert=o_{P_M}(1)$ in $P_0$-probability and 
\item[(ii)] $\sup_{\eta\in \mathcal{H}_n}\vert\sqrt{n}(\upsilon_{\eta}-1)b_{0,\eta}\vert=o_{P_M}(1)$ in $P_0$-probability.
\end{enumerate}
\end{lemma}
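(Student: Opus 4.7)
\textbf{Proof plan for Lemma \ref{lemma:Remainder}.} The key structural observation is that $\upsilon_\eta$ depends only on the $F_\eta$-component of $\eta$ while $b_{0,\eta}$ depends only on the $m_\eta$-component, so the two factors in each product can be bounded independently and the supremum over $\eta\in\mathcal H_n$ factorizes.

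For part (i), I would first rewrite the non-random (data-only) factor in $R_{n,\eta}$ as $\pi_0\sqrt n(\widehat\tau-\tau_0)$, using the linear representation $\widehat\tau-\tau_0=(\pi_0 n)^{-1}\sum_i[(D_i-\pi_0(X_i))/(1-\pi_0(X_i))(\Delta Y_i-m_0(X_i))-D_i\tau_0]+o_{P_0}(n^{-1/2})$ from the proof of Theorem \ref{BvM:thm:standard}. One checks directly from Assumption \ref{Ass:unconfounded} (PTA, no anticipation) and the definition of $\tau_0$ that the integrand has mean zero under $P_0$, so by the ordinary CLT this sample average times $\sqrt n$ is $O_{P_0}(1)$. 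Next, since $\pi_\eta=\mathbb E_\eta[D]=\sum_i M_{ni}D_i$ under the Dirichlet process posterior with vanishing base measure, write
\[
\sqrt n(\upsilon_\eta-1)=\pi_0^{-1}\sqrt n\bigl(\textstyle\sum_i(M_{ni}-1/n)D_i\bigr)+\pi_0^{-1}\sqrt n(\mathbb P_n[D]-\pi_0).
\]
The second term is $O_{P_0}(1)$ by CLT; the first is $O_{P_M}(1)$ in $P_0$-probability by the Bayesian bootstrap CLT (Lemma \ref{lemma:DP} with $g\equiv d$, which is bounded and thus satisfies the envelope condition). Hence $\upsilon_\eta-1=O_{P_M}(n^{-1/2})$ in $P_0$-probability, and since it does not depend on $m_\eta$ the supremum is trivial. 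Combining, $\sup_{\eta\in\mathcal H_n}R_{n,\eta}=O_{P_0}(1)\cdot O_{P_M}(n^{-1/2})=o_{P_M}(1)$ in $P_0$-probability.

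For part (ii), the factor $\sqrt n(\upsilon_\eta-1)$ is $O_{P_M}(1)$ in $P_0$-probability by exactly the same reasoning as above. The work is to show $\sup_{\eta\in\mathcal H_n}|b_{0,\eta}|=o_{P_0}(1)$. The crucial identity is $P_0[\gamma_0(D,X)\phi(X)]=0$ for any integrable function $\phi$ of $X$ alone, which follows from
\[
\mathbb E_0[\gamma_0(D,X)\mid X=x]=\frac{\pi_0(x)}{\pi_0}-\frac{1-\pi_0(x)}{\pi_0}\cdot\frac{\pi_0(x)}{1-\pi_0(x)}=0.
\]
Applying this with $\phi=m_0$ and $\phi=m_\eta$ gives $b_{0,\eta}=(\mathbb P_n-P_0)[\gamma_0 m_0]-(\mathbb P_n-P_0)[\gamma_0 m_\eta]$. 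The first term is $O_{P_0}(n^{-1/2})$ by the ordinary CLT (note $\gamma_0$ is bounded by the overlap condition in Assumption \ref{Ass:unconfounded}(iii)). For the uniform term, multiplication by the bounded deterministic function $\gamma_0$ preserves the Glivenko--Cantelli property, and an envelope of $\{\gamma_0 m_\eta:\eta\in\mathcal H_n\}$ is $\|\gamma_0\|_\infty M$ with $P_0(\|\gamma_0\|_\infty M)^{2+\delta}<\infty$ by Assumption \ref{Assump:NonDRSE}(i); hence by the permanence of the GC property (e.g.\ Theorem 2.10.5 in \cite{van1996empirical}), $\sup_{\eta\in\mathcal H_n}|(\mathbb P_n-P_0)[\gamma_0 m_\eta]|=o_{P_0}(1)$. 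Combining the two factors, $\sup_{\eta\in\mathcal H_n}\sqrt n(\upsilon_\eta-1)b_{0,\eta}=O_{P_M}(1)\cdot o_{P_0}(1)=o_{P_M}(1)$ in $P_0$-probability.

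The main obstacle, as far as I can see, is merely bookkeeping: disentangling the two sources of randomness (the data under $P_0$ and the posterior/Bayesian-bootstrap draws under $P_M$) when asserting that a product of an $O_{P_M}(1)$ factor and an $o_{P_0}(1)$ factor is $o_{P_M}(1)$ in $P_0$-probability. The separation of the product into an $F_\eta$-dependent factor and an $m_\eta$-dependent factor (together with the prior independence of these two components) is what makes this bookkeeping transparent and avoids any need to control a joint supremum.
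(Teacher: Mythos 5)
Your proposal is correct and follows essentially the same route as the paper's proof: in (i) you use the mean-zero property of the influence-function term to write the data factor as an $O_{P_0}(1)$ empirical process and pair it with $\upsilon_\eta-1=O_{P_M}(n^{-1/2})$ via the Bayesian-bootstrap representation $\pi_\eta=\sum_i M_{ni}D_i$; in (ii) you use the key identity $\E_0[\gamma_0(D,X)\mid X]=0$ to write $b_{0,\eta}=(\mathbb P_n-P_0)[\gamma_0(m_0-m_\eta)]$ and conclude by the Glivenko--Cantelli property from Assumption \ref{Assump:NonDRSE}(i), exactly as in the paper. Your explicit appeal to the permanence of the GC class under multiplication by the bounded $\gamma_0$, and the observation that the $\upsilon_\eta$-factor is $\eta$-free so the supremum factorizes, only make explicit what the paper leaves implicit.
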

\begin{proof}
	The uniformity of $\eta\in\mathcal{H}_n$ related to the term $\upsilon_{\eta}$ is innocuous, as the posterior law of $F_{\eta}$ is equivalent to the Bayesian bootstrap measure, which no longer depends on $\eta$. That is, we can write
	\begin{equation*}
		\mathbb{E}_{\eta}[D]=\int \mathrm{d}F_{\eta}(y,1,x)=\sum_{i=1}^{n}M_{i}D_i, ~~\text{with}~~M_{i}=e_i/\sum_{i=1}^{n}e_i,~~\text{for} ~~e_i\stackrel{iid}{\sim} \textup{Exp}(1),
	\end{equation*}
	conditional on the observed data $Z^{(n)}$.
	
Proof of $(i)$. Because that $P_0\left[\left(\frac{D-\pi_0(X)}{1-\pi_0(X)}\right)\left(\Delta Y-m_{0}(X)\right)-D\tau_0\right]=0$, we can write
	\begin{equation*}
	R_{n,\eta}=\mathbb{G}_n\left[\left(\frac{D-\pi_0(X)}{1-\pi_0(X)}\right)\left(\Delta Y-m_{0}(X)\right)-D\tau_0\right]\times\left(1-\frac{\mathbb{E}_{\eta}[D]}{\bar\pi_0}\right).
	\end{equation*}
By the moment condition for the envelope function in Assumption \ref{Assump:NonDRSE} (i), the first term is $O_{P_0}(1)$ and the second term is $O_{P_M}(1)$ in $P_0$-probability. By the relationship in (71) of \cite{cheng2010bootstrap}, the remainder term $R_{n,\eta}$ converges to zero in $P_0$-probability, conditional on the data. 

	Proof of $(ii)$. For the second part, conditional on the observed data $Z^{(n)}$, we have
		\begin{align}\label{CLTDenominator}
			\sqrt{n}(\upsilon_{\eta}-1)&=\frac{\sqrt{n}}{\bar\pi_0}\left(\sum_{i=1}^{n}M_{i}D_i-\pi_0\right)\nonumber\\
			&=\frac{1}{\bar\pi_0}\left(\mathbb{G}_n^*[D]+\mathbb{G}_n[D] \right)=O_{P_M}(1)~~~\text{in}~~P_0-\text{probability},
		\end{align}
		where $\mathbb{G}_n^*$ denotes the Bayesian bootstrap weighted analog of $\mathbb{G}_n$.
	In addition, the definition of the bias term $b_{0,\eta}$ yields
	\begin{equation*}
		b_{0,\eta}=\frac{1}{n}\sum_{i=1}^{n}\gamma_0(D_i,X_i)[m_0(X_i)-m_{\eta}(X_i)]=(\mathbb{P}_n-P_0)[\gamma_0(m_0-m_{\eta})],
	\end{equation*} 
	where the second equation follows from $\mathbb{E}_0[\gamma_0(D,X)\mid X]=0$.
By the $P_0$-Glivenko-Cantelli property of the conditional mean function imposed in Assumption \ref{Assump:NonDRSE}(i), we have $	\sup_{\eta\in \mathcal{H}_n}|b_{0,\eta}|=o_{P_0}(1)$, which, combined with (\ref{CLTDenominator}), concludes the proof.
\end{proof}
\subsection{Prior Stability of GP Priors}\label{sec:GPStable}
In this section, we verify the prior stability using standard Gaussian process priors, which is used in the proof of Theorem \ref{BvM:thm:standard}. Here we follow the strategy in Section 5.3 of \cite{ray2020causal}. We first approximate $\eta_{t}^m$ by an element in the RKHS $\mathbb{H}$ and then apply the Cameron-Martin theorem in (\ref{CMThm}); see Proposition I.20 in \citep{ghosal2017fundamentals}. 
\begin{lemma}\label{lemma:prior:stability}
Under the conditions in Assumption \ref{Assump:NonDRPS}, we have
	\begin{equation}
\frac{\int_{\mathcal{H}_n}e^{\ell_n^m(\eta^m_t)}\mathrm{d}\Pi(\eta^m)}{\int_{\mathcal{H}_n}e^{\ell_n^m(\eta^{m\prime})}\mathrm{d}\Pi(\eta^{m\prime})} \rightarrow^{P_0} 1.
\end{equation} 	
\end{lemma}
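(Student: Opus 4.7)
The plan is to establish the ratio via the Cameron--Martin theorem, after approximating the shift direction $\gamma_{c,0}$, which generically lies outside the RKHS $\mathbb{H}^m$ of the Gaussian prior. By Assumption \ref{Assump:NonDRPS}, there is $\bar\gamma_n\in\mathbb{H}^m$ with $\|\bar\gamma_n-\gamma_0\|_\infty\le\zeta_n$, $\|\bar\gamma_n\|_{\mathbb{H}^m}\le\sqrt{n}\,\zeta_n$, and $\sqrt{n}\,\varepsilon_n\zeta_n=o(1)$. Setting $h_0:=t\bar\gamma_n/\sqrt{n}$ and $\delta_n:=t(\gamma_{c,0}-\bar\gamma_n)/\sqrt{n}$, I would decompose $\eta_t^m=\eta^m-h_0-\delta_n$, isolating an RKHS-valued shift $h_0$ from a sup-norm-small remainder with $\|\delta_n\|_\infty\le |t|\zeta_n/\sqrt{n}$.

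Next I would substitute $\xi:=\eta^m-h_0$ in the numerator. Since $h_0\in\mathbb{H}^m$, the Cameron--Martin theorem (Proposition I.20 of \cite{ghosal2017fundamentals}) gives
\[
\int_{\mathcal{H}_n^m}\! e^{\ell_n^m(\eta^m-h_0-\delta_n)}\,d\Pi(\eta^m)
=\int_{\mathcal{H}_n^m-h_0}\! e^{\ell_n^m(\xi-\delta_n)}\exp\!\Big(\!-\!Uh_0(\xi)-\tfrac12\|h_0\|_{\mathbb{H}^m}^2\Big)\,d\Pi(\xi).
\]
The deterministic quadratic factor is $\tfrac12\|h_0\|_{\mathbb{H}^m}^2\le t^2\zeta_n^2/2=o(1)$, and the linear correction $Uh_0(\xi)=(t/\sqrt{n})U\bar\gamma_n(\xi)$ is a centered Gaussian functional with prior variance $t^2\|\bar\gamma_n\|_{\mathbb{H}^m}^2/n\le t^2\zeta_n^2$, hence of order $O_{P_0}(\zeta_n)=o_{P_0}(1)$ on the high-posterior-probability sets after accounting for the posterior contraction of $\xi$ around $\eta_0^m$.

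For the residual shift, using the explicit exponential-family identity $\log p_{c,\eta^m}(y,d,x)=(1-d)[y\eta^m(x)-\Upsilon(\eta^m(x))+\log c(y)]$ and a second-order Taylor expansion as in Lemma \ref{lemma:Taylor} gives, uniformly over $\xi\in\mathcal{H}_n^m$,
\[
\ell_n^m(\xi-\delta_n)-\ell_n^m(\xi)=-\sum_i(1-D_i)(\Delta Y_i-m_\xi(X_i))\delta_n(X_i)+O_{P_0}(n\|\delta_n\|_\infty^2).
\]
The quadratic remainder is $O_{P_0}(t^2\zeta_n^2)=o_{P_0}(1)$. The linear piece splits into a drift $nP_0[(1-\pi_0(X))(m_0-m_\xi)\delta_n]=O_{P_0}(\sqrt{n}\,\zeta_n\varepsilon_n)=o_{P_0}(1)$ by Cauchy--Schwarz and the product-rate condition, and a fluctuation $\sqrt{n}\,\mathbb{G}_n[(1-D)(\Delta Y-m_\xi)\delta_n]$ whose variance is bounded by $n\|\delta_n\|_\infty^2\cdot C=O(\zeta_n^2)=o(1)$; uniformity in $\xi$ is delivered by the Glivenko--Cantelli/Donsker structure of Assumption \ref{Assump:NonDRSE} combined with $\|\delta_n\|_\infty\to 0$.

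Combining these reductions, the numerator equals $(1+o_{P_0}(1))\int_{\mathcal{H}_n^m-h_0}e^{\ell_n^m(\xi)}d\Pi(\xi)$. The second clause of Assumption \ref{Assump:NonDRPS}, $\Pi(\eta^m\in\mathcal{H}_n^m-h_0\mid Z^{(n)})\to_{P_0}1$, together with $\Pi(\mathcal{H}_n^m\mid Z^{(n)})\to_{P_0}1$, ensures that the shifted and unshifted sieves capture asymptotically all of the posterior mass, so the ratio converges to $1$. The main obstacle I anticipate is controlling the linear Cameron--Martin functional $U\bar\gamma_n(\xi)$ uniformly under the (non-Gaussian) posterior, since $\|\bar\gamma_n\|_{\mathbb{H}^m}$ may grow like $\sqrt{n}\,\zeta_n$; this relies on the fine interplay between the RKHS norm bound, the posterior contraction rate $\varepsilon_n$, and the product condition $\sqrt{n}\,\varepsilon_n\zeta_n=o(1)$ built into Assumption \ref{Assump:NonDRPS}.
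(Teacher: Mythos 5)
Your proposal is correct and follows essentially the same route as the paper's proof: Cameron--Martin for the RKHS approximant $\bar\gamma_n$, a likelihood expansion for the sup-norm-small residual $t(\gamma_{c,0}-\bar\gamma_n)/\sqrt{n}$ (which the paper delegates to Lemma 3 of \cite{ray2020causal}), and the shifted-sieve clause of Assumption \ref{Assump:NonDRPS} to close the argument. The one imprecision is that under the \emph{posterior} the linear Cameron--Martin functional is controlled at the level $O(\sqrt{n}\,\varepsilon_n\zeta_n)$ --- via the Gaussian tail bound at threshold $M\sqrt{n}\,\varepsilon_n\Vert\bar\gamma_n\Vert_{\mathbb{H}^m}$ together with the fact that sets of prior mass $o(e^{-Cn\varepsilon_n^2})$ have vanishing posterior mass --- rather than $O(\zeta_n)$ as you state; both vanish under the product condition, so the conclusion is unaffected.
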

\begin{proof}
 Let $\overline\gamma_n\in\mathbb{H}^m(\zeta_n)$, as stated in our Assumption \ref{Assump:NonDRPS}. Also, we set $\eta_{n,t}=\eta^m-t\overline\gamma_n/\sqrt{n}$. By the Cameron-Martin theorem, the distribution $\Pi_{n,t}$ of $\eta_{n,t}$ if $\eta^m$ is distributed according to the prior $\Pi$ has the Radon-Nikodym density  
\begin{equation}\label{E9_density}
	\frac{d\Pi_{n,t}}{d\Pi}(\eta^m)=\exp\left(tU_n(\eta^m)/\sqrt{n}-t^2\Vert \overline\gamma_n\Vert_{\mathbb H^m}^2/(2n)\right),
\end{equation}
where $U_n(\cdot)$ is a centered Gaussian variable with variance $\Vert \overline\gamma_n\Vert_{\mathbb H^m}^2$.

By the Gaussian tail bound, we have
\begin{equation}
	\Pi(\eta^m:\left|U_n(\eta^m)\right|>M\sqrt{n}\varepsilon_n\Vert\overline\gamma_n\Vert_{\mathbb H^m} )\leq 2\exp(-M^2n\varepsilon_n^2/2).
\end{equation}
As a result, the posterior measure of the set in the display tends to 0 in probability for large enough $M$ by Lemma 4 of \cite{ray2020causal}. Hence the set 
\begin{equation*}
	B_n:=\left\{\eta^m:\left|U_n(\eta^m)\right|\leq M\sqrt{n}\varepsilon_n\Vert\overline\gamma_n\Vert_{\mathbb H^m}  \right\}\cap\mathcal{H}_n
\end{equation*}
also satisfies $\Pi(B_n|Z^{(n)})\to1$ in probability. Considering (\ref{E9_density}) on the set $B_n$ and using Assumption \ref{Assump:NonDRPS}, we have
\begin{equation*}
	\left|\log\frac{d\Pi_{n,t}}{d\Pi}(\eta^m)\right|\leq M|t|\sqrt{n}\varepsilon_n\zeta_n+\frac{t^2\zeta_n^2}{2}\to 0.
\end{equation*}
By applying Lemma 3 in \cite{ray2020causal} with $A_n=B_n$, $\xi_0=\gamma_0$ and $w_n$ a sufficiently large fixed constant, we have
\begin{equation*}
	\sup_{\eta^m\in B_n}\left|\ell^m_n(\eta_{n,t})-\ell^m_n(\eta_t^m)\right|=o_{P_0}(1)
\end{equation*}
By the change of variable $\eta^m-t\overline\gamma_n/\sqrt{n}\mapsto v$, we have
\begin{equation*}
	\frac{\int_{B_n}e^{\ell^m_n(\eta^m_t)}d\Pi(\eta^m)}{\int_{B_n}e^{\ell^m_n(\eta^m)}d\Pi(\eta^m)}=\frac{\int_{B_n}e^{\ell^m_n(\eta_{n,t})}d\Pi(\eta^m)}{\int_{B_n}e^{\ell^m_n(\eta^m)}d\Pi(\eta^m)}e^{o_{P_0}(1)}=\frac{\int_{B_{n,t}}e^{\ell^m_n(v)}d\Pi_{n,t}(v)}{\int_{B_n}e^{\ell^m_n(\eta^m)}d\Pi(\eta^m)}e^{o_{P_0}(1)},
\end{equation*}
where $B_{n,t}=B_n-t\overline\gamma_n/\sqrt{n}$. We can replace $\Pi_{n,t}$ in the numerator by $\Pi$ at the cost of anther multiplicative $1+o_{P_0}(1)$ term. This makes the quotient into the ratio $\Pi(B_{n,t}|Z^{(n)})/\Pi(B_{n}|Z^{(n)})$. It has already been proved that $\Pi(B_n|Z^{(n)})=1-o_{P_0}(1)$, so it is sufficient to prove the same result for the numerator, i.e., $\Pi(B_{n,t}|Z^{(n)})=1-o_{P_0}(1)$. Note that
\begin{align*}
	B_{n,t}^c=&\left\{v:v+t\overline\gamma_n/\sqrt{n}\notin \mathcal{H}^m_n \right\}\cap \left\{v:\Vert \Upsilon^\prime(v+t\overline\gamma_n/\sqrt{n})-m_0(v)\Vert_{2,F_0}>\varepsilon_n \right\}\\
	&\cap \left\{v:|U_n(v+t\overline\gamma_n/\sqrt{n})|>M\sqrt{n}\varepsilon_n\Vert\overline\gamma_n\Vert_{\mathbb H^m} \right\}.
\end{align*}
The posterior probability of the first set tends to zero in probability by assumption. Considering the second term, we make use of the smoothness of the link function to get
\begin{equation*}
	\Vert \Upsilon^\prime(\eta^m+t\overline\gamma_n\sqrt{n})-\Upsilon^\prime(\eta^m)\Vert_{2,F_0}\lesssim \frac{\Vert\overline\gamma_n\Vert_{2, F_0}}{\sqrt{n}}\lesssim\frac{1}{\sqrt{n}}.
\end{equation*}
Therefore, the second set is contained in $\{\eta^m:\Vert \Upsilon^\prime(\eta^m)-m_0\Vert_{2, F_0}>\varepsilon_n-C/\sqrt{n} \}$, which has posterior probability $o_{P_0}(1)$.

When it comes to the third term, note that 
\begin{equation*}
	U_n(\eta^m+t\overline\gamma_n/\sqrt{n})\sim \mathbb{N}(-t\Vert\overline\gamma_n\Vert_{\mathbb H^m}^2/\sqrt{n},\Vert\overline\gamma_n\Vert_{\mathbb H^m}^2),
\end{equation*}
if $\eta^m$ is distributed according to the GP prior. Because the mean $t\Vert\overline\gamma_n\Vert_{\mathbb H^m}^2/\sqrt{n}$ of this Gaussian variable is negligible relative to its standard deviation, we can utilize the Gaussian tail bound to show $\Pi(|U_n(\eta^m+t\overline\gamma_n/\sqrt{n})|>M\sqrt{n}\varepsilon_n\Vert\overline\gamma_n\Vert_{\mathbb H^m})$ is exponentially small.
\end{proof}

\begin{lemma}\label{lemma:negligible} Let Assumptions \ref{Ass:unconfounded} and \ref{Assump:Rate} be satisfied. Then, we have
\begin{align*}
\sqrt n\,\mathbb P_n[\gamma_0 (\widehat m-m_0)]=o_{P_0}(1).
\end{align*}
\end{lemma}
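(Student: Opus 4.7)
The plan is to decompose $\sqrt n\,\mathbb{P}_n[\gamma_0(\widehat m-m_0)]$ into an empirical process piece and a deterministic (conditional) mean piece:
\begin{equation*}
\sqrt n\,\mathbb{P}_n[\gamma_0(\widehat m-m_0)] = \mathbb{G}_n[\gamma_0(\widehat m-m_0)] + \sqrt n\, P_0[\gamma_0(\widehat m-m_0)],
\end{equation*}
and to show each term is $o_{P_0}(1)$ separately. The crucial structural fact is that $\widehat m$ is computed from an auxiliary sample independent of $Z^{(n)}$ (stated in Assumption \ref{Assump:Rate}), so in both terms $\widehat m$ can be treated as a fixed function when conditioning on the auxiliary data.

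For the deterministic term, I would use the key identity $\mathbb{E}_0[\gamma_0(D,X)\mid X]=0$, which follows directly from the formula $\gamma_0(d,x)=\frac{d}{\pi_0}-\frac{1-d}{\pi_0}\frac{\pi_0(x)}{1-\pi_0(x)}$ since $\mathbb{E}_0[D\mid X]=\pi_0(X)$. Conditioning on the auxiliary sample, the function $\widehat m(X)-m_0(X)$ is $\sigma(X)$-measurable, so by the tower property
\begin{equation*}
P_0[\gamma_0(\widehat m-m_0)] = \mathbb{E}_0\big[\mathbb{E}_0[\gamma_0(D,X)\mid X]\,(\widehat m(X)-m_0(X))\big] = 0.
\end{equation*}
Hence $\sqrt n\, P_0[\gamma_0(\widehat m-m_0)]=0$ exactly, conditionally on the auxiliary sample, and therefore unconditionally as well.

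For the empirical process piece, I would bound the conditional second moment. Conditionally on the auxiliary sample, $\mathbb{G}_n[\gamma_0(\widehat m-m_0)]$ is a centered sum of i.i.d.\ variables, so
\begin{equation*}
\mathbb{E}_0\!\left[\mathbb{G}_n[\gamma_0(\widehat m-m_0)]^2 \,\big|\, \widehat m\right] \le P_0\!\left[\gamma_0^2(\widehat m-m_0)^2\right] \le \|\gamma_0\|_\infty^2 \,\|\widehat m-m_0\|_{2,F_0}^2.
\end{equation*}
The overlap condition in Assumption \ref{Ass:unconfounded}(iii) guarantees $\|\gamma_0\|_\infty<\infty$, while Assumption \ref{Assump:Rate} yields $\|\widehat m-m_0\|_{2,F_0}=O_{P_0}(\varepsilon_n)=o_{P_0}(1)$. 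A conditional Markov inequality then delivers $\mathbb{G}_n[\gamma_0(\widehat m-m_0)]=o_{P_0}(1)$, which completes the proof when combined with the first step.

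The main obstacle is conceptual rather than computational: one must recognize that, because $\widehat m$ comes from an independent auxiliary sample, no Donsker-type entropy condition is needed on the class $\{\widehat m-m_0\}$, and that the zero-mean property of $\gamma_0$ kills the would-be bias term exactly. Everything else reduces to an elementary variance bound under the boundedness of $\gamma_0$ supplied by overlap.
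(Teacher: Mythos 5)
Your proof is correct and follows essentially the same route as the paper's: both exploit the independence of $\widehat m$ from $Z^{(n)}$ to treat it as fixed, use $\mathbb{E}_0[\gamma_0(D,X)\mid X]=0$ to eliminate the mean, and bound the resulting second moment via the overlap condition and $\|\widehat m-m_0\|_{2,F_0}=O_{P_0}(\varepsilon_n)$ before applying a (conditional) Chebyshev argument. The only cosmetic difference is that you split off the $\sqrt n\,P_0$ term explicitly, whereas the paper computes the conditional second moment of the whole sum directly and observes that the cross terms vanish.
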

\begin{proof}
The estimator $\widehat m$ is based on an auxiliary sample and hence it is sufficient to consider deterministic functions $m_n$ with the same rates of convergence as $\widehat m$.
We compute
\begin{align*}
&\mathbb E_0\left[\Big(\frac{1}{\sqrt n}\sum_{i=1}^n\gamma_0(D_i, X_i)(m_n-m_0)(X_i)\Big)^2\mid X_1, \ldots, X_n\right]\\
& = \frac{1}{n}\sum_{i, i'}(m_n-m_0)(X_i)(m_n-m_0)(X_{i'}) \mathbb E_0\left[\gamma_0(D_i, X_i) \gamma_0(D_{i'},X_{i'})\mid X_i, X_{i'}\right]\\
& = \frac{1}{n}\sum_{i=1}^n(m_n-m_0)^2(D_i,X_i) Var_0(\gamma_0(D_i, X_i)|X_i),
\end{align*}
using that 
\begin{align*}
\E_0[\gamma_0(D,X)\mid X]=\frac{\pi_0(X)}{\bar\pi_0}-\frac{1-\pi_0(X)}{\bar\pi_0} \frac{\pi_0(X)}{1-\pi_0(X)}=0
\end{align*}
Now overlap as imposed in Assumption \ref{Ass:unconfounded}(iii) implies
\begin{align*}
Var_0(\gamma_0(D_i, X_i)|X_i)=\frac{\pi_0(X)}{\bar\pi_0^2}+\frac{\pi_0^2(X)}{(1-\pi_0(X))\bar\pi_0^2} \lesssim 1.
\end{align*}
Consequently,  we obtain for the unconditional squared expectation that
\begin{align*}
\mathbb E_0\left[\Big(\frac{1}{\sqrt n}\sum_{i=1}^n\gamma_0(D_i, X_i)(m_n-m_0)(X_i)\Big)^2\right]
\lesssim  \Vert m_n-m_0\Vert_{2, F_0}^2=o(1)
\end{align*}
by Assumption \ref{Assump:Rate}, which implies the desired result. 
\end{proof}
\section{Computational Details in Algorithms}\label{details:implementation}

Recall that the prior placed on $m(x)$ in Algorithm \ref{algorithm_1} is a Gaussian process with mean $\mu$ and squared exponential (SE) covariance function \citep[p.83]{williams2006gaussian}
$K\left(x,x^\prime\right):= \nu^2 \exp\left(-\sum_{l=1}^{p}a_{ln}^{2}(x_{l}-x^\prime_{l})^2/2\right)$.  In implementation, hyperparameters $\mu$, $\nu^2$, $a_{1n},\ldots,a_{pn}$ and $\sigma^2$ (the variance of the noise $U$) are determined by maximizing the marginal likelihood. When the hyperparameters $a_n$ are as stated in Proposition \ref{prop:exponential:dr}, the convergence rate of $\widehat{m}$ is $O_{P_0}\big((n/\log n)^{-s_m/(2s_m+p)}\big)$. This can be shown by combining Theorems 11.22, 11.55 and 8.8 of \cite{ghosal2017fundamentals}. 
In Algorithm \ref{algorithm_2}, the adjusted prior placed on $m(x)$ has covariance kernel $K_c\left(x,x^\prime\right) = K\left(x,x^\prime\right)  + \varsigma_n^2\widehat \gamma(0,x)\,\widehat \gamma(0,x^\prime),$ cf. related constructions from \cite{ray2019debiased}, \cite{ray2020causal}, and \cite{BLY2022}.
The parameter $\varsigma_n$, representing the standard deviation of $\lambda$, controls the weight of the prior adjustment relative to the standard Gaussian process.
The choice $\varsigma_n=\nu\log n_c/(\sqrt{n_c}\, \Gamma_n)$ in Algorithm \ref{algorithm_2} satisfies the rate condition in Assumption \ref{Assump:Prior} with probability approaching one. It is similar to that suggested by \citet[page 6]{ray2019debiased}, which is proportional to $1/(\sqrt{n}\,\Gamma_n)$. The factor $\Gamma_n$ normalizes the second term (the adjustment term) of $K_c$  to have the same scale as the unadjusted covariance $K$.

We describe how Step (a) of Posterior Computation in Algorithm \ref{algorithm_2} is conducted. The corresponding step in Algorithm \ref{algorithm_1} immediately follows by replacing the adjusted kernel function $K_c$ by the original kernel function $K$.
Let $\boldsymbol y_0$ be the vector of $\left\{\Delta Y_i: D_i=0 \right\}$,
$\boldsymbol X_0\in \mathbb{R}^{n_c\times p}$ be the matrix of data $\left\{X_i: D_i=0\right\}$ and $\boldsymbol X \in \mathbb{R}^{n\times p}$ be the matrix of data $\left\{X_i: i=1,\cdots,n\right\}$. Let $\boldsymbol m_{n_c}$ and $\boldsymbol m_{n}$ be the $n_c$-vector and $n$-vector of the function $m(x)$ evaluated at $\boldsymbol X_0$ and $\boldsymbol X$ respectively:
\begin{equation*}
\boldsymbol m_{n_c} = \left(m(X_i): D_i=0\right)^{\top}, \text{and}\quad \boldsymbol m_n=\left[m(X_1),\dots,m (X_n)\right]^{\top}.
\end{equation*}
For matrices $\boldsymbol X$ and $\boldsymbol X_0$, we define $K_c(\boldsymbol X,\boldsymbol X_0)$ as a $n\times n_c$ matrix whose $(i,j)$-th element is $K_c(X_{i},X_{0j})$, where $X_{i}$ is the $i$-th row of $\boldsymbol X$ and $X_{0j}$ is the $j$-th row of $\boldsymbol X_0$. Analogously, $K_c(\boldsymbol X_0,\boldsymbol X_0)$ is an $n_c\times n_c$ matrix with the $(i,j)$-th element being $K_c(X_{0i},X_{0j})$, and $K_c(\boldsymbol X,\boldsymbol X)$ is  a $n\times n$ matrix with the $(i,j)$-th element being $K_c(X_{i},X_{j})$.

Given the GP prior with mean $\mu$ and covariance kernel $K_c$, the posterior of $\boldsymbol m_n$ has a Gaussian distribution with the mean $\bar{\boldsymbol m}_n$ and covariance V$(\boldsymbol m_n)$ specified as follows \citep[p.16]{williams2006gaussian}:
\begin{eqnarray*}
\bar{\boldsymbol m}_n&=& \mu\boldsymbol 1_{n} +  K_c(\boldsymbol X,\boldsymbol X_0)\left[K_c(\boldsymbol X_0,\boldsymbol X_0)+\sigma^2 \boldsymbol I_{n_c}\right]^{-1}\boldsymbol (y_0-\mu\boldsymbol 1_{n_c}),\\
\text{V}(\boldsymbol m_n) &=& K_c(\boldsymbol X,\boldsymbol X)-K_c(\boldsymbol X,\boldsymbol X_0)\left[K_c(\boldsymbol X_0,\boldsymbol X_0)+ \sigma^2 \boldsymbol I_{n_c}\right]^{-1}K_c^\top(\boldsymbol X,\boldsymbol X_0).
\end{eqnarray*}
We use the MATLAB toolbox $\mathtt{GPML}$ for implementation.\footnote{The $\mathtt{GPML}$ toolbox can be downloaded from \url{http://gaussianprocess.org/gpml/code/matlab/doc/}.}

\section{Additional Simulation Results}\label{appendix:simu}
\renewcommand{\thetable}{A\arabic{table}}
    \setcounter{table}{0}
This section provides additional simulation results. We first present the finite sample performance of an alternative approach with a Bayesian interpretation. Then we examine the performance of the double-robust Bayesian proposal under varying values of the tuning parameter $\varsigma_n$. We also consider scenarios where the conditional distribution of $\Delta Y$ given $(D, X^\top)$ does not belong to the single-parameter exponential family. Finally, we consider the cases where the propensity score may take extreme values.

\subsection{Comparison with A Bayesian Bootstrap Method}\label{sec:alternativeBootstrap}
As an alternative to our Bayesian DiD, which imposes a Gaussian process prior on the conditional mean function, one may apply the Bayesian bootstrap to the entire procedure.\footnote{We thank an anonymous referee for pointing out this approach.} To illustrate,  let us consider the outcome regression approach using Bayesian bootstrap.  Assume a parametric model $m_0(x)=x^\prime\beta_0$.  For $s=1,\dots, S$,  let $\left\{M_i^s, i=1,\dots, n\right\}$ be  the Bayesian bootstrap weights generated in Algorithm 1.  For each $s$,  run the following Bayesian bootstrap weighted OLS:
\[
\hat{\beta}^s = \arg\min_{\beta}  \sum_{i=1}^n M_i^s (1-D_i)\left(\Delta Y_i - X_i^\prime \beta\right)^2
\]
and obtain a Bayesian bootstrap distribution of the ATT $\left\{\hat{\tau}^{s}, s=1,\dots,S\right\}$ where
\[
\hat{\tau}^{s}=\frac{\sum_{i} M_{i}^{s} D_{i}\left(\Delta Y_{i}-X_i^\prime \hat\beta^s\right)}{\sum_{i} M_{i}^s D_{i}} .
\]
The empirical quantiles of the distribution $\left\{\hat{\tau}^{s}, s=1, \dots,S\right\}$ are then used to construct confidence intervals for the ATT.  Its frequentist validity is guaranteed by a proper modification of  \cite{chen2009efficient},  which covers the weighted bootstrap for a general class of semiparametric models.
The Bayesian bootstrap scheme also has a Bayesian interpretation, but it differs from our Bayesian proposals in modeling the conditional mean function $m_0(x)$.  Our Bayesian proposals explicitly place a Gaussian process prior on it and calculate its posterior using an \textit{i.i.d.} normal model,  whereas the Bayesian bootstrap scheme approximates its posterior by repeatedly solving the least square problem in lieu of \cite{chamberlain2003bayesian}.

We apply the Bayesian bootstrap scheme to the OR, IPW and DR approaches and report the results in Table \ref{tab:simu_n1K_bb}.  As shown, the Bayesian bootstrap scheme version of these frequentist methods performs very similarly to their counterparts in Table~\ref{tab:simu_n1K} in Section \ref{sec:simulations} that use plug-in standard errors. On the other hand, the Bayesian bootstrap scheme performs differently from our Bayesian proposals as shown in Table~\ref{tab:simu_n1K}.

\begin{table}[H]
\centering
\caption{Simulation results for Bayesian bootstrap scheme under DGPs 1 to 4. Sample size $n=1000$.}
\vskip.15cm
{\small 
\label{tab:simu_n1K_bb}
\setlength{\tabcolsep}{3pt}
\renewcommand{\arraystretch}{0.75}

\begin{tabular}{l cccc| cccc}
\toprule
Method
& \multicolumn{4}{c}{DGP 1}
& \multicolumn{4}{c}{DGP 2} \\
\cmidrule(lr){2-5}\cmidrule(lr){6-9}
& Bias & RMSE & CP & CIL
& Bias & RMSE & CP & CIL \\
\midrule
DR
  & $-0.003$ & $0.107$ & $0.936$ & $0.403$
  & $-0.000$ & $0.107$ & $0.935$ & $0.399$ \\
OR
  & $-0.002$ & $0.102$ & $0.945$ & $0.393$
  & $-0.001$ & $0.103$ & $0.936$ & $0.392$ \\
IPW
  & $-0.040$ & $1.140$ & $0.949$ & $4.273$
  & $-0.784$ & $1.244$ & $0.850$ & $3.647$ \\
\midrule
Method
& \multicolumn{4}{c}{DGP 3}
& \multicolumn{4}{c}{DGP 4} \\
\cmidrule(lr){2-5}\cmidrule(lr){6-9}
& Bias & RMSE & CP & CIL
& Bias & RMSE & CP & CIL \\
\midrule
DR
  & $-0.029$ & $1.008$ & $0.952$ & $3.825$
  & $-2.498$ & $2.689$ & $0.286$ & $3.821$ \\
OR
  & $-1.347$ & $1.822$ & $0.805$ & $4.773$
  & $-5.149$ & $5.301$ & $0.011$ & $4.992$ \\
IPW
  & $-0.000$ & $1.461$ & $0.939$ & $5.439$
  & $-3.889$ & $4.156$ & $0.240$ & $5.680$ \\
\bottomrule
\end{tabular}}
\end{table}

\subsection{Sensitivity with respect to the tuning parameter $\varsigma_n$}\label{appendix:sensivity:setup}
Table \ref{tab:simu_sigma_n1K} evaluates the sensitivity of the finite sample performance of DR Bayes with respect to the standard deviation $\varsigma_n$ that controls the strength of the prior correction term. We scale $\varsigma_n$ in Algorithm \ref{algorithm_2} by a factor
$c_{\varsigma}\in\{1/5, 1/2, 2, 5\}$.  Note that $c_{\varsigma}=1$ corresponds to the simulation results of DR Bayes in Table \ref{tab:simu_n1K}.  Table \ref{tab:simu_sigma_n1K}  shows that the performance of DR Bayes is stable to the choice of $c_{\varsigma}$. 

\begin{table}[H]
\centering
\caption{The effect of prior adjustment strength $\varsigma_n$ on DR Bayes under DGPs~1 to~4. Sample size $n=1000$.}
\vskip.15cm
{\small 
\label{tab:simu_sigma_n1K}
\setlength{\tabcolsep}{3pt}
\renewcommand{\arraystretch}{0.75}
\begin{tabular}{lccccc| cccc}
\toprule
Method & $c_{\varsigma}$
  & \multicolumn{4}{c}{DGP 1}
  & \multicolumn{4}{c}{DGP 2} \\
\cmidrule(lr){3-6}\cmidrule(lr){7-10}
  & & Bias & RMSE & CP & CIL
  & Bias & RMSE & CP & CIL \\
\midrule
DR Bayes$^{\text{Logit}}$ & $1/5$
  & $-0.003$ & $0.111$ & $0.952$ & $0.447$
  & $-0.000$ & $0.110$ & $0.956$ & $0.433$ \\
 & $1/2$
  & $-0.003$ & $0.111$ & $0.952$ & $0.447$
  & $-0.000$ & $0.110$ & $0.956$ & $0.433$ \\
 & $2$
  & $-0.003$ & $0.111$ & $0.952$ & $0.447$
  & $-0.000$ & $0.110$ & $0.955$ & $0.433$ \\
 & $5$
  & $-0.003$ & $0.111$ & $0.951$ & $0.447$
  & $-0.000$ & $0.110$ & $0.954$ & $0.433$ \\
\cline{1-10}
DR Bayes$^{\text{RF}}$ & $1/5$
  & $-0.002$ & $0.147$ & $0.965$ & $0.596$
  & $-0.001$ & $0.148$ & $0.961$ & $0.611$ \\
 & $1/2$
  & $-0.002$ & $0.147$ & $0.965$ & $0.596$
  & $-0.001$ & $0.148$ & $0.961$ & $0.611$ \\
 & $2$
  & $-0.002$ & $0.147$ & $0.965$ & $0.596$
  & $-0.001$ & $0.148$ & $0.960$ & $0.611$ \\
 & $5$
  & $-0.002$ & $0.147$ & $0.965$ & $0.596$
  & $-0.001$ & $0.148$ & $0.962$ & $0.611$ \\
\midrule
Method & $c_{\varsigma}$
  & \multicolumn{4}{c}{DGP 3}
  & \multicolumn{4}{c}{DGP 4} \\
\cmidrule(lr){3-6}\cmidrule(lr){7-10}
  & & Bias & RMSE & CP & CIL
  & Bias & RMSE & CP & CIL \\
\midrule
DR Bayes$^{\text{Logit}}$ & $1/5$
  & $\phantom{-}0.003$ & $0.832$ & $0.953$ & $3.210$
  & $-1.102$ & $1.361$ & $0.752$ & $3.328$ \\
 & $1/2$
  & $\phantom{-}0.010$ & $0.833$ & $0.956$ & $3.308$
  & $-1.081$ & $1.344$ & $0.775$ & $3.432$ \\
 & $2$
  & $\phantom{-}0.012$ & $0.834$ & $0.958$ & $3.357$
  & $-1.072$ & $1.336$ & $0.790$ & $3.496$ \\
 & $5$
  & $\phantom{-}0.013$ & $0.834$ & $0.958$ & $3.361$
  & $-1.071$ & $1.336$ & $0.791$ & $3.501$ \\
\cline{1-10}
DR Bayes$^{\text{RF}}$ & $1/5$
  & $-0.001$ & $1.066$ & $0.972$ & $3.941$
  & $-0.703$ & $1.446$ & $0.911$ & $4.183$ \\
 & $1/2$
  & $-0.001$ & $1.068$ & $0.972$ & $3.950$
  & $-0.690$ & $1.441$ & $0.913$ & $4.193$ \\
 & $2$
  & $-0.001$ & $1.069$ & $0.972$ & $3.952$
  & $-0.687$ & $1.440$ & $0.915$ & $4.195$ \\
 & $5$
  & $-0.001$ & $1.069$ & $0.972$ & $3.952$
  & $-0.687$ & $1.440$ & $0.915$ & $4.195$ \\
\bottomrule
\end{tabular}}
\end{table}

\subsection{Sensitivity with respect to error distributions}\label{appendix:sensivity:error}
We consider scenarios when the error terms $\epsilon_{1}$, $\epsilon_{2}(0)$ and $\epsilon_{2}(1)$ in the simulation designs deviate from the standard normal distribution.  Table \ref{tab:simu_chi_n1K} presents the results for the designs where the error terms follow a $\chi^2$ distribution with three degrees of freedom  (normalized to have a mean of zero and unit variance). Table \ref{tab:simu_hetero_n1K} considers the case of heteroskedastic errors where $\epsilon_{2}(d)\sim N(0, e(x))$ and $e(x)=\sum_{j=1}^4x_j^2/8$, for $d\in\{0,1\}$. As shown, both Bayesian and frequentist methods exhibit performance similar to that observed in Table \ref{tab:simu_n1K},  which considers normal errors. Thus, the results suggest that our Bayesian methods, which assume \textit{i.i.d. } normal errors, continue to deliver strong finite-sample performance even when the underlying error distribution deviates from standard normality. This is consistent with the theoretical results on posterior behavior under misspecification established by \cite{kleijn2006misInf}.

\begin{table}[H]
\centering
\caption{Simulation results under DGPs~1 to~4 with $\chi^2(3)$ errors. Sample size $n=1000$.}
\vskip.15cm
{\small 
\label{tab:simu_chi_n1K}
\setlength{\tabcolsep}{3pt}
\renewcommand{\arraystretch}{0.75}
\begin{tabular}{l cccc| cccc}
\toprule
Method
  & \multicolumn{4}{c}{DGP 1}
  & \multicolumn{4}{c}{DGP 2} \\
\cmidrule(lr){2-5}\cmidrule(lr){6-9}
  & Bias & RMSE & CP & CIL
  & Bias & RMSE & CP & CIL \\
\midrule
OR Bayes
  & $-0.001$ & $0.108$ & $0.940$ & $0.406$
  & $0.002$ & $0.110$ & $0.936$ & $0.409$ \\
DR Bayes$^{\text{Logit}}$
  & $-0.001$ & $0.114$ & $0.945$ & $0.446$
  & $0.000$ & $0.113$ & $0.943$ & $0.432$ \\
DR Bayes$^{\text{RF}}$
  & $0.002$ & $0.156$ & $0.965$ & $0.596$
  & $0.001$ & $0.257$ & $0.957$ & $0.643$ \\
DR Bayes$^{\text{Logit}}$(FS)
  & $-0.001$ & $0.110$ & $0.932$ & $0.419$
  & $0.001$ & $0.111$ & $0.929$ & $0.406$ \\
DR Bayes$^{\text{RF}}$(FS)
  & $-0.002$ & $0.140$ & $0.964$ & $0.587$
  & $0.002$ & $0.154$ & $0.963$ & $0.610$ \\
\cline{1-9}
DR
  & $-0.001$ & $0.110$ & $0.926$ & $0.408$
  & $0.001$ & $0.110$ & $0.930$ & $0.404$ \\
OR
  & $-0.000$ & $0.106$ & $0.934$ & $0.396$
  & $0.002$ & $0.106$ & $0.930$ & $0.394$ \\
IPW
  & $0.023$ & $1.129$ & $0.943$ & $4.311$
  & $-0.809$ & $1.240$ & $0.838$ & $3.628$ \\
DML$^{\text{RF-RF}}$
  & $-0.178$ & $0.712$ & $0.995$ & $4.460$
  & $0.493$ & $0.994$ & $0.991$ & $5.015$ \\
DML$^{\text{RF-NN}}$
  & $-0.015$ & $0.149$ & $0.968$ & $0.656$
  & $-0.018$ & $0.178$ & $0.961$ & $0.737$ \\
DML$^{\text{RF-GP}}$
  & $-0.362$ & $0.963$ & $0.967$ & $4.687$
  & $0.149$ & $1.125$ & $0.990$ & $5.882$ \\
\midrule
Method
  & \multicolumn{4}{c}{DGP 3}
  & \multicolumn{4}{c}{DGP 4} \\
\cmidrule(lr){2-5}\cmidrule(lr){6-9}
  & Bias & RMSE & CP & CIL
  & Bias & RMSE & CP & CIL \\
\midrule
OR Bayes
  & $-0.053$ & $0.419$ & $0.993$ & $2.256$
  & $-0.550$ & $0.750$ & $0.979$ & $3.231$ \\
DR Bayes$^{\text{Logit}}$
  & $-0.014$ & $0.798$ & $0.962$ & $3.308$
  & $-1.060$ & $1.328$ & $0.778$ & $3.402$ \\
DR Bayes$^{\text{RF}}$
  & $-0.060$ & $0.920$ & $0.970$ & $3.934$
   & $-0.782$  & $1.591$ & $0.893$ & $4.203$ \\
DR Bayes$^{\text{Logit}}$(FS)
  & $-0.057$ & $0.420$ & $0.965$ & $1.774$
  & $-0.546$ & $0.746$ & $0.844$ & $2.083$ \\
DR Bayes$^{\text{RF}}$(FS)
  & $-0.056$ & $0.441$ & $0.990$ & $2.252$
  & $-0.499$ & $0.717$ & $0.936$ & $2.628$ \\
\cline{1-9}
DR
  & $-0.077$ & $1.010$ & $0.946$ & $3.855$
  & $-2.528$ & $2.713$ & $0.266$ & $3.851$ \\
OR
  & $-1.389$ & $1.861$ & $0.807$ & $4.818$
  & $-5.174$ & $5.334$ & $0.019$ & $5.051$ \\
IPW
  & $-0.019$ & $1.428$ & $0.947$ & $5.467$
  & $-3.923$ & $4.203$ & $0.243$ & $5.721$ \\
DML$^{\text{RF-RF}}$
  & $0.273$ & $0.871$ & $0.987$ & $4.484$
  & $-0.842$ & $1.242$ & $0.960$ & $5.071$ \\
DML$^{\text{RF-NN}}$
  & $0.032$ & $0.809$ & $0.983$ & $3.906$
  & $-0.985$ & $1.448$ & $0.860$ & $4.648$ \\
DML$^{\text{RF-GP}}$
  & $0.244$ & $1.029$ & $0.992$ & $5.308$
  & $-0.888$ & $1.650$ & $0.939$ & $7.377$ \\
\bottomrule
\end{tabular}}
\end{table}

\begin{table}[H]
\centering
\caption{Simulation results under DGPs~1 to~4 with heteroskedastic errors. Sample size $n=1000$.}
\vskip.15cm
{\small 
\label{tab:simu_hetero_n1K}
\setlength{\tabcolsep}{3pt}
\renewcommand{\arraystretch}{0.75}
\begin{tabular}{l cccc| cccc}
\toprule
Method
  & \multicolumn{4}{c}{DGP 1}
  & \multicolumn{4}{c}{DGP 2} \\
\cmidrule(lr){2-5}\cmidrule(lr){6-9}
  & Bias & RMSE & CP & CIL
  & Bias & RMSE & CP & CIL \\
\midrule
OR Bayes
  & $-0.001$ & $0.091$ & $0.950$ & $0.356$
  & $0.002$ & $0.091$ & $0.944$ & $0.357$ \\
DR Bayes$^{\text{Logit}}$
  & $-0.003$ & $0.105$ & $0.939$ & $0.388$
  & $-0.001$ & $0.101$ & $0.935$ & $0.378$ \\
DR Bayes$^{\text{RF}}$
  & $-0.005$ & $0.134$ & $0.955$ & $0.510$
  & $-0.002$ & $0.137$ & $0.954$ & $0.527$ \\
DR Bayes$^{\text{Logit}}$(FS)
  & $-0.002$ & $0.096$ & $0.948$ & $0.364$
  & $0.001$ & $0.094$ & $0.936$ & $0.351$ \\
DR Bayes$^{\text{RF}}$(FS)
  & $-0.003$ & $0.139$ & $0.966$ & $0.528$
  & $-0.001$ & $0.128$ & $0.962$ & $0.529$ \\
\cline{1-9}
DR
  & $-0.002$ & $0.095$ & $0.940$ & $0.355$
  & $0.001$ & $0.092$ & $0.943$ & $0.354$ \\
OR
  & $-0.004$ & $0.088$ & $0.954$ & $0.341$
  & $-0.002$ & $0.086$ & $0.950$ & $0.339$ \\
IPW
  & $-0.009$ & $1.149$ & $0.950$ & $4.326$
  & $-0.793$ & $1.245$ & $0.851$ & $3.631$ \\
DML$^{\text{RF-RF}}$
  & $-0.178$ & $0.692$ & $1.000$ & $4.484$
  & $0.496$ & $0.968$ & $0.991$ & $4.922$ \\
DML$^{\text{RF-NN}}$
  & $-0.016$ & $0.152$ & $0.959$ & $0.628$
  & $-0.022$ & $0.180$ & $0.969$ & $0.719$ \\
DML$^{\text{RF-GP}}$
  & $-0.382$ & $0.950$ & $0.963$ & $4.750$
  & $0.142$ & $1.033$ & $0.998$ & $5.785$ \\
\midrule
Method
  & \multicolumn{4}{c}{DGP 3}
  & \multicolumn{4}{c}{DGP 4} \\
\cmidrule(lr){2-5}\cmidrule(lr){6-9}
  & Bias & RMSE & CP & CIL
  & Bias & RMSE & CP & CIL \\
\midrule
OR Bayes
  & $-0.041$ & $0.416$ & $0.992$ & $2.274$
  & $-0.529$ & $0.723$ & $0.977$ & $3.224$ \\
DR Bayes$^{\text{Logit}}$
  & $0.023$ & $0.853$ & $0.957$ & $3.305$
  & $-1.047$ & $1.305$ & $0.806$ & $3.437$ \\
DR Bayes$^{\text{RF}}$
  & $-0.000$ & $1.051$ & $0.970$ & $3.932$
  & $-0.700$ & $1.443$ & $0.907$ & $4.173$ \\
DR Bayes$^{\text{Logit}}$(FS)
  & $-0.046$ & $0.417$ & $0.965$ & $1.766$
  & $-0.526$ & $0.719$ & $0.852$ & $2.088$ \\
DR Bayes$^{\text{RF}}$(FS)
  & $-0.050$ & $0.426$ & $0.992$ & $2.290$
  & $-0.489$ & $0.696$ & $0.936$ & $2.603$ \\
\cline{1-9}
DR
  & $-0.028$ & $1.006$ & $0.951$ & $3.852$
  & $-2.497$ & $2.687$ & $0.284$ & $3.853$ \\
OR
  & $-1.347$ & $1.821$ & $0.813$ & $4.815$
  & $-5.149$ & $5.301$ & $0.009$ & $5.044$ \\
IPW
  & $0.001$ & $1.460$ & $0.942$ & $5.460$
  & $-3.888$ & $4.155$ & $0.241$ & $5.710$ \\
DML$^{\text{RF-RF}}$
  & $0.280$ & $0.831$ & $0.992$ & $4.466$
  & $-0.851$ & $1.203$ & $0.966$ & $5.026$ \\
DML$^{\text{RF-NN}}$
  & $0.050$ & $0.815$ & $0.986$ & $3.921$
  & $-1.016$ & $1.435$ & $0.856$ & $4.597$ \\
DML$^{\text{RF-GP}}$
  & $0.257$ & $1.009$ & $0.992$ & $5.308$
  & $-0.910$ & $1.597$ & $0.940$ & $7.270$ \\
\bottomrule
\end{tabular}}
\end{table}

\subsection{Sensitivity with respect to weaker overlap}
This section considers a scenario where the propensity score is more likely to take values close to $1$, bringing the overlap condition close to being violated.  
In our simulation DGPs 1 to 4,  we double the scaling factor in function $g(\cdot)$ from $0.75$ to $1.5$. Specifically,  let $g(w) = 1.5(-w_1 + 0.5w_2 - 0.25w_3 - 0.1w_4)$.  This increases the probability of the propensity score taking values close to $1$. We discard observations with estimated propensity scores outside the range $(0, 1 - t]$, where the trimming threshold $t = 0.01$. When compared with Table \ref{tab:simu_n1K}, the performance of all methods deteriorates in Table \ref{tab:simu_overlap_t001_n1K}, especially under DGP~4, but the relative performance among Bayesian methods, frequentist parametric methods and DML methods exhibit a similar pattern as observed in Table \ref{tab:simu_n1K}. Take DGP~4 as example, all methods except OR Bayes significantly undercover. Among them, DR Bayes$^{\text{RF}}$ and its full sample version still exhibit substantially better coverage performance relative to the frequentist parametric methods and DML$^{\text{RF-NN}}$.  At the same time,  DR Bayes$^{\text{RF}}$ also produces shorter confidence intervals than DML$^{\text{RF-RF}}$ and DML$^{\text{RF-GP}}$.

\begin{table}[H]
\centering
\caption{Simulation results under DGPs~1 to~4 with weaker overlapping. Sample size $n=1000$. Trimming threshold for propensity score estimates  $t=0.01$.}
\vskip.15cm
{\small 
\label{tab:simu_overlap_t001_n1K}
\setlength{\tabcolsep}{3pt}
\renewcommand{\arraystretch}{0.75}
\begin{tabular}{l cccc| cccc}
\toprule
Method
  & \multicolumn{4}{c}{DGP 1}
  & \multicolumn{4}{c}{DGP 2} \\
\cmidrule(lr){2-5}\cmidrule(lr){6-9}
  & Bias & RMSE & CP & CIL
  & Bias & RMSE & CP & CIL \\
\midrule
OR Bayes
  & $-0.000$ & $0.127$ & $0.953$ & $0.505$
  & $0.000$ & $0.125$ & $0.952$ & $0.509$\\
DR Bayes$^{\text{Logit}}$
  & $-0.001$ & $0.171$ & $0.961$ & $0.738$
  & $-0.005$ & $0.156$ & $0.960$ & $0.669$\\
DR Bayes$^{\text{RF}}$
  & $-0.006$ & $0.227$ & $0.974$ & $0.927$
  & $-0.005$ & $0.241$ & $0.964$ & $0.954$\\
DR Bayes$^{\text{Logit}}$(FS)
  & $-0.002$ & $0.161$ & $0.941$ & $0.631$
  & $-0.003$ & $0.145$ & $0.939$ & $0.558$\\
DR Bayes$^{\text{RF}}$(FS)
  & $-0.009$ & $0.215$ & $0.969$ & $0.910$
  & $-0.026$ & $0.679$ & $0.968$ & $1.457$\\
\cline{1-9}
DR
  & $-0.002$ & $0.156$ & $0.910$ & $0.543$
  & $-0.003$ & $0.152$ & $0.923$ & $0.536$\\
OR
  & $-0.005$ & $0.119$ & $0.943$ & $0.465$
  & $-0.003$ & $0.116$ & $0.953$ & $0.460$\\
IPW
  & $-0.090$ & $2.720$ & $0.916$ & $9.137$
  & $-1.321$ & $2.373$ & $0.848$ & $6.965$\\
DML$^{\text{RF-RF}}$
  & $-0.587$ & $1.398$ & $0.958$ & $5.929$
  & $1.358$ & $1.976$ & $0.878$ & $6.738$\\
DML$^{\text{RF-NN}}$
  & $-0.022$ & $0.233$ & $0.941$ & $0.882$
  & $-0.038$ & $0.302$ & $0.944$ & $1.059$\\
DML$^{\text{RF-GP}}$
  & $-1.574$ & $2.240$ & $0.789$ & $7.017$
  & $-0.171$ & $1.964$ & $0.956$ & $9.753$\\
\midrule
Method
  & \multicolumn{4}{c}{DGP 3}
  & \multicolumn{4}{c}{DGP 4} \\
\cmidrule(lr){2-5}\cmidrule(lr){6-9}
  & Bias & RMSE & CP & CIL
  & Bias & RMSE & CP & CIL \\
\midrule
OR Bayes
  & $-0.005$ & $0.612$ & $0.992$ & $3.944$
  & $-1.515$ & $1.788$ & $0.974$ & $6.933$\\
DR Bayes$^{\text{Logit}}$
  & $-0.080$ & $1.550$ & $0.947$ & $6.002$
  & $-2.521$ & $3.025$ & $0.622$ & $6.046$\\
DR Bayes$^{\text{RF}}$
  & $-0.207$ & $1.733$ & $0.929$ & $5.922$
  & $-1.608$ & $2.673$ & $0.773$ & $6.488$\\
DR Bayes$^{\text{Logit}}$(FS)
  & $-0.020$ & $0.602$ & $0.968$ & $2.631$
  & $-1.506$ & $1.799$ & $0.537$ & $3.168$\\
DR Bayes$^{\text{RF}}$(FS)
  & $-0.020$ & $0.657$ & $0.983$ & $3.386$
  & $-1.393$ & $1.689$ & $0.730$ & $4.016$\\
\cline{1-9}
DR
  & $-0.200$ & $1.207$ & $0.920$ & $4.326$
  & $-3.983$ & $4.150$ & $0.047$ & $4.077$\\
OR
  & $-3.689$ & $4.005$ & $0.363$ & $6.178$
  & $-10.865$ & $10.971$ & $0.000$ & $5.968$\\
IPW
  & $-0.089$ & $2.748$ & $0.928$ & $9.515$
  & $-7.844$ & $8.197$ & $0.111$ & $8.641$\\
DML$^{\text{RF-RF}}$
  & $1.117$ & $1.655$ & $0.896$ & $5.712$
  & $-2.434$ & $3.055$ & $0.718$ & $7.854$\\
DML$^{\text{RF-NN}}$
  & $-0.007$ & $1.413$ & $0.961$ & $5.903$
  & $-2.931$ & $3.598$ & $0.548$ & $7.937$\\
DML$^{\text{RF-GP}}$
  & $-0.206$ & $1.639$ & $0.960$ & $7.740$
  & $-3.230$ & $4.271$ & $0.700$ & $12.239$\\
\bottomrule
\end{tabular}}
\end{table}

%


\end{document}